\newcommand{\ket}[1]{{\left\vert{#1}\right\rangle}}
\newcommand{\qw}[1][-1]{\ar @{-} [0,#1]}
\newcommand{\qwx}[1][-1]{\ar @{-} [#1,0]}
\newcommand{\gate}[1]{*+<.6em>{#1} \POS ="i","i"+UR;"i"+UL **\dir{-};"i"+DL **\dir{-};"i"+DR **\dir{-};"i"+UR **\dir{-},"i" \qw}
\newcommand{\control}{*!<0em,.025em>-=-<.2em>{\bullet}}
\newcommand{\ctrl}[1]{\control \qwx[#1] \qw}
\newcommand{\multigate}[2]{*+<1em,.9em>{\hphantom{#2}} \POS [0,0]="i",[0,0].[#1,0]="e",!C *{#2},"e"+UR;"e"+UL **\dir{-};"e"+DL **\dir{-};"e"+DR **\dir{-};"e"+UR **\dir{-},"i" \qw}
\newcommand{\ghost}[1]{*+<1em,.9em>{\hphantom{#1}} \qw}
\newcommand{\push}[1]{*{#1}}
\newcommand{\gategroup}[6]{\POS"#1,#2"."#3,#2"."#1,#4"."#3,#4"!C*+<#5>\frm{#6}}
\newcommand{\Qcircuit}{\xymatrix @*=<0em>}
\newcommand{\eprint}[1]{\href{http://arxiv.org/abs/#1}{#1}}
\renewcommand{\backref}[1]{}
\renewcommand{\backrefalt}[4]{% 
\ifcase #1 % 
\or 
[p.\ #2]% 
\else 
[pp.\ #2]% 
\fi}
\newtheorem{theorem}{Theorem}[section]
\newtheorem{lemma}[theorem]{Lemma}%[section]
\newtheorem{corollary}[theorem]{Corollary}%[section]
\theoremstyle{definition}
\newtheorem{definition}{Definition}
\newcommand{\eq}[1]{\hyperref[eq:#1]{(\ref*{eq:#1})}}
\renewcommand{\sec}[1]{\texorpdfstring{\hyperref[sec:#1]{Section~\ref*{sec:#1}}}{Section~\ref*{sec:#1}}}
\newcommand{\thm}[1]{\texorpdfstring{\hyperref[thm:#1]{Theorem~\ref*{thm:#1}}}{Theorem~\ref*{thm:#1}}}
\newcommand{\lem}[1]{\hyperref[lem:#1]{Lemma~\ref*{lem:#1}}}
\newcommand{\prop}[1]{\hyperref[prop:#1]{Proposition~\ref*{prop:#1}}}
\newcommand{\cor}[1]{\hyperref[cor:#1]{Corollary~\ref*{cor:#1}}}
\newcommand{\fig}[1]{\hyperref[fig:#1]{Figure~\ref*{fig:#1}}}
\newcommand{\tab}[1]{\hyperref[tab:#1]{Table~\ref*{tab:#1}}}
\newcommand{\app}[1]{\hyperref[app:#1]{Appendix~\ref*{app:#1}}}
\newcommand{\x}{b}
\newcommand{\h}{\bar h}
\newcommand{\Had}{\mathrm{Had}}
\newcommand{\fudge}{\Upsilon}
\newcommand{\comment}[1]{} %use for multi-line comments
\newcommand{\R}{{\mathbb{R}}}
\newcommand{\Z}{{\mathbb{Z}}}
\newcommand{\poly}{\mathop{\mathrm{poly}}}
\newcommand{\sgn}{\mathop{\mathrm{sgn}}}
\newcommand{\ceil}[1]{\lceil{#1}\rceil}
\renewcommand{\(}{\left(}
\renewcommand{\)}{\right)}
\newcommand{\defeq}{\colonequals}
\newcommand{\norm}[1]{\|{#1}\|}
\renewcommand{\>}{\rangle}
\newcommand{\<}{\langle}
\newcommand{\id}{\mathbbm{1}}
\newcommand{\be}{\begin{equation}}
\newcommand{\ee}{\end{equation}}
\newcommand{\mj}{\mu}
\DeclareMathOperator{\spn}{span}
\DeclareMathOperator{\wt}{wt}
\newcommand{\parity}{\textsc{parity}}
\newcommand{\approxU}{\tilde U}
\begin{document}

%%%%%%%%%%%%%%%%%%%%%%%%%%%%%%%%%%%%%%%%%%%%%%%%%%%%%%%%%%%%%%%%%%%%%%%%%%%%%%

\title{Exponential improvement in precision \\ for simulating sparse Hamiltonians}

\author{
\normalsize Dominic W.\ Berry\thanks{Department of Physics and Astronomy, Macquarie University}
\enskip
\normalsize Andrew M.\ Childs\thanks{Department of Combinatorics \& Optimization and Institute for Quantum Computing, University of Waterloo}\hspace*{1.1mm}$^{,}$\thanks{Canadian Institute for Advanced Research} 
\enskip
\normalsize Richard Cleve\thanks{Cheriton School of Computer Science and Institute for Quantum Computing, University of Waterloo}\hspace*{1.3mm}$^{,\ddag}$
\enskip
\normalsize Robin Kothari$^{\S}$
\enskip
\normalsize Rolando D.\ Somma\thanks{Theory Division, Los Alamos National Laboratory}
}

\date{}

\maketitle

%%%%%%%%%%%%%%%%%%%%%%%%%%%%%%%%%%%%%%%%%%%%%%%%%%%%%%%%%%%%%%%%%%%%%%%%%%%%%%
\begin{abstract}
We provide a quantum algorithm for simulating the dynamics of sparse Hamiltonians with complexity sublogarithmic in the inverse error, an exponential improvement over previous methods.  Specifically, we show that a $d$-sparse Hamiltonian $H$ acting on $n$ qubits can be simulated for time $t$ with precision $\epsilon$ using $O\big(\tau \frac{\log(\tau/\epsilon)}{\log\log(\tau/\epsilon)}\big)$ queries and $O\big(\tau \frac{\log^2(\tau/\epsilon)}{\log\log(\tau/\epsilon)}n\big)$ additional 2-qubit gates, where $\tau = d^2 \|{H}\|_{\max} t$. Unlike previous approaches based on product formulas, the query complexity is independent of the number of qubits acted on, and for time-varying Hamiltonians, the gate complexity is logarithmic in the norm of the derivative of the Hamiltonian. Our algorithm is based on a significantly improved simulation of the continuous- and fractional-query models using discrete quantum queries, showing that the former models are not much more powerful than the discrete model even for very small error.  We also simplify the analysis of this conversion, avoiding the need for a complex fault correction procedure.  Our simplification relies on a new form of ``oblivious amplitude amplification'' that can be applied even though the reflection about the input state is unavailable. Finally, we prove new lower bounds showing that our algorithms are optimal as a function of the error.
\end{abstract}

%%%%%%%%%%%%%%%%%%%%%%%%%%%%%%%%%%%%%%%%%%%%%%%%%%%%%%%%%%%%%%%%%%%%%%%%%%%%%%
\section{Introduction}
\label{sec:intro}

Simulation of quantum mechanical systems is a major potential application of quantum computers. Indeed, the problem of simulating Hamiltonian dynamics was the original motivation for the idea of quantum computation \cite{Fey82}. Lloyd provided an explicit algorithm for simulating many realistic quantum systems, namely those whose Hamiltonian is a sum of interactions acting nontrivially on a small number of subsystems of limited dimension \cite{Llo96}. If the interactions act on at most $k$ subsystems, such a Hamiltonian is called \emph{$k$-local}. Here we consider the more general problem of simulating sparse Hamiltonians, a natural class of systems for which quantum simulation has been widely studied. Note that $k$-local Hamiltonians are sparse, so algorithms for simulating sparse Hamiltonians can be used to simulate many physical systems.  Sparse Hamiltonian simulation is also useful in quantum algorithms~\cite{AT03,CCJY09,HHL09,CJS13}.

A Hamiltonian is said to be \emph{$d$-sparse} if it has at most $d$ nonzero entries in any row or column. In the sparse Hamiltonian simulation problem,  we are given access to a $d$-sparse Hamiltonian $H$ acting on $n$ qubits via a black box that accepts a row index $i$ and a number $j$ between $1$ and $d$, and returns the position and value of the $j$th nonzero entry of $H$ in row $i$. Given such a black box for $H$, a time $t>0$ (without loss of generality), and an error parameter $\epsilon>0$, our task is to construct a circuit that performs the unitary operation $e^{-iHt}$ with error at most $\epsilon$ using as few queries to $H$ as possible. To develop practical algorithms, we would also like to upper bound the number of additional 2-qubit gates. The \textit{time complexity} of a simulation is the sum of the number of queries and additional 2-qubit gates.

The first efficient algorithm for sparse Hamiltonian simulation was due to Aharonov and Ta-Shma \cite{AT03}.  The key idea (also applied in \cite{CCDFGS03}) is to use edge coloring to decompose the Hamiltonian $H$ into a sum of Hamiltonians $\sum_{j=1}^\eta H_j$, where each $H_j$ is easy to simulate. These terms are then recombined using the Lie product formula, which states that $e^{-iHt} \approx (e^{-iH_1t/r} e^{-iH_2t/r} \cdots e^{-iH_\eta t/r})^r$ for large $r$.  This method gives query complexity $O(\poly(n,d) (\norm{H}t)^2/\epsilon)$, where $\norm{\cdot}$ denotes the spectral norm.  This was later improved using high-order product formulas and more efficient decompositions of the Hamiltonian \cite{Suz91,Chi04,BAC+07,CK11,ChildsWiebe}. The best algorithms of this type \cite{CK11,ChildsWiebe} have query complexity 
\be
d^2(d+\log^*n)\norm{H}t\,\exp\Bigl(O\bigl(\sqrt{\log(d\norm{H}t/\epsilon)}\bigr)\Bigr).
\ee
This complexity is only slightly superlinear in $\norm{H}t$ in that 
$\exp(O(\sqrt{\log(d\norm{H}t/\epsilon)}))$ is asymptotically smaller than 
$(d\norm{H}t/\epsilon)^{\delta}$ for any constant $\delta > 0$; 
however, $\exp(O(\sqrt{\log(d\norm{H}t/\epsilon)}))$ is not polylogarithmic in 
$d\norm{H}t/\epsilon$.

We show the following (where $\norm{H}_{\max}$ denotes the largest entry of $H$ in absolute value).

\begin{restatable}[Sparse Hamiltonian simulation]{theorem}{SPARSE}\label{thm:sparse}
A $d$-sparse Hamiltonian $H$ acting on $n$ qubits can be simulated for time $t$ within error $\epsilon$ with $O\big(\tau \frac{\log(\tau/\epsilon)}{\log\log(\tau/\epsilon)}\big)$ queries and $O\big(\tau \frac{\log^2(\tau/\epsilon)}{\log\log(\tau/\epsilon)}n\big)$ additional 2-qubit gates, where $\tau \defeq d^2 \norm{H}_{\max} t \ge 1$.
\end{restatable}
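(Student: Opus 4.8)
The plan is to assemble the theorem from three ingredients: (1) a reduction from sparse Hamiltonian simulation to the fractional-query model, (2) the improved simulation of the fractional-query model by discrete queries that is the technical heart of the paper, and (3) a careful accounting of the resulting parameters. First I would use the standard edge-coloring decomposition (as in \cite{AT03,CCDFGS03}) to write $H$ as a sum of a modest number of Hamiltonians, each of which is a direct sum of $1$- and $2$-dimensional blocks and can therefore be simulated with $O(1)$ queries and $O(n)$ gates; combining these via a single product-formula step over a short time slice lets one implement $e^{-iH\,\delta t}$ for $\delta t$ of order $1/\tau$ with a small, controllable error. Equivalently, one casts the evolution as a continuous-query problem with a Hamiltonian of norm $O(\tau/t)$ acting for time $t$, so that the relevant ``query weight'' is $O(\tau)$.

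Next I would invoke the paper's improved continuous-to-fractional and fractional-to-discrete query simulation. The key point, promised in the abstract, is that the continuous- and fractional-query models can be simulated discretely with only $O\big(m \frac{\log(m/\epsilon)}{\log\log(m/\epsilon)}\big)$ discrete queries, where $m$ here plays the role of $\tau$: one divides the evolution into $O(\tau)$ segments each carrying $O(1)$ units of query weight, expands each segment's fractional-query evolution as a truncated series (a Dyson-type or Taylor-type series in the number of queries), and truncates at order $K = O\big(\frac{\log(\tau/\epsilon)}{\log\log(\tau/\epsilon)}\big)$ so that the per-segment error is $\epsilon/\tau$ and the total error is $\epsilon$. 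Each segment is then implemented by a linear-combination-of-unitaries construction whose success probability is amplified using the new ``oblivious amplitude amplification'' described in the abstract—crucially available even though the reflection about the unknown input state is not—so that the segment succeeds deterministically (or with error absorbed into $\epsilon$) at the cost of a constant factor. Summing the per-segment costs gives $O(\tau)$ segments times $O(K)$ queries each, i.e.\ $O\big(\tau \frac{\log(\tau/\epsilon)}{\log\log(\tau/\epsilon)}\big)$ queries; each query is compiled into $O(n)$ $2$-qubit gates, and the amplitude-amplification and LCU overhead contributes an extra $O(\log(\tau/\epsilon))$ factor on the ancilla control registers, yielding the stated $O\big(\tau \frac{\log^2(\tau/\epsilon)}{\log\log(\tau/\epsilon)}n\big)$ gate count.

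Finally I would reconcile the norm appearing in the decomposition step with the $\norm{H}_{\max}$ in the statement. The edge-coloring decomposition produces $O(d^2)$ color classes (or $O(d^2)$ after a $1$-sparse reduction), each a direct sum of blocks with entries bounded by $\norm{H}_{\max}$, so the effective norm controlling the number of segments is $O(d^2 \norm{H}_{\max})$ rather than $\norm{H}$; multiplying by $t$ gives exactly $\tau = d^2 \norm{H}_{\max} t$. The hypothesis $\tau \ge 1$ ensures the logarithms are well behaved and that the segment count is at least one. One then checks that the accumulated errors—from the product-formula slicing, the series truncation, and the amplitude amplification—each contribute at most a constant fraction of $\epsilon$ after rescaling, so the overall simulation error is $\epsilon$ as claimed.

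The main obstacle I anticipate is the amplitude-amplification step: in the LCU approach the state produced after a successful projection onto the ancilla ``all-zeros'' subspace is the desired evolved state, but standard amplitude amplification requires reflecting about the \emph{input} state, which is unknown and arbitrary. Making this work—showing that the relevant ``oblivious'' amplitude amplification correctly boosts the success amplitude for an unknown input while only introducing an error that can be charged to $\epsilon$, and that it composes cleanly across the $O(\tau)$ segments without the errors compounding super-linearly—is the delicate part of the argument and is precisely where the paper's claimed simplification (avoiding the earlier fault-correction procedure) must do its work.
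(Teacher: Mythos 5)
Your overall architecture matches the paper's: decompose $H$ into 1-sparse pieces by edge coloring, reinterpret the product-formula evolution as a fractional-query algorithm, invoke the improved fractional-to-discrete simulation (with oblivious amplitude amplification handling the unknown-input issue exactly as you anticipate), and account for parameters to get $\tau = d^2\norm{H}_{\max}t$. However, there is a genuine gap at the decomposition step. You assert that the ``standard'' edge-coloring decomposition of \cite{AT03,CCDFGS03} yields color classes each of which can be queried with $O(1)$ queries to the oracle for $H$. That is precisely what the standard decompositions do \emph{not} give: in \cite{BAC+07} and its predecessors, computing the color (and hence answering a query to a single 1-sparse piece) requires a deterministic-coin-tossing style construction costing $O(\log^* n)$ queries, which would reintroduce an $n$-dependence into the query count and break the stated bound $O\big(\tau\frac{\log(\tau/\epsilon)}{\log\log(\tau/\epsilon)}\big)$. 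The paper closes this gap with a new ingredient (\lem{decomposition}): first make the graph of $H$ bipartite by simulating $\sigma_x\otimes H$, then color the edge $(u,v)$ by the ordered pair of ranks $(r(u,v),r(v,u))$, which is a valid $d^2$-coloring queryable with $O(1)$ oracle calls. Without this (or some equivalent trick), your proof does not establish the theorem as stated.

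A secondary, smaller gap: saying each color class is ``a direct sum of 1- and 2-dimensional blocks'' is not yet enough to treat its exponential as a query oracle in the fractional-query framework, which requires oracles with eigenvalues $\pm1$ (equivalently Hamiltonian pieces with eigenvalues $0$ and $\pi$). The paper's \lem{1sparse} supplies this by rounding the entries of each 1-sparse piece to multiples of $2\gamma$ with $\gamma=\Theta(\epsilon/d^3t)$ and splitting it into $O(\norm{H}_{\max}/\gamma)$ unit-strength $\pm1$-eigenvalue terms; it is exactly this splitting, combined with the multiplexed select oracle $Q=\sum_j|j\>\<j|\otimes e^{-iH_j}$ of \thm{multiquery}, that makes the total query weight come out to $\eta\cdot\gamma t=O(d^2\norm{H}_{\max}t)=O(\tau)$. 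Your norm-accounting paragraph arrives at the right $\tau$, but the mechanism (rounding error budget $\sqrt{2}\gamma d^2$ per query entry, spectral-norm bound $\sqrt{2}\gamma d^3 t\le\epsilon$, and the reduction of many oracles to one) needs to be made explicit for the error and query bookkeeping to go through. The remaining steps you cite---segment truncation at $k=O\big(\frac{\log(\tau/\epsilon)}{\log\log(\tau/\epsilon)}\big)$, one-step exact oblivious amplitude amplification, and the gate count $O(k(g+\log m))$ per segment with $O(\log\eta+n)$ gates per use of $Q$---are indeed how the paper obtains the stated query and gate complexities.
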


\noindent
Our algorithm has no query dependence on $n$, improved dependence on $d$ and $t$, and exponentially improved dependence on $1/\epsilon$.
Our new approach to Hamiltonian simulation strictly improves all previous approaches based on product formulas (e.g., \cite{Llo96,AT03,Chi04,BAC+07,CK11}).  An alternative Hamiltonian simulation method based on a quantum walk \cite{Chi10,BC12} is incomparable.  That method has query complexity $O(d \norm{H}_{\max} t/\sqrt{\epsilon})$, so its performance is better in terms of $\norm{H}_{\max} t$ and $d$ but significantly worse in terms of $\epsilon$.  Thus, while suboptimal for (say) constant-precision simulation, the results of \thm{sparse} currently give the best known Hamiltonian simulations as a function of $\epsilon$.

Essentially the same approach used for \thm{sparse} can be applied even when the Hamiltonian is time dependent. The query complexity is unaffected by any such time dependence, except that we take the largest max-norm of the Hamiltonian over all times (i.e., $\tau$ is redefined as $\tau \defeq d^2 h t$ with $h \defeq \max_{s \in [0,t]} \norm{H(s)}_{\max}$).
The number of additional 2-qubit gates is $O\big(\tau \frac{\log(\tau/\epsilon)\log((\tau+\tau')/\epsilon)}{\log\log(\tau/\epsilon)}n\big)$, where $\tau' \defeq d^2 h' t$ with $h' \defeq \max_{s \in [0,t]} \norm{\frac{\mathrm{d}}{\mathrm{d}s}H(s)}$.  This dependence on $h'$ is a dramatic improvement over previous methods for simulating time-dependent Hamiltonians using high-order product formulas \cite{WBHS11}.  Another previous simulation method \cite{PQSV11} also improved the dependence on $h'$, but at the cost of substantially worse dependence on $t$ and $\epsilon$.

While our approach applies to sparse Hamiltonians in general, it can sometimes be improved using additional structure. In particular, consider the case of a $k$-local Hamiltonian acting on a system of qubits. (A $k$-local Hamiltonian acting on subsystems of limited dimension is equivalent to a $k$-local Hamiltonian acting on qubits with an increased value of $k$.) Since a term acting only on $k$ qubits is $2^k$-sparse, we can apply \thm{sparse} with $d=2^k M$, where $M$ is the total number of local terms.  However, by taking the structure of sparse Hamiltonians into account, we find an improved simulation with $\tau$ replaced by $\tilde\tau \defeq 2^k M \norm{H}_{\max} t$.

The performance of our algorithm is optimal or nearly optimal as a function of some of its parameters.  A lower bound of $\Omega(\norm{H}_{\max}t)$ follows from the no-fast-forwarding theorem of \cite{BAC+07}, showing that our algorithm's dependence on $\norm{H}_{\max}t$ is almost optimal.  However, prior to our work, there was no known $\epsilon$-dependent lower bound, not even one ruling out algorithms with no dependence on $\epsilon$.  We show that, surprisingly, our query dependence on $\epsilon$ in \thm{sparse} is optimal.

\begin{restatable}[$\epsilon$-dependent lower bound for Hamiltonian simulation]{theorem}{LOWERBOUND}\label{thm:hamsimlower}
For any $\epsilon>0$, there exists a 2-sparse Hamiltonian $H$ with $\norm{H}_{\max}<1$ such that simulating $H$ with precision $\epsilon$ for constant time requires $\Omega\big(\frac{\log(1/\epsilon)}{\log\log(1/\epsilon)}\big)$ queries.
\end{restatable}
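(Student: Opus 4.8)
The plan is to encode the PARITY function into a $2$-sparse Hamiltonian built from a weighted path, and then run the polynomial method, exploiting that the transition amplitude across the path is exponentially small but completely explicit. Fix a parameter $N$ (to be chosen in terms of $\epsilon$) and a small constant $c<1$, and consider the Hamiltonian $H$ on the $(N{+}1)$-dimensional space spanned by $\ket0,\dots,\ket N$ (embedded in $\lceil\log_2(N{+}1)\rceil$ qubits) whose only nonzero entries are $\bra\ell H\ket{\ell-1}=\bra{\ell-1}H\ket\ell=c\,(-1)^{x_\ell}$ for $\ell=1,\dots,N$, where $x\in\{0,1\}^N$ is the string determined by the black box. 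The sparsity pattern is the fixed path, so a single query to $H$ reveals exactly one value $(-1)^{x_\ell}$, possibly in superposition; $H$ is $2$-sparse with $\norm H_{\max}=c<1$; and we simulate for the constant time $t=1$.

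The key algebraic observation is that any walk from vertex $0$ to vertex $N$ on the path crosses each of the $N$ edges an odd number of times, so each term of $\bra N H^k\ket 0$ carries the same overall sign $(-1)^{x_1+\dots+x_N}$. Hence $\bra N H^k\ket 0=(-1)^{\parity(x)}W_k$ with $W_k\ge0$ independent of $x$, and therefore
\be
\bra N e^{-iHt}\ket 0=(-1)^{\parity(x)}\,g,\qquad g\defeq\bra N e^{-iH_0t}\ket 0,
\ee
where $H_0$ is the $x\equiv0$ instance, so $g$ does not depend on $x$. A direct estimate of this end-to-end amplitude of the weighted path -- its leading term is $(-ict)^N/N!$, and the length-$(N{+}2m)$ corrections are controlled by bounding the number of such walks by $\binom{N+2m}{m}$ -- gives $(ct)^N/(2\,N!)\le|g|\le e^{2ct}(2ct)^N/N!$ for all large $N$; moreover $g$ has phase $(-i)^N$, so $g$ is real when $N$ is even.

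Now apply the polynomial method. Any algorithm simulating $e^{-iHt}$ to precision $\epsilon$ using $q$ queries is, without loss of generality, a unitary $V$ consisting of $q$ oracle calls interleaved with $x$-independent gates on the system together with ancillas; replacing the value oracle by an equivalent bit oracle, every amplitude of $V\ket0_{\mathrm{sys}}\ket0_{\mathrm{anc}}$ is a polynomial of degree at most $q$ in the variables $y_\ell\defeq(-1)^{x_\ell}$. In particular $a(y)\defeq(\bra N\otimes\bra0_{\mathrm{anc}})\,V\ket0_{\mathrm{sys}}\ket0_{\mathrm{anc}}$ is a degree-$\le q$ polynomial with $|a(y)-g\,y_1\cdots y_N|\le\epsilon$ for every $y\in\{\pm1\}^N$. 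If $q<N$ then $a$ contains no $y_1\cdots y_N$ monomial, so that monomial is orthogonal to $a$ under the uniform measure on the cube and Parseval gives $\mathbb{E}_y|a(y)-g\,y_1\cdots y_N|^2=\mathbb{E}_y|a(y)|^2+|g|^2\ge|g|^2$, whence $\max_y|a(y)-g\,y_1\cdots y_N|\ge|g|>\epsilon$, a contradiction. Thus $q\ge N$ whenever $\epsilon<|g|$. (If the simulation is only assumed close to $e^{-iHt}$ as a channel, one instead runs it on the input $\tfrac1{\sqrt2}(\ket0+\ket N)$ and applies the same reasoning to the degree-$\le2q$ polynomial $\bra N\rho_y^{\mathrm{sys}}\ket N$, using that for $N$ even $g$ interferes observably with the $x$-independent amplitude $\bra N e^{-iHt}\ket N\approx1$; this yields $q\ge N/2$.)

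Finally, given small $\epsilon$, take $N$ to be the largest even integer with $|g|\ge2\epsilon$. The two-sided estimate on $|g|$ gives $\log(1/|g|)=N\log N\,(1+o(1))$, hence $N=\Theta(\log(1/\epsilon)/\log\log(1/\epsilon))$, and the argument above shows that this $H$ -- $2$-sparse with $\norm H_{\max}=c<1$ and simulated for constant time -- cannot be simulated to precision $\epsilon$ with fewer than $N=\Omega(\log(1/\epsilon)/\log\log(1/\epsilon))$ queries; for $\epsilon$ bounded away from $0$ the claim is $\Omega(1)$ and holds trivially. The step I expect to be the main obstacle is the two-sided control of the transition amplitude $|g|$ -- in particular confirming that the higher-order terms of the exponential series do not cancel the leading $(ct)^N/N!$ -- together with the (routine but necessary) verification that, in the sparse-Hamiltonian oracle model, a $q$-query algorithm still yields amplitudes of degree $O(q)$ in the hidden $\pm1$ variables.
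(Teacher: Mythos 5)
Your proposal is correct in substance, but it takes a genuinely different route from the paper. The paper recycles the no-fast-forwarding construction of \cite{BAC+07}: a weighted path (the perfect-state-transfer Hamiltonian of \cite{CDEL04}, entries $\sqrt{(N-i+1)i}/N$) doubled into two disjoint paths, with the parity of $x$ encoded in the \emph{connectivity} (which endpoint $|N,b\>$ shares a component with $|0,0\>$); the end-to-end amplitude is exactly $|\sin(t/N)|^N$, and the lower bound follows by observing that an $\epsilon$-accurate simulation yields an \emph{unbounded-error} algorithm for parity, whose query complexity is $\Omega(N)$ by \cite{FGGS98,BBC+01}. You instead encode the parity in the \emph{signs} of the entries of a uniformly weighted path, show by the edge-cut argument that $\<N|e^{-iHt}|0\>=(-1)^{\parity(x)}g$ with $g$ input-independent, estimate $|g|=\Theta((ct)^N/N!)$ by walk counting, and then prove the query bound directly via the polynomial method and Parseval---which is essentially an inlined proof of the unbounded-error parity bound applied to amplitudes rather than acceptance probabilities. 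Both routes give $N=\Theta\big(\frac{\log(1/\epsilon)}{\log\log(1/\epsilon)}\big)$. What your version buys: it is self-contained (no black-box appeal to the parity lower bound), it keeps the sparsity pattern input-independent so the oracle's degree-1 behavior is transparent, and it cleanly handles the unitary and channel error models. What the paper's version buys: the amplitude is exactly computable (no need to rule out cancellations among higher-order walk terms), the error analysis reduces to a one-line "overlap larger than $\epsilon$" statement, and the reduction-to-parity framing transfers directly to the continuous-query lower bound (\thm{fqsimlower}), where the restricted Hamiltonian form forces an unweighted path and a Bessel-function estimate. The two points you flag as potential obstacles are indeed the only places needing care, and both go through: the correction terms to $(ct)^N/N!$ are dominated by $\sum_{m\ge1}(ct)^{2m}N!/(m!(N+m)!)=o(1)$ times the leading term, so no cancellation occurs; and since the value returned by the sparse-Hamiltonian oracle on any query depends on at most one hidden bit $x_\ell$ (affinely, once bit-encoded), each oracle call raises the amplitude degree by at most one, so the degree-$\le q$ claim is the standard \cite{BBC+01} argument.
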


Our Hamiltonian simulation algorithm is based on a connection to the so-called fractional quantum query model.  A result of Cleve, Gottesman, Mosca, Somma, and Yonge-Mallo \cite{CGM+09} shows that this model can be simulated with only small overhead using standard, discrete quantum queries.  While this can be seen as a kind of Hamiltonian simulation, simulating the dynamics of a sparse Hamiltonian appears \emph{a priori} unrelated.  Here we relate these tasks, giving a simple reduction from Hamiltonian simulation to the problem of simulating (a slight generalization of) the fractional-query model, so that improved simulations of the fractional-query model directly yield improvements in Hamiltonian simulation.

To introduce the notion of fractional queries, recall that in the usual model of quantum query complexity, we wish to solve a problem whose input $x \in \{0,1\}^N$ is given by an oracle (or black box) that can be queried to learn the bits of $x$. The measure of complexity, called the query complexity, is the number of times we query the oracle.
More precisely, we are given access to a unitary gate $Q_x$ whose action on the basis states $|j\>|b\>$ for all $j \in [N] \defeq \{1,2,\ldots,N\}$ and $b \in \{0,1\}$ is $ Q_x |j\>|b\> = (-1)^{b x_j}|j\>|b\>$. A quantum query algorithm is a quantum circuit consisting of arbitrary $x$-independent unitaries and $Q_x$ gates. The query complexity of such an algorithm is the total number of $Q_x$ gates used in the circuit.

The query model is often used to study the complexity of evaluating a classical function of $x$.  However, it is also natural to consider more general tasks.  In order of increasing generality, such tasks include state generation \cite{AMRR11}, state conversion \cite{LMRSS11}, and implementing unitary operations \cite{BC12}.  Here we focus on the last of these tasks, where for each possible input $x$ we must perform some unitary operation $U_x$.  Considering this task leads to a strong notion of simulation: to simulate a given algorithm in the sense of unitary implementation, one must reproduce the entire correct output state for every possible input state, rather than simply (say) evaluating some predicate in one bit of the output with a fixed input state.

Since quantum mechanics is fundamentally described by the continuous dynamics of the Schr{\"o}\-ding\-er equation, it is natural to ask if the query model can be made less discrete. In particular, instead of using the gate $Q_x$ for unit cost, what if we can make half a query for half the cost? This perspective is motivated by the idea that if $Q_x$ is performed by a Hamiltonian running for unit time, we can stop the evolution after half the time to obtain half a query. In general we could run this Hamiltonian for time $\alpha \in (0,1]$ at cost $\alpha$. This \emph{fractional-query model} is at least as powerful as the standard (\emph{discrete-query}) model.
More formally, we define the model as follows.

\begin{definition}[Fractional-query model]
For an $n$-bit string $x$, let $Q^\alpha_x$ act as $Q^\alpha_x|j\>|b\> = e^{-i\pi\alpha b x_j}|j\>|b\>$ for all  $j \in [N]$  and  $b \in \{0,1\}$. An algorithm in the fractional-query model is a sequence of unitary gates $U_{m}Q_x^{\alpha_{m}}U_{m-1}\cdots U_{1}Q_x^{\alpha_{1}}U_{0}$, where $U_i$ are arbitrary unitaries and $\alpha_i \in (0,1]$ for all $i$. The fractional-query complexity of this algorithm is $\sum_{i=1}^{m} \alpha_i$ and the total number of fractional-query gates used is $m$. 
\end{definition}

This idea can be taken further by taking the limit as the sizes of the fractional queries approach zero to obtain a continuous variant of the model, called the \emph{continuous-query model} \cite{FG98}.
In this model, we have access to a query Hamiltonian $H_x$ acting as $H_x |j\>|b\> = \pi b x_j|j\>|b\>$. Unlike the fractional- and discrete-query models, this is not a circuit-based model of computation. In this model we are allowed to evolve for time $T$ according to the Hamiltonian given by $H_x + H_D(t)$ for an arbitrary time-dependent driving Hamiltonian $H_D(t)$, at cost $T$.  More precisely, the model is defined as follows.

\begin{definition}[Continuous-query model]
Let $H_x$ act as $H_x |j\>|b\> = \pi b x_j|j\>|b\>$ for all $j \in [N]$  and  $b \in \{0,1\}$. An algorithm in the continuous-query model is specified by an arbitrary $x$-independent driving Hamiltonian $H_D(t)$ for $t \in [0,T]$.  The algorithm implements the unitary operation $U(T)$ obtained by solving the Schr{\"o}dinger equation
\be
\label{eq:schrodinger}
  i \frac{\mathrm{d}}{\mathrm{d}t} U(t) = \big( H_x + H_D(t) \big) U(t)
\ee
with $U(0)=\id$. The continuous-query complexity of this algorithm is the total evolution time, $T$.
\end{definition} 

Because $e^{-i\alpha H_x} = Q^\alpha_x$, running the Hamiltonian $H_x$ with no driving Hamiltonian for time $T=\alpha$ is equivalent to an $\alpha$-fractional query.
In the remainder of this work we omit the subscript $x$ on $Q$ for brevity.

While initial work on the continuous-query model focused on finding analogues of known algorithms \cite{FG98,Moc07}, it has also been studied with the aim of proving lower bounds on the discrete-query model \cite{Moc07}. Furthermore, the model has led to the discovery of new quantum algorithms. In particular, Farhi, Goldstone, and Gutmann \cite{FGG08} discovered an algorithm with continuous-query complexity $O(\sqrt{n})$ for evaluating a balanced binary NAND tree with $n$ leaves, which is optimal. This result was later converted to the discrete-query model with the same query complexity \cite{CCJY09,ACR+10}.

A similar conversion can be performed for any algorithm with a sufficiently well-behaved driving Hamiltonian \cite{Chi10}. However, this leaves open the question of whether continuous-query algorithms can be generically converted to discrete-query algorithms with the same query complexity. This was almost resolved by \cite{CGM+09}, which gave an algorithm that approximates a $T$-query continuous-query algorithm to bounded error with $O\big(T \frac{\log T}{\log\log T}\big)$ discrete queries. This algorithm can be made time efficient \cite{BCG14} (informally, the number of additional 2-qubit gates is close to the query complexity).

However, to approximate a continuous-query algorithm to precision $\epsilon$, the algorithm of \cite{CGM+09} uses $O\big(\frac{1}{\epsilon} \frac{T\log T}{\log\log T}\big)$ queries.  Ideally we would like the dependence on $\epsilon$ to be polylogarithmic, instead of polynomial, in $1/\epsilon$. For example, such behavior would be  desirable when using a fractional-query algorithm as a subroutine. Here we present a significantly improved and simplified simulation of the continuous- and fractional-query models.  In particular, we show the following.

\begin{restatable}[Continuous-query simulation]{theorem}{CQUERYSIM}\label{thm:cquerysim}
An algorithm with continuous- or fractional-query complexity $T \ge 1$ can be simulated with error at most $\epsilon$ with $O\big(T\frac{\log(T/\epsilon)}{\log\log(T/\epsilon)}\big)$ queries.
For continuous-query simulation, if there is a circuit using at most $g$ gates that implements the time evolution due to $H_D(t)$ between any two times $t_1$ and $t_2$ with precision $\epsilon/T$, then the number of additional 2-qubit gates for the simulation is
$O\big(T\frac{\log(T/\epsilon)}{\log\log(T/\epsilon)}[g + \log(\h T/\epsilon)]\big)$, where $\h \defeq \frac{1}{T}\int_{0}^{T}\norm{H_D(t)} \mathrm{d}t$.
\end{restatable}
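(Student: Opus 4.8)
The plan is to reduce continuous- and fractional-query simulation to a single clean problem: simulating a segmented fractional-query algorithm in which every fractional query has the same small exponent, and then to implement that via a "compressed" discrete-query circuit analyzed with oblivious amplitude amplification. First I would handle the continuous case by discretizing the Schrödinger equation \eq{schrodinger}: chop $[0,T]$ into $m$ intervals of length $T/m$, and on each interval approximate $U(t_{k+1})U(t_k)^{-1}$ by $e^{-i(T/m)H_x}$ sandwiched between the driving-Hamiltonian propagators, i.e.\ by $V_k Q^{T/m}$ where $V_k$ is the (circuit approximation to the) time-ordered exponential of $H_D$ over the $k$th interval. A first-order Trotter/Duhamel bound shows the per-segment error is $O((T/m)^2 \|H_D\|_{\mathrm{loc}})$ plus the $\epsilon/T$ circuit error from the hypothesis, so choosing $m = \Theta(T \max(1,\h T)/\epsilon)$-ish segments makes the total discretization error $O(\epsilon)$; crucially, at this point the problem has become exactly: simulate $V_m Q^{\alpha} V_{m-1} \cdots V_1 Q^{\alpha} V_0$ with all $\alpha_i = \alpha \defeq T/m$ small. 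The fractional-query case is even more direct — subdivide each $Q^{\alpha_i}$ into $\lceil \alpha_i/\alpha\rceil$ copies of $Q^{\alpha}$ for a common small $\alpha$, which only increases the number of gates, not the query cost $\sum \alpha_i = T$.

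Next I would invoke the compression idea underlying \cite{CGM+09}: the ideal target $W = V_m Q^{\alpha}\cdots V_1 Q^{\alpha} V_0$ can be expanded (since $Q^{\alpha} = \cos(\pi\alpha/2)\,\id + (\text{something}) $ acting as a controlled operation on the query bit — more precisely $Q^{\alpha}$ on $|b\>$ is a small rotation) as a sum over subsets $S$ of segment-boundaries at which a "real" discrete query $Q = Q^1$ is inserted, weighted by $\cos^{m-|S|}\!\times(\text{small})^{|S|}$; the dominant contribution comes from small $|S|$, and truncating at $|S| \le K$ with $K = \Theta(\log(T/\epsilon)/\log\log(T/\epsilon))$ incurs error $O(\epsilon)$ because the tail of the binomial-type weights decays superpolynomially once $K$ exceeds the "expected number of queries" $\Theta(m\alpha) = \Theta(T)$ by the right logarithmic factor. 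This truncated sum is implemented as a single unitary $\approxU$ acting on the system plus an ancilla "selection register" of $O(\log m \cdot K) = O(\log(\h T/\epsilon)\cdot K)$ qubits that records which $\le K$ segments carry a query; the circuit interleaves the $V_k$'s (cost $g$ gates each, or $\log(\h T/\epsilon)$ gates for the clock/control plumbing) with $K$ genuine discrete queries, giving the stated $O(T \frac{\log(T/\epsilon)}{\log\log(T/\epsilon)}[g + \log(\h T/\epsilon)])$ gate count and $O(T\frac{\log(T/\epsilon)}{\log\log(T/\epsilon)})$ query count once one checks $mK\alpha$-type bookkeeping collapses to $\Theta(K \cdot (\text{something})) $... the query count is actually just the number of $Q$ gates, which is $K$ per "branch" — no, one must be careful: it is $O(m \alpha \cdot \frac{\log(T/\epsilon)}{\log\log(T/\epsilon)})$; I would track this carefully so the final bound reads as claimed.

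The heart of the argument — and the step I expect to be the main obstacle — is that $\approxU$ does not equal $W$ but rather a subnormalized version: on the correct ancilla state $|0\>$ it produces $c\, W|\psi\> + |\bot\>$ for some amplitude $c$ bounded below by a constant and a garbage term $|\bot\>$ orthogonal in the ancilla register. The naive fix is amplitude amplification, but as the abstract emphasizes, here the reflection about the \emph{input} state is unavailable (the input $|\psi\>$ is arbitrary and unknown, being an arbitrary quantum state fed to the simulated algorithm). So I would develop the promised "oblivious amplitude amplification": show that if $\approxU$ maps $|0\>_{\mathrm{anc}}|\psi\> \mapsto \sin\theta\,|0\>_{\mathrm{anc}}W|\psi\> + \cos\theta\,|\Phi^\perp\>$ with $\theta$ \emph{independent of $\psi$} and $|\Phi^\perp\>$ in the orthogonal-ancilla subspace, then the operator $-\approxU\,R\,\approxU^{\dagger}\,R\,\approxU$, where $R = (2|0\>\<0|_{\mathrm{anc}} - \id)\otimes\id$, acts on the two-dimensional invariant subspace $\spn\{|0\>W|\psi\>, |\Phi^\perp\>\}$ exactly as the Grover iterate does — and this holds \emph{simultaneously for all $\psi$} precisely because $W$ is unitary and $\theta$ is $\psi$-independent, so the "unknown" $|\psi\>$ direction never needs to be reflected. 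Tuning the construction so that $\theta \approx \pi/6$ (achievable by padding with extra always-query or never-query branches to set the weight ratio) lets a single round of this oblivious amplification boost the success amplitude to $1 - O(\epsilon)$ without any fault-correction gadget, which is the simplification over \cite{CGM+09,BCG14}. Collecting the discretization error, the truncation error, the circuit-approximation error $\epsilon/T$ per segment summed over segments, and the residual post-amplification error, each is $O(\epsilon)$, proving the theorem. I would isolate the oblivious amplitude amplification statement as its own lemma (it is reused for \thm{sparse}), and the binomial-tail truncation bound as another, so the proof of \thm{cquerysim} itself becomes a short assembly.
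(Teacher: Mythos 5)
Your overall architecture (discretize, compress the distribution over inserted discrete queries, restore unitarity with an amplitude-amplification step whose angle is independent of $|\psi\>$) matches the paper's, and the amplification statement you sketch is essentially \lem{oaa}. However, there is a genuine gap: you never actually cut the algorithm into pieces of \emph{constant} fractional-query complexity, and both of your key quantitative claims fail without that step. First, your truncation bound is self-contradictory: you truncate the number of real queries at $K=\Theta\big(\frac{\log(T/\epsilon)}{\log\log(T/\epsilon)}\big)$ while justifying this by saying $K$ exceeds the expected number of queries $\Theta(m\alpha)=\Theta(T)$; when $T$ is large compared to $\log(1/\epsilon)$ it does not, and truncating the binomial-type weights below their mean discards almost all of the amplitude, giving error $\Theta(1)$ rather than $O(\epsilon)$. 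Second, if the whole evolution is treated as one block, the amplitude with which the compressed unitary implements the target is $\prod_i q_{\alpha_i}^{1/2}=e^{-\Theta(T)}$, not a constant; padding with extra branches can only \emph{decrease} this amplitude, so it cannot be ``tuned to $\theta\approx\pi/6$,'' and a single round of oblivious amplitude amplification, which merely triples the angle, cannot recover it. Your own bookkeeping hesitation (``it is $O(m\alpha\cdot\frac{\log(T/\epsilon)}{\log\log(T/\epsilon)})$'') is a symptom of this unresolved structure.

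The missing ingredient is exactly the paper's segmentation: split the fractional-query algorithm into $\Theta(T)$ segments, each of fractional-query complexity at most $1/5$ (\lem{main}, \lem{cquerysim}), and allocate error $O(\epsilon/T)$ per segment. Within one segment the expected Hamming weight of the control register is at most $\pi/10$, a constant, so the Chernoff tail already gives truncation error $\epsilon/T$ at $k=O\big(\frac{\log(T/\epsilon)}{\log\log(T/\epsilon)}\big)$ (\lem{approxsegment}); moreover the success amplitude satisfies $\sqrt{p}>1/2$ and is padded down to \emph{exactly} $1/2$ using the ancilla unitary $\fudge$ (\lem{segment}), so one round of \lem{oaa} (three uses of the approximate segment unitary) implements the segment exactly up to the $O(\epsilon/T)$ approximation error. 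Note also that the angle must be exact (or within the per-segment error budget), since an angle only ``approximately $\pi/6$'' leaves a constant residual after a single amplification round. Summing over $\Theta(T)$ segments then yields the claimed $O\big(T\frac{\log(T/\epsilon)}{\log\log(T/\epsilon)}\big)$ queries, and the gate count follows as in the paper from the compressed encoding of the truncated control state (preparation cost $O(k\log m)$ with $m=\poly(T,\h,1/\epsilon)$), which your selection-register sketch approximates but does not make precise.
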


Since the continuous-query model is at least as powerful as the discrete-query model, a discrete simulation must use $\Omega(T)$ queries, showing our dependence on $T$ is close to optimal. However, as for the problem of Hamiltonian simulation, there was previously no $\epsilon$-dependent lower bound.  Along the lines of \thm{hamsimlower}, we show a lower bound of $\Omega\big(\frac{\log(1/\epsilon)}{\log\log(1/\epsilon)}\big)$ queries for a continuous-query algorithm with $T = O(1)$ (\thm{fqsimlower}), so the dependence of our simulation on $\epsilon$ is optimal.

For the problem of evaluating a classical function of a black-box input, an approach based on an invariant called the \textit{$\gamma_2$ norm} shows that the continuous-query complexity is at most a constant factor smaller than the discrete-query complexity for a bounded-error simulation \cite{LMRSS11}. However, it remains unclear whether the algorithm can be made time efficient and whether the unitary dynamics of a continuous-query algorithm can be simulated (even with bounded error) using $O(T)$ queries. Such a result does hold for state conversion, but its dependence on error is quadratic \cite{LMRSS11}. 
More generally, the optimal tradeoff between $T$ and $\epsilon$ for simulation of continuous-query algorithms using discrete queries---and for simulation of Hamiltonian dynamics---remains open (with or without conditions on the time complexity).

The remainder of this article is organized as follows.  In \sec{overview} we give a high-level overview of the techniques used in our results.  In \sec{main} we describe our simulation of the continuous- and fractional-query models using discrete queries.  In \sec{Ham} we apply these results to Hamiltonian simulation.  In \sec{time} we analyze the time complexity of our algorithms, and in \sec{lb} we prove $\epsilon$-dependent lower bounds showing optimality of their error dependence.  We conclude in \sec{conclusion} with a brief discussion of some open questions.  In \app{proofs}, we provide some proofs of known results for the sake of completeness.

%%%%%%%%%%%%%%%%%%%%%%%%%%%%%%%%%%%%%%%%%%%%%%%%%%%%%%%%%%%%%%%%%%%%%%%%%%%%%%
\section{High-level overview of techniques}
\label{sec:overview}

We begin by proving \thm{cquerysim}, our improved simulation of continuous- and fractional-query algorithms. Then we prove \thm{sparse} by reducing an instance of a sparse Hamiltonian simulation problem to an instance of a fractional-query algorithm, which can then be simulated via \thm{cquerysim}. We prove \thm{hamsimlower} using ideas from the no-fast-forwarding theorem from~\cite{BAC+07} and properties of the unbounded-error quantum query complexity of the parity function.

We now sketch the approach for each of the main theorems, highlighting the novel ideas.

\subsection{Continuous-query simulation (\thm{cquerysim})}
\label{sec:thm3}

First consider the simulation of fractional queries using discrete queries.  We show that an algorithm with constant fractional-query complexity can be simulated in the discrete-query model using $O\big(\frac{\log(1/\epsilon)}{\log\log(1/\epsilon)}\big)$ queries (\lem{main}).  The claimed upper bound for simulating a fractional-query algorithm with query complexity $T$ follows
easily by breaking the algorithm into pieces with constant fractional-query complexity.
Since the continuous- and fractional-query models are equivalent (\thm{equiv}), the result for the continuous-query model (\thm{cquerysim}) follows.

We prove \lem{main} in two steps. Let the unitary performed by the constant-query fractional-query algorithm be $V$ and let the (unknown) state it acts on be $\ket{\psi}$. We would like to create the state $V\ket{\psi}$ up to error $\epsilon$. First we construct a circuit $\approxU$ that performs $V$ with amplitude $\sqrt{p}$ up to error $\epsilon$, in the sense that $\approxU$ is within error $\epsilon$ of a unitary $U$ that maps $|0^m\>\ket{\psi}$ to $\sqrt{p}|0^m\>V|\psi\> + \sqrt{1-p} |\Phi^\perp\>$ for some constant $p$ and some state $|\Phi^\perp\>$ with $(|0^m\>\<0^m|\otimes \id)|\Phi^\perp\>=0$.  The existence of such a $\approxU$ that makes $O\big(\frac{\log(1/\epsilon)}{\log\log(1/\epsilon)}\big)$ queries was shown by \cite{CGM+09}. Their strategy is to measure the first $m$ qubits and obtain $V|\psi\>$ with constant probability. If the measurement fails, they recover the original state $|\psi\>$ from $|\Phi^\perp\>$ using a fault-correction procedure, which is itself probabilistic and occasionally fails, requiring a recursive correction algorithm to remove all faults. 
The time-efficient implementation of this recursive fault-correction procedure \cite{BCG14} is cumbersome.

Our alternative approach uses $\approxU$ to deterministically create $V|\psi\>$ without measurements.  We show in general how to create $V|\psi\>$ with a constant number of applications of $\approxU$ when $p$ is a constant.  To do this, we introduce a notion of ``oblivious amplitude amplification'' that can have the same performance as standard amplitude amplification, but that can be applied even when the reflection about the input state is unavailable.  This idea, which is inspired by the in-place QMA amplification procedure of Marriott and Watrous \cite{MW05}, is a general result that can potentially be applied in other contexts.

Most of the algorithm is easily made time efficient, except the preparation of a certain quantum state. However, this state can be prepared efficiently \cite{BCG14} and the result follows.

\subsection{Hamiltonian simulation reduction (\thm{sparse})}
\label{sec:thm1}

Next we describe the main ideas of our Hamiltonian simulation algorithm.
We remove the dependence of the query cost on $n$ with a simple trick involving local edge coloring of bipartite graphs.  This strategy is quite general and can be used to remove $n$-dependence from several known Hamiltonian simulation algorithms.  The improved dependence on $\epsilon$ results from our algorithm for simulating the fractional-query model in the discrete-query model (\thm{cquerysim}).

As mentioned previously, we reduce Hamiltonian simulation to a generalization of the task of simulating the fractional-query model.  Examining the basic Lie product formula $e^{-iHt} \approx (e^{-iH_1t/r} e^{-iH_2t/r} \cdots e^{-iH_\eta t/r})^r$, we see that if $Q_j \defeq e^{-iH_j}$ were query oracles, this would be a fractional-query algorithm using multiple oracles $Q_j$ for time $t$ each. (Note that because the query complexity of the simulation depends only on the total time over which fractional queries are applied rather than the total number of fractional queries, there is no advantage to using higher-order product formulas.) We reduce a fractional-query algorithm that calls each of $\eta$ different query oracles for time $t$ to a fractional-query algorithm that uses query time $\eta t$ with a single query oracle that can perform any $Q_j$.  Thus it suffices to decompose the given Hamiltonian $H$ into a sum of Hamiltonians for which the matrices $Q_j$ can be viewed as query oracles in \thm{cquerysim}. We show such a decomposition (\lem{1sparse}) that yields that stated upper bound.  This algorithm can be made time efficient since it is essentially a reduction to continuous-query simulation.

\subsection{Lower bounds (\thm{hamsimlower} and \thm{fqsimlower})}
\label{sec:thm2}

Finally, we prove lower bounds showing optimality of our algorithms as a function of $\epsilon$ (\thm{hamsimlower} and \thm{fqsimlower}).  The main idea behind both lower bounds is to show a Hamiltonian whose exact simulation for any time $t>0$ allows us to compute the parity of a string with unbounded error, which is as hard as computing parity exactly, requiring $\Omega(n)$ queries \cite{BBC+01,FGGS98}.  Because one must apply the Hamiltonian $\Omega(n)$ times to have nonzero amplitude on a state that encodes the parity, the evolution for constant time only produces the answer at $n$th order in the Taylor series, so the parity is only successfully computed with probability $\Theta(1/n!)$.  To obtain an unbounded-error algorithm for parity, one must simulate this evolution accurately enough to resolve such a small success probability.  Thus we must have $\epsilon = O(1/n!)$, giving the lower bound of $\Omega\big(\frac{\log(1/\epsilon)}{\log\log(1/\epsilon)}\big)$.

%%%%%%%%%%%%%%%%%%%%%%%%%%%%%%%%%%%%%%%%%%%%%%%%%%%%%%%%%%%%%%%%%%%%%%%%%%%%%%
\section{From continuous to discrete queries}
\label{sec:main}

In this section we present our improved simulation of continuous or fractional queries in the conventional discrete query model. The main result of this section is \lem{cquerysim}, which establishes the query complexity claimed in \thm{cquerysim}. The time-complexity part of \thm{cquerysim} is established in \sec{time}.

For concreteness, we quantify the distance between unitaries $U$ and $V$ with the function $\norm{U-V}$ and the distance between states $|\psi\>$ and $|\phi\>$ with the function $\norm{|\psi\>-|\phi\>}$.  As the error ultimately appears inside a logarithm, the precise choice of distance measure is not significant.

We begin by recalling the equivalence of the continuous- and fractional-query models for any error $\epsilon>0$.  An explicit simulation of the continuous-query model by the fractional-query model was provided by \cite{CGM+09}; the proof is a straightforward application of a result of \cite{HR90}.  The other direction is apparently folklore (e.g., both directions are implicitly assumed in \cite{Moc07}); we provide a short proof in \app{equiv} for completeness.

\begin{restatable}[Equivalence of continuous- and fractional-query models]{theorem}{EQUIV}
\label{thm:equiv}
For any $\epsilon>0$, any algorithm with continuous-query complexity $T$ can be implemented with fractional-query complexity $T$ with error at most $\epsilon$ and $m = O(\h T^2/\epsilon)$ fractional-query gates, where $\h \defeq \frac{1}{T} \int_0^T \norm{H_D(t)} \, \mathrm{d}t$ is the average norm of the driving Hamiltonian. Conversely, any algorithm with fractional-query complexity $T$ can be implemented with continuous-query complexity $T$ with error at most $\epsilon$. 
\end{restatable}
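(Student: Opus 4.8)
The plan is to treat the two directions separately; the converse is essentially immediate, so the real content is a single Trotter‑type estimate for the forward direction.

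\emph{Continuous $\Rightarrow$ fractional.} I would write the implemented unitary as the time‑ordered exponential $U(T)=\mathcal{T}\exp\bigl(-i\int_0^T(H_x+H_D(t))\,\mathrm{d}t\bigr)$, partition $[0,T]$ into $m$ equal pieces of length $\delta\defeq T/m$, and on the $k$‑th piece replace the exact evolution by $W_k\,Q^\delta$, where $W_k$ is the exact (and, being $x$‑independent, cost‑free) evolution generated by $H_D$ alone over that piece and $Q^\delta=e^{-i\delta H_x}$ is a $\delta$‑fractional query. The resulting product $W_m Q^\delta W_{m-1} Q^\delta\cdots W_1 Q^\delta$ is a legitimate fractional‑query algorithm with $m$ gates, each of size $\alpha_i=\delta\le 1$ once $m\ge T$, and total fractional‑query complexity $m\delta=T$. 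To bound the error I would pass to the interaction picture with respect to $H_D$: on the $k$‑th piece the exact evolution equals $W_k$ times $\mathcal{T}\exp(-i\int \hat H_x(s)\,\mathrm{d}s)$, where $\hat H_x(s)$ is a unitary conjugate of $H_x$ with $\norm{\hat H_x(s)-H_x}\le 2\norm{H_x}\int\norm{H_D(r)}\,\mathrm{d}r$ (the conjugating unitary starts at the identity). Since $\norm{H_x}\le\pi$ is bounded, the per‑piece error is $O\bigl(\delta\int_{\text{piece}}\norm{H_D(t)}\,\mathrm{d}t\bigr)$, and summing over the $m$ pieces (using that a product of unitaries changes by at most the sum of the factorwise changes) gives total error $O\bigl(\delta\int_0^T\norm{H_D(t)}\,\mathrm{d}t\bigr)=O(\delta T\h)$. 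Setting this equal to $\epsilon$ gives $m=O(\h T^2/\epsilon)$, and if $\h T^2/\epsilon<T$ one simply takes $m=\ceil{T}$. This is precisely the first‑order time‑dependent product‑formula bound of \cite{HR90} invoked by \cite{CGM+09}; the point is that keeping the $H_D$‑evolution exact means no term involving $\tfrac{\mathrm{d}}{\mathrm{d}t}H_D$ appears.

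\emph{Fractional $\Rightarrow$ continuous.} Given $U_m Q^{\alpha_m}U_{m-1}\cdots Q^{\alpha_1}U_0$, I would group it as $(U_mQ^{\alpha_m})\cdots(U_2Q^{\alpha_2})(U_1Q^{\alpha_1}U_0)$ and realize the $i$‑th block over a time interval of length exactly $\alpha_i$. Since $e^{-i\alpha_i H_x}=Q^{\alpha_i}$ already, it suffices to choose a driving Hamiltonian making the net evolution over the block equal to $U_iQ^{\alpha_i}$ (resp.\ $U_1Q^{\alpha_1}U_0$ on the first block): prescribe the unitary path $U(t)=V(t)e^{-itH_x}$ (times $U_0$ on the first block) with $V$ a smooth interpolation from $\id$ to $U_i$, and read off $H_D(t)=i\dot U(t)U(t)^\dagger-H_x$, a bounded $x$‑independent driving Hamiltonian. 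This reproduces the algorithm exactly, with zero error and total evolution time $\sum_i\alpha_i=T$.

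The two points needing a little care are enforcing $\alpha_i\le 1$ in the forward direction (handled by $m=\max(\ceil{c\,\h T^2/\epsilon},\ceil{T})$ for a suitable constant $c$) and obtaining continuous‑query complexity \emph{exactly} $T$ in the converse, which the grouping above arranges by absorbing every $x$‑independent $U_i$ into the interval already charged for an adjacent query rather than into an extra interval of its own. The one substantive computation is the interaction‑picture per‑piece error bound, and it is routine once one uses that $\norm{H_x}\le\pi$.
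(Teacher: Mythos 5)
Your forward direction is correct and essentially the paper's own argument: partition $[0,T]$ into $m$ equal pieces, replace the exact evolution on each piece by the free $H_D$-evolution times $Q^{T/m}$, and bound the per-piece error by $O\bigl(\frac{T}{m}\int_{\mathrm{piece}}\norm{H_D(t)}\,\mathrm{d}t\bigr)$, summing to $O(\h T^2/m)$. Your interaction-picture derivation of the per-piece bound is a legitimate substitute for the commutator bound of \cite{HR90} that the paper invokes, and it yields the same $m=O(\h T^2/\epsilon)$.

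The converse direction, however, has a genuine gap. You set $H_D(t) \defeq i\dot U(t)U(t)^\dagger - H_x$ along the path $U(t)=V(t)e^{-itH_x}$ and assert this is an $x$-independent driving Hamiltonian, but computing it gives $H_D(t) = i\dot V(t)V(t)^\dagger + V(t)H_xV(t)^\dagger - H_x$, and the term $V(t)H_xV(t)^\dagger - H_x$ depends on $x$ whenever $V(t)$ fails to commute with $H_x$ --- the generic situation, and unavoidable at the endpoint whenever $U_i$ itself does not commute with $H_x$ (e.g.\ any $U_i$ mixing the index or $b$ registers). So the prescribed $H_D$ is not admissible in the continuous-query model, and the claimed exact, zero-error simulation does not follow; this obstruction is exactly why the theorem allows error $\epsilon$ in this direction. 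The paper's construction instead keeps $H_x$ on throughout (so the cost is exactly $T$) and implements each interstitial unitary $U_j = e^{-iH_D^{(j)}}$ by a strong pulse $H_D^{(j)}/\epsilon_1$ applied over a window of length $\epsilon_1$; since $H_x$ is also on during the pulse, each factor is reproduced only up to error $O(\epsilon_1)$ (via $\norm{e^{iA}-e^{iB}}\le\norm{A-B}$), and choosing $\epsilon_1=\Theta(\epsilon/m)$ gives total error at most $\epsilon$. Your grouping idea (absorbing each $U_i$ into the adjacent query interval) does not remove this issue; to repair the argument, replace the exact-path construction by such an impulsive-pulse approximation and accept error $\epsilon$ rather than zero.
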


Since the two models are equivalent, it suffices to convert a fractional-query algorithm to a discrete-query algorithm. We start with a fractional-query algorithm that makes at most 1 query.  The result for multiple queries (\lem{cquerysim}) follows straightforwardly.

\begin{restatable}{lemma}{MAIN}
\label{lem:main}
Any algorithm in the fractional-query model with query complexity at most 1 can be implemented with $O\big(\frac{\log(1/\epsilon)}{\log\log(1/\epsilon)}\big)$ queries in the discrete-query model with error at most $\epsilon$.
\end{restatable}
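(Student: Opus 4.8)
The plan is to derive \lem{main} by combining two ingredients: the approximate block encoding of $V$ from \cite{CGM+09}, and a new deterministic amplification that replaces their measure-and-fault-correct step. I first recall the \cite{CGM+09} construction. Writing each fractional query as a linear combination of two unitaries, $Q^\alpha = e^{-i\pi\alpha/2}\bigl(\cos(\pi\alpha/2)\,\id + i\sin(\pi\alpha/2)\,Q\bigr)$, and expanding the product $V = U_\ell Q^{\alpha_\ell}\cdots Q^{\alpha_1}U_0$ (with $\sum_i\alpha_i\le 1$) into $2^\ell$ branches according to which of the $\ell$ slots uses $Q$ rather than $\id$, the branches that involve more than $L$ discrete queries carry total coefficient weight $O\bigl((\pi/2)^L/L!\bigr)$, since $\sin(\pi\alpha_i/2)\le\pi\alpha_i/2$ and $\sum_i\alpha_i\le 1$; this is at most $\epsilon$ once $L = O\big(\frac{\log(1/\epsilon)}{\log\log(1/\epsilon)}\big)$. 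Implementing the resulting truncated linear combination of unitaries as a quantum circuit in the usual way yields a unitary $\approxU$ on $m$ ancilla qubits together with the algorithm's register, using $O\big(\frac{\log(1/\epsilon)}{\log\log(1/\epsilon)}\big)$ queries, with $\norm{\approxU - U}\le\epsilon$ for a unitary $U$ that acts \emph{obliviously}: for every state $\ket\psi$,
\be
  U\big(\ket{0^m}\ket\psi\big) = \sqrt p\,\ket{0^m}V\ket\psi + \sqrt{1-p}\,\ket{\Phi^\perp_\psi}, \qquad (\ket{0^m}\!\bra{0^m}\otimes\id)\ket{\Phi^\perp_\psi}=0,
\ee
where $p$ is a fixed constant with $0<p<1$ independent of $\epsilon$.

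The new step turns $U$ into an \emph{exact} implementation of $V$ using only $O(1)$ applications of $\approxU$. Standard amplitude amplification is unavailable because it requires the reflection about the input $\ket{0^m}\ket\psi$, which is unknown; only the reflection $R \defeq \id - 2(\ket{0^m}\!\bra{0^m}\otimes\id)$ about the flagged subspace can be implemented. I claim the Grover-type operator $G \defeq -U R U^\dagger R$ nonetheless works, and this is exactly where obliviousness is used. The key identity is that for every $\ket\chi$, $\langle 0^m V\psi| U |0^m\chi\rangle = \sqrt p\,\langle\psi|\chi\rangle$ — the cross term vanishes because $\ket{0^m}V\ket\psi$ is flagged while $\ket{\Phi^\perp_\chi}$ is not — which, summed over an orthonormal basis of $\ket\chi$, gives $(\ket{0^m}\!\bra{0^m}\otimes\id)\,U^\dagger\big(\ket{0^m}V\ket\psi\big) = \sqrt p\,\ket{0^m}\ket\psi$. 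Together with $\Pi\,U\ket{0^m}\ket\psi = \sqrt p\,\ket{0^m}V\ket\psi$ (where $\Pi\defeq\ket{0^m}\!\bra{0^m}\otimes\id$), this shows that $\{U\ket{0^m}\ket\psi,\ \ket{0^m}V\ket\psi\}$ is a conjugate pair for the projectors $\Pi$ and $U\Pi U^\dagger$, so its span $\mathcal P_\psi$ is invariant under both $R$ and $URU^\dagger$; a standard two-reflection argument then shows $G$ restricts to a rotation by $2\theta$ on $\mathcal P_\psi$ with $\theta\defeq\arcsin\sqrt p$, simultaneously for every $\ket\psi$. Iterating, $G^k U\ket{0^m}\ket\psi = \sin\bigl((2k+1)\theta\bigr)\ket{0^m}V\ket\psi + \cos\bigl((2k+1)\theta\bigr)\ket{\Phi^{\perp,(k)}_\psi}$ with the second term unflagged. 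Appending one more ancilla qubit and applying a $\ket\psi$-independent single-qubit rotation controlled on the flag only decreases the good amplitude and keeps $U$ oblivious with respect to the enlarged flag, so I may assume $\theta = \pi/(2(2k+1))$ for the least integer $k$ with $\pi/(2(2k+1))\le\arcsin\sqrt p$; since $p$ is a fixed constant, $k = O(1)$, and then $G^k U\ket{0^m}\ket\psi = \ket{0^m}V\ket\psi$ exactly.

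It remains to propagate the error. Replacing $U$ by $\approxU$, one has $\norm{(-\approxU R\approxU^\dagger R) - G}\le 2\norm{\approxU-U}$, and the circuit $(-\approxU R\approxU^\dagger R)^k\approxU$ uses $\approxU$ or its inverse exactly $2k+1$ times, so its action on $\ket{0^m}\ket\psi$ is within $(2k+1)\norm{\approxU-U}$ of $\ket{0^m}V\ket\psi$; running the \cite{CGM+09} construction with precision $\epsilon/(2k+1) = \Theta(\epsilon)$ then gives total error at most $\epsilon$, and discarding the now essentially disentangled ancillas implements $V$. The total query cost is $(2k+1)\cdot O\big(\frac{\log((2k+1)/\epsilon)}{\log\log((2k+1)/\epsilon)}\big) = O\big(\frac{\log(1/\epsilon)}{\log\log(1/\epsilon)}\big)$ since $k=O(1)$. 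The crux of the argument is the oblivious-amplitude-amplification step — justifying that the flag-only reflection $R$ can stand in for the unavailable reflection about $\ket{0^m}\ket\psi$ — which rests entirely on the identity $(\ket{0^m}\!\bra{0^m}\otimes\id)U^\dagger(\ket{0^m}V\ket\psi) = \sqrt p\,\ket{0^m}\ket\psi$ and hence on the obliviousness of $U$; the remainder is the recapitulation of \cite{CGM+09} and routine bookkeeping.
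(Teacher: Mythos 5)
Your proposal is correct and follows essentially the same route as the paper: the CGM+09 truncated-segment construction supplying a $\approxU$ close to an oblivious $U$, followed by oblivious amplitude amplification justified by showing the pair $\{U|0^m\>|\psi\>,\,|0^m\>V|\psi\>\}$ spans a subspace invariant under both reflections (your cross-term identity $\Pi U^\dag|0^m\>V|\psi\> = \sqrt{p}\,|0^m\>|\psi\>$ is the same fact the paper extracts via $Q=p\,\id$ in its 2D Subspace Lemma), and an ancilla-controlled rotation to tune the amplitude so that $O(1)$ amplification rounds are exact. The only cosmetic differences are that the paper tunes the amplitude to exactly $1/2$ inside the segment (splitting into five pieces of query complexity $1/5$ so one amplification step suffices), whereas you tune post hoc to $\sin^2\bigl(\pi/(2(2k+1))\bigr)$ with $k=O(1)$, which works but tacitly uses the easy (and true, e.g.\ $p \ge e^{-\pi}$) fact that $p$ is bounded below by a universal constant for query complexity at most $1$.
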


The construction of the algorithm in this main lemma can be viewed in two steps. First, we show how to unitarily construct a superposition of the required state along with a label in state $\ket{0^{m+1}}$ and another state whose label is orthogonal. The construction is similar to that in \cite{CGM+09,BCG14}; the main difference is that we do not measure the state of the label. (This step is shown in the sequence \lem{gadget}, \lem{segment}, and \lem{approxsegment}.) Then, in the second step, rather than performing a fault-correction procedure upon seeing a measurement outcome other than $0^{m+1}$, we perform the underlying unitary operation in the first step three times (one of which is backwards) in conjunction with certain reflections to arrive at the required state. This step can be viewed as applying a generalization of amplitude amplification that is shown in \lem{oaa}.

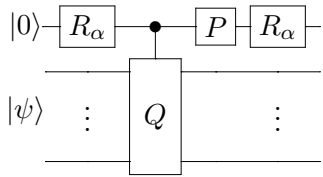
\begin{figure}[ht]
\[ 
\Qcircuit @R=0.4em @C=0.5em {
 \push{\ket{0}}&\gate{R_\alpha}&\ctrl{1}&\gate{P}&\gate{R_\alpha}&\qw\\
 \push{\hphantom{|\psi\>}}& \qw &\multigate{3}{Q}&\qw &\qw &\qw\\
 \push{\ket{\psi}}&  \vdots  & & & \vdots & \\
 &&&&\\
 \push{\hphantom{|\psi\>}}& \qw &\ghost{Q}&\qw &\qw &\qw
}
\]
\caption{\label{fig:gadget}The fractional-query gadget.
After performing the controlled-$Q$ operation on the target state $|\psi\>$, the operation $Q^\alpha$ is performed with amplitude depending on $\alpha$.}
\end{figure}

The first step of the construction uses the fractional-query gadget \cite[Section II.B]{CGM+09} shown in \fig{gadget}.  This gadget behaves as follows, as we show in \app{approxsegment}.

\begin{restatable}[Gadget Lemma \cite{CGM+09}]{lemma}{GADGET}
\label{lem:gadget}
Let $Q$ be a unitary matrix with eigenvalues $\pm 1$; let $\alpha \in [0,1]$. The circuit in \fig{gadget}, with $R_\alpha \defeq \frac{1}{\sqrt{c+s}}\left(\begin{smallmatrix}\sqrt{c} & \sqrt{s} \\ \sqrt{s}  & -\sqrt{c}\end{smallmatrix}\right)$ and $P \defeq \left(\begin{smallmatrix}1 & 0 \\ 0  & i\end{smallmatrix}\right)$, performs the map
\be
|0\>|\psi\> \mapsto \sqrt{q_\alpha}|0\>e^{-i\pi\alpha/2}Q^\alpha |\psi\> + \sqrt{1-q_\alpha}|1\>|\phi\>
\ee
for some state $|\phi\>$, where $c \defeq \cos(\pi \alpha/2)$, $s \defeq \sin(\pi \alpha/2)$, $q_\alpha \defeq 1/(c+s)^2 = 1/(1+\sin(\pi\alpha))$, and $Q^\alpha = \frac{1}{2}(\id+Q) + e^{-i\pi\alpha} \frac{1}{2}(\id-Q) = e^{-i \pi \alpha/2} (c \id + i s Q)$.
\end{restatable}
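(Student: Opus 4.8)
The plan is to verify \lem{gadget} by a direct computation, exploiting the fact that $Q$ is a reflection ($Q^2=\id$) to collapse the gadget of \fig{gadget} into a single‑qubit calculation on the ancilla. Write the target space as the orthogonal sum of the $\pm1$‑eigenspaces of $Q$, with projectors $\Pi_\pm$, so $Q=\Pi_+-\Pi_-$. The two $R_\alpha$ gates and the gate $P$ act only on the ancilla, the controlled‑$Q$ commutes with $\id\otimes\Pi_\pm$, and $Q^\alpha$ is block‑diagonal in this splitting; hence it suffices to prove the claimed map when $Q\ket\psi=q\ket\psi$ for a fixed $q\in\{+1,-1\}$ and then recombine by linearity — provided the scalar $\sqrt{q_\alpha}$ multiplying $\ket0 Q^\alpha\ket\psi$ comes out \emph{independent of $q$}, which is the one point that really has to be checked.

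On the line $\mathbb C\ket\psi$ with $Q\ket\psi=q\ket\psi$, the controlled‑$Q$ operation acts on the ancilla alone as $\operatorname{diag}(1,q)$, so the gadget implements there the $2\times2$ ancilla unitary $M_q\defeq R_\alpha\,P\,\operatorname{diag}(1,q)\,R_\alpha$. I would simply evaluate $M_q\ket0$: from $R_\alpha\ket0=\tfrac1{\sqrt{c+s}}(\sqrt c,\sqrt s)^{\mathsf T}$, applying $\operatorname{diag}(1,q)$ and then $P=\operatorname{diag}(1,i)$ gives $\tfrac1{\sqrt{c+s}}(\sqrt c,\,iq\sqrt s)^{\mathsf T}$, and the final $R_\alpha$ yields $\tfrac1{c+s}\bigl(c+iqs,\ \sqrt{cs}\,(1-iq)\bigr)^{\mathsf T}$.

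It remains to read off the two entries. For the $\ket0$‑entry, $c^2+s^2=1$ gives $|c+iqs|=1$, and $(c+s)^2=1+2cs=1+\sin(\pi\alpha)$, so this entry has modulus $\tfrac1{c+s}=\sqrt{q_\alpha}$ — recovering $q_\alpha=1/(1+\sin\pi\alpha)$ — and unit phase $c+iqs$, whose modulus does not depend on $q$. Summing over the two eigenspaces, the $\ket0$‑part of the output is $\sqrt{q_\alpha}\,\ket0\,(c\,\id+isQ)\ket\psi$, and since $c\,\id+isQ$ agrees with $Q^\alpha$ up to the global phase recorded in the lemma's closed form $Q^\alpha=e^{-i\pi\alpha/2}(c\,\id+isQ)$, this equals $\sqrt{q_\alpha}\,\ket0\,e^{-i\pi\alpha/2}Q^\alpha\ket\psi$ (a short phase check using $c\pm is=e^{\pm i\pi\alpha/2}$ — this is exactly where the $q$‑independence of the prefactor is used, so that the eigenspace‑wise identity reassembles into an operator identity for all $\ket\psi$); it has norm $\sqrt{q_\alpha}$ because $Q^\alpha$ is unitary. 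Finally, the gadget is a unitary applied to the unit vector $\ket0\ket\psi$, so the output automatically has the form $\ket0\ket{\chi_0}+\ket1\ket{\chi_1}$ on ancilla$\,\otimes\,$target; having pinned down $\ket{\chi_0}$, normalization forces $\norm{\ket{\chi_1}}^2=1-q_\alpha$, so setting $\ket\phi\defeq\ket{\chi_1}/\norm{\ket{\chi_1}}$ (any unit vector if $\ket{\chi_1}=0$) gives the stated map. One can also just read $\ket{\chi_1}$ off the computation above: it is $\tfrac{\sqrt{cs}\,(1-iq)}{c+s}\ket\psi$ on each eigenspace, of squared norm $\tfrac{2cs}{1+2cs}=1-q_\alpha$.

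I expect no real obstacle: this is a short finite calculation. The only thing needing care is the phase bookkeeping — the $i$ contributed by $P$, the $e^{\mp i\pi\alpha/2}$ relating $c\,\id\pm isQ$ to $Q^\alpha$, and the $q$‑dependent phase $c+iqs$ — together with verifying that what multiplies $\ket0 Q^\alpha\ket\psi$ is a genuinely $q$‑independent scalar, which is precisely what licenses assembling the two eigenspace computations into the single operator identity claimed.
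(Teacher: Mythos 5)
Your proposal is correct and takes essentially the same route as the paper's proof in \app{approxsegment}: the identical gate-by-gate evaluation of the gadget, merely organized through the $\pm1$ eigenspaces of $Q$ (with the $|1\>$-branch norm fixed by unitarity rather than the explicit form $e^{-i\pi/4}Q^{-1/2}|\psi\>$, which the lemma does not require). One bookkeeping note: your own identity $Q^\alpha = e^{-i\pi\alpha/2}(c\,\id + i s Q)$ gives $(c\,\id + i s Q) = e^{+i\pi\alpha/2}Q^\alpha$, so the computed $|0\>$-amplitude is $\sqrt{q_\alpha}\,e^{+i\pi\alpha/2}Q^\alpha|\psi\>$ --- the same phase the paper's appendix calculation produces --- rather than the $e^{-i\pi\alpha/2}$ you (and the lemma statement) write; this is a harmless global-phase discrepancy inherited from the statement itself, and it is absorbed into the arbitrary phase $\vartheta$ in the Segment Lemma, so it does not affect anything downstream.
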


While the proof in \app{approxsegment} shows that $|\phi\> = e^{-i \pi/4} Q^{-1/2}|\psi\>$, we do not use this fact in our analysis, in contrast to previous approaches \cite{CGM+09,BCG14}.

Note that while we have defined the fractional-query model to use fractions $\alpha \in (0,1]$, a similar simulation could be applied if we allowed negative fractional-time evolutions with $\alpha \in [-1,1]$.  In particular, we could define 
$s = \sin(\pi|\alpha|/2)$, $P = (\begin{smallmatrix}1 & 0 \\ 0 & i \sgn(\alpha)\end{smallmatrix})$
and carry through an analogous analysis.  However, for simplicity, we restrict our attention to the model with only positive fractional time evolutions.

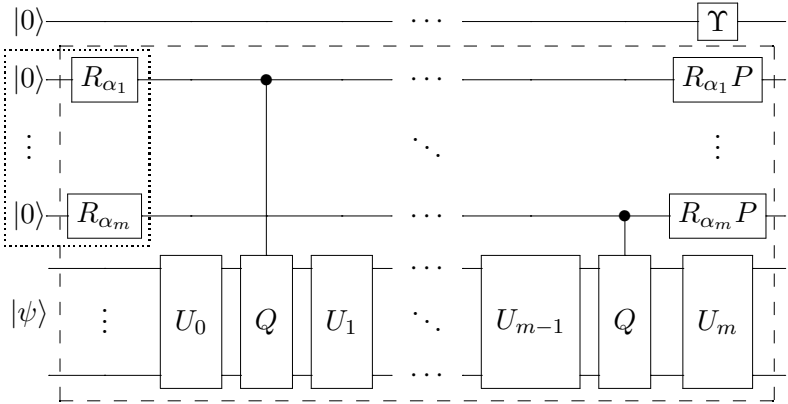
\begin{figure}[ht]
\[ 
\Qcircuit @R=.6em @C=0.65em {
\push{|0\>} \gategroup{2}{2}{9}{11}{.8em}{--}  \gategroup{2}{1}{5}{2}{.5em}{.}  & \qw  & \qw & \qw & \qw & \qw &\push{\cdots}& & \qw & \qw&  \gate{\fudge} & \qw  \\
\push{|0\>} & \gate{R_{\alpha_1}} & \qw & \ctrl{4} & \qw & \qw &{\cdots}& & \qw & \qw& \gate{R_{\alpha_1}P}  & \qw  \\
\push{\vdots} &        &    &   & &   &\ddots& &  &    & \vdots &   \\
   &        &        &   & &   & & &  &    &  &   \\
\push{|0\>} & \gate{R_{\alpha_{m}}} & \qw & \qw & \qw & \qw &{\cdots}& & \qw & \ctrl{1} & \gate{R_{\alpha_{m}}P} & \qw \\
\push{\hphantom{|\psi\>}}& \qw & \multigate{3}{U_0} & \multigate{3}{Q} & \multigate{3}{U_1} & \qw &{\cdots}& & \multigate{3}{U_{m-1}} & \multigate{3}{Q}& \multigate{3}{U_{m}} &  \qw  \\
\push{|\psi\>}& {\vdots} & &  &  &  &\ddots& &  &  &  &    \\
& & &  &  &  && &  &  &  &   \\
\push{\hphantom{|\psi\>}} & \qw & \ghost{U_0} & \ghost{Q} & \ghost{U_1} & \qw &{\cdots}& & \ghost{U_{m-1}} & \ghost{Q}& \ghost{U_{m}} &    \qw  \\
}
\]
\caption{\label{fig:segment}A segment to implement the fractional-query algorithm.
The segment consists of many concatneated applications of the fractional-query gadget, interspersed with $x$-independent unitaries $U_i$.
The state preparation is indicated in the dotted box, and the main operation is performed by the circuit in the dashed box.
The additional ancilla at the top is introduced to reduce the amplitude for performing the correct operation to exactly $1/2$.}
\end{figure}

We now collect the gadgets into segments as shown in \fig{segment} and show that, with an appropriate choice of parameters, a segment implements a fractional-query algorithm with constant query complexity with amplitude $1/2$. This specific choice facilitates one-step exact oblivious amplitude amplification. Other than this choice of constant, this lemma is the same as in \cite{CGM+09}. For completeness, we provide a proof in \app{approxsegment}.

\begin{restatable}[Segment Lemma]{lemma}{SEGMENT}
\label{lem:segment}
Let $V$ be a unitary implementable by a fractional-query algorithm with query complexity at most $1/5$, i.e., there exists an $m$ such that $V=U_{m}Q^{\alpha_{m}}U_{m-1}\cdots U_{1}Q^{\alpha_{1}}U_{0}$ with $\alpha_i \geq 0$ for all $i$ and $\sum_{i=1}^{m} \alpha_i \leq 1/5$. Let $P$ and $R_\alpha$ be as in \lem{gadget}. Then there exists a unitary $\fudge$ on the additional ancilla such that the circuit in \fig{segment} performs the map
\be
|0^{m+1}\>|\psi\> \mapsto \frac{1}{2}|0^{m+1}\>e^{i\vartheta}V|\psi\> + \frac{\sqrt{3}}{2}|\Phi^\perp\>
\ee
for some state $|\Phi^\perp\>$ satisfying $(|0^{m+1}\>\<0^{m+1}| \otimes \id)|\Phi^\perp\>=0$ and some $\vartheta \in [0,2\pi)$.
\end{restatable}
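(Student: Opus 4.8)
The plan is to build the segment circuit in Figure~\ref{fig:segment} out of the gadget of \lem{gadget} and track amplitudes carefully. First I would ignore the extra top ancilla (the $\fudge$ register) and analyze the circuit in the dashed box alone. Concatenating $m$ copies of the gadget, with the control wire of the $i$th gadget attached to a fresh ancilla qubit prepared by $R_{\alpha_i}$ and with $U_{i-1}$'s inserted between the controlled-$Q$'s, \lem{gadget} gives that the circuit maps
\[
|0^{m}\>|\psi\> \;\mapsto\; \Big(\textstyle\prod_{i=1}^{m}\sqrt{q_{\alpha_i}}\Big)\, |0^{m}\>\, e^{-i\pi(\sum_i \alpha_i)/2}\, V|\psi\> \;+\; (\text{terms with some ancilla not }0),
\]
because on the all-zeros ancilla branch the $i$th gadget contributes the factor $\sqrt{q_{\alpha_i}}\,e^{-i\pi\alpha_i/2} Q^{\alpha_i}$ and the $U_i$'s are applied unconditionally, so the product telescopes into $e^{i\vartheta_0}V$ with $\vartheta_0 = -\pi(\sum_i\alpha_i)/2$ up to sign. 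The key quantitative point is that the total amplitude on the good branch is $A \defeq \prod_{i=1}^m \sqrt{q_{\alpha_i}} = \prod_i (1+\sin(\pi\alpha_i))^{-1/2}$. Since $\alpha_i \ge 0$ and $\sum_i \alpha_i \le 1/5$, we have $\sum_i \sin(\pi\alpha_i) \le \pi\sum_i\alpha_i \le \pi/5$, so $A \ge \prod_i(1+\sin\pi\alpha_i)^{-1/2} \ge (1 + \sum_i \sin\pi\alpha_i)^{-1/2}$ is bounded below by a constant strictly exceeding $1/2$ (indeed $A \ge (1+\pi/5)^{-1/2} > 0.6$), and of course $A \le 1$.

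Next I would use the top ancilla and the unitary $\fudge$ to damp the good-branch amplitude down from $A$ to exactly $1/2$. Set $\fudge$ to be a single-qubit rotation taking $|0\>$ to $\frac{1}{2A}|0\> + \sqrt{1 - \tfrac{1}{4A^2}}\,|1\>$ (well-defined precisely because $1/2 \le A \le 1$), applied as in Figure~\ref{fig:segment} so that the good branch now requires this top ancilla to return to $|0\>$ as well. Then the overall map on $|0^{m+1}\>|\psi\>$ has amplitude $\frac{1}{2A}\cdot A = \frac12$ on $|0^{m+1}\>$ tensored with a unit vector proportional to $e^{i\vartheta_0}V|\psi\>$; call the global phase $e^{i\vartheta}$. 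Everything orthogonal to the $|0^{m+1}\>$-labelled subspace gets collected into $\sqrt{3}/2\,|\Phi^\perp\>$, which is automatically a unit vector and annihilated by $|0^{m+1}\>\<0^{m+1}|\otimes\id$ since the map is an isometry (norms must add to one). I would note that the phase $e^{-i\pi\alpha_i/2}$ is a global scalar on the relevant branch, not a relative phase, so it causes no trouble — this is exactly why the lemma only claims existence of $\vartheta$ rather than computing it, and why we needn't use the identity $|\phi\> = e^{-i\pi/4}Q^{-1/2}|\psi\>$ from \lem{gadget}.

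The one point requiring a little care — and the closest thing to an obstacle — is verifying that the $U_i$'s commute through the controlled structure correctly, i.e., that the composed good-branch operator is literally $U_m Q^{\alpha_m} U_{m-1}\cdots Q^{\alpha_1} U_0 = V$ and not some scrambled order; this is just bookkeeping from the circuit diagram but worth stating explicitly. A second minor point is checking $R_\alpha$ is well-defined and unitary for $\alpha\in[0,1]$ (it is: $c+s = \cos(\pi\alpha/2)+\sin(\pi\alpha/2) > 0$ on this range, and the matrix is then a real orthogonal matrix after normalization), and that \lem{gadget} indeed applies since each $Q$ is assumed throughout to be a unitary with eigenvalues $\pm1$ — this is the standing hypothesis on the query oracle $Q = Q_x$. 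Since the full statement is deferred to \app{approxsegment} and the only change from \cite{CGM+09} is the choice of the constant $1/5$ together with the $\fudge$-rotation forcing amplitude exactly $1/2$, I would present the above amplitude computation as the core and relegate the routine circuit-identity verification to the appendix proof.
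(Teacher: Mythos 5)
Your construction follows the paper's proof almost exactly: analyze the dashed box, observe that the all-zero control branch carries amplitude $\sqrt{p}=\prod_i\sqrt{q_{\alpha_i}}$ times $e^{i\vartheta}V|\psi\rangle$ (your $A$), and then use the top ancilla with a rotation $\fudge:|0\rangle\mapsto\frac{1}{2A}|0\rangle+\sqrt{1-\frac{1}{4A^2}}\,|1\rangle$ to damp the good amplitude to exactly $1/2$. The one step that fails is your lower bound on $A=\prod_i(1+\sin(\pi\alpha_i))^{-1/2}$. The inequality $\prod_i(1+\sin(\pi\alpha_i))^{-1/2}\ge\bigl(1+\sum_i\sin(\pi\alpha_i)\bigr)^{-1/2}$ is backwards: for $x_i\ge 0$ one has $\prod_i(1+x_i)\ge 1+\sum_i x_i$, so the middle expression is an \emph{upper} bound on $A$, not a lower bound. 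Indeed your asserted numerical bound $A\ge(1+\pi/5)^{-1/2}\approx 0.78$ is false: taking $m$ equal fractions $\alpha_i=1/(5m)$ gives $\prod_i(1+\sin(\pi\alpha_i))\to e^{\pi/5}$ as $m\to\infty$, so $A$ can be as small as roughly $e^{-\pi/10}\approx 0.73$.

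Fortunately all you need is $A\ge 1/2$ (so that $\frac{1}{2A}\le 1$ and $\fudge$ exists), and that is true but requires a correct estimate. The paper's chain is $p=A^2=\prod_i(1+\sin(\pi\alpha_i))^{-1}\ge\prod_i(1+\pi\alpha_i)^{-1}\ge\prod_i(1-\pi\alpha_i)\ge 1-\pi\sum_i\alpha_i\ge 1-\pi/5>1/4$; alternatively $\prod_i(1+\sin(\pi\alpha_i))\le\exp\bigl(\pi\sum_i\alpha_i\bigr)\le e^{\pi/5}<4$ gives the same conclusion. With that repair, the rest of your argument (the telescoping of the good branch into $e^{i\vartheta}V$ with $\vartheta=-\pi\sum_i\alpha_i/2$, the choice of $\fudge$, and the residual of norm $\sqrt{3}/2$) matches the paper. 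One small wording issue: orthogonality of the residual to the $|0^{m+1}\rangle$-labelled subspace does not follow from the map being an isometry (that only fixes its norm once orthogonality is known); it follows because, by \lem{gadget}, the only contribution to the $|0^{m}\rangle$ control state comes from the all-success branch, and the top ancilla's $|1\rangle$ component is manifestly orthogonal.
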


Although the segment in \fig{segment} makes $m$ queries, it is possible to approximate this segment within precision $\epsilon$ using only $O(\frac{\log(1/\epsilon)}{\log\log(1/\epsilon)})$ queries. To get some intuition for why this is possible, note that the state on the control registers decides how many queries are performed. For example, if all the control registers were set to $|0\>$ when the controlled-$Q$ gates act, then no queries would be performed, even though the circuit contains $m$ query gates. In general, the number of queries performed when the control registers are set to $|\x_1,\x_2,\ldots,\x_m\>$ is the Hamming weight of $\x$.  In \fig{segment}, the state of the control registers has very little overlap with high-weight states, so we can approximate that state with one that has no overlap with high-weight states.  We then show how to rearrange such a circuit to obtain a new circuit that uses very few query gates.

This lemma follows the same proof structure as Section II.C of \cite{CGM+09}, but is more general since we do not restrict all the fractional queries to have the same value of $\alpha$. This change requires us to use a version of the Chernoff bound for independent (but not necessarily identically distributed) random variables instead of the one used in \cite{CGM+09}. The lemma is proved in \app{approxsegment}.

\begin{restatable}[Approximate Segment Lemma]{lemma}{APPROXSEGMENT}
\label{lem:approxsegment}
Let $V$ be a unitary implementable by a fractional-query algorithm with query complexity at most $1/5$. Then for any $\epsilon>0$, there exists a unitary quantum circuit that makes $O\big(\frac{\log(1/\epsilon)}{\log\log(1/\epsilon)}\big)$ discrete queries and, within error $\epsilon$, performs a unitary $U$ acting as
\be
 U |0^{m+1}\>|\psi\> = 
  \frac{1}{2}|0^{m+1}\>e^{i\vartheta}V|\psi\> + \frac{\sqrt{3}}{2}|\Phi^\perp\>
\ee
for some state $|\Phi^\perp\>$ satisfying $(|0^{m+1}\>\<0^{m+1}| \otimes \id)|\Phi^\perp\>=0$ and some $\vartheta \in [0,2\pi)$.
\end{restatable}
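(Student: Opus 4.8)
The plan is to start from the exact Segment Lemma (\lem{segment}), which produces a circuit making $m$ queries that implements the desired unitary $U$, and to argue that almost all of those $m$ queries can be eliminated because the control registers occupy an overwhelmingly low-Hamming-weight subspace. Concretely, after the state-preparation step (the dotted box in \fig{segment}), each control qubit $i$ is in a state $R_{\alpha_i}|0\>$, and since $\sum_i \alpha_i \le 1/5$ the probability that control qubit $i$ equals $|1\>$ is $q_i = \sin(\pi\alpha_i/2)/(c_i+s_i) = O(\alpha_i)$. Thus the Hamming weight of the control string is a sum of independent (not identically distributed) Bernoulli random variables with total mean $O(1)$. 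I would invoke a Chernoff bound for such variables to show that the weight exceeds a threshold $k = \Theta\!\big(\frac{\log(1/\epsilon)}{\log\log(1/\epsilon)}\big)$ with probability at most $\epsilon^2$ (so amplitude at most $\epsilon$); here the key is that $k! \gtrsim 1/\epsilon$ precisely when $k = \Theta(\log(1/\epsilon)/\log\log(1/\epsilon))$, which is where the stated query count comes from.

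The next step is to replace the prepared control state by its (renormalized) projection onto the weight-$\le k$ subspace. This changes the state by at most $O(\epsilon)$ in norm, hence changes the circuit's action on $|0^{m+1}\>|\psi\>$ by at most $O(\epsilon)$; after absorbing constants into $k$ we get error at most $\epsilon$. Having done so, I would rewrite the circuit so that the controlled-$Q$ gates are only applied at positions where the control qubit can be $|1\>$. The standard trick (as in Section II.C of \cite{CGM+09}) is: conditioned on knowing which $\le k$ of the $m$ control qubits are set, we only need to apply $Q$ (interleaved with the appropriate products of the $x$-independent $U_i$'s, which can be precomputed and merged) at those $\le k$ slots. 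One implements this by first computing, into an ancilla register, the sorted list of positions of the $1$'s in the control string (a reversible, query-free computation on $O(m\log m)$-sized registers), then performing $k$ iterations, each of which coherently "loads" the next marked position, applies the corresponding merged $U$-product, applies a single $Q$, and applies the $R_{\alpha}P$ unwinding on that control qubit. Because there are only $k$ iterations each using one $Q$, the total is $O(k) = O\big(\frac{\log(1/\epsilon)}{\log\log(1/\epsilon)}\big)$ discrete queries, and the circuit is manifestly unitary.

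I expect the main obstacle to be the bookkeeping in the rearrangement step — showing that the reorganized query-sparse circuit exactly reproduces (up to the already-incurred $O(\epsilon)$ truncation error) the action of the original segment circuit on the truncated control subspace. One has to verify that the intervening $x$-independent unitaries $U_j \cdots U_{i}$ between consecutive marked control positions can be merged and applied conditioned on the loaded indices without disturbing the control amplitudes, and that the final $\fudge$ and $R_{\alpha_i}P$ layers act correctly on both the marked and unmarked control qubits. This is essentially the content already worked out in \cite{CGM+09}, with the one genuine change being the use of the nonidentical-variables Chernoff bound; so in the body I would state the truncation error bound as a short lemma, cite the analogous construction of \cite{CGM+09} for the rearrangement, and point to \app{approxsegment} for the full details. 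The quantitative choice of $k$ is then pinned down by the requirement $\sqrt{\Pr[\mathrm{wt} > k]} \le \epsilon$, giving exactly the claimed query complexity.
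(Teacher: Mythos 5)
Your overall route is the paper's: truncate the control-register state to Hamming weight at most $k$ using a Chernoff bound for independent, non-identically distributed Bernoulli variables with total mean $O(1)$ (the paper gets $\mu \le \pi/10$ and tail $< e^k/k^k < \epsilon^2/2$ for $k = \Theta\big(\frac{\log(1/\epsilon)}{\log\log(1/\epsilon)}\big)$), then rearrange the segment so that, conditioned on the control string $\x$, the target sees $W_k(\x)Q_k(\x)\cdots Q_1(\x)W_0(\x)$ with at most $k$ actual query gates. Your extra machinery about computing sorted lists of marked positions is about gate efficiency, which is not needed for this query-complexity lemma (the paper defers such issues to the time-complexity section), but it is not wrong.

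The genuine gap is in your final error step: ``this changes the state by at most $O(\epsilon)$ in norm, hence changes the circuit's action on $|0^{m+1}\>|\psi\>$ by at most $O(\epsilon)$.'' That only shows the implemented circuit is state-wise close to the exact segment map on inputs of the special form $|0^{m+1}\>|\psi\>$. The lemma, with the paper's convention that distance between unitaries is $\norm{U-V}$, asserts that the implemented circuit is within $\epsilon$ of some unitary $U$ whose action has \emph{exactly} the stated form, and this stronger guarantee is what \lem{main} actually uses: oblivious amplitude amplification applies $\approxU$, $\approxU^\dag$, and $\approxU$ again, and $\approxU^\dag$ acts on reflected intermediate states that are not of the form $|0^{m+1}\>|\psi\>$, so closeness on the special inputs alone does not control the error there. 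The subtlety is that closeness of the prepared states $|\tilde\zeta\>$ and $|\zeta\>$ does not imply closeness of the preparing unitaries $\tilde R$ and $R=\bigotimes_i R_{\alpha_i}$ in operator norm. The paper's fix is small but necessary: let $E$ be the rotation by angle $\theta$ (with $\<\tilde\zeta|\zeta\>=\cos\theta$) in the two-dimensional span of $|\zeta\>$ and $|\tilde\zeta\>$, acting as the identity elsewhere, and set $R' \defeq E\tilde R$. Then $R'$ prepares $|\zeta\>$ exactly, so defining $U$ with $R'$ in place of the actual preparation gives a unitary with exactly the stated action, while $\norm{R'-\tilde R}=\sqrt{2-2\cos\theta}\le\epsilon$ bounds the operator-norm error of the implemented circuit. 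Adding this comparison-unitary construction closes the gap; the rest of your argument matches the paper.
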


Up to this point our proof is similar to previous approaches \cite{CGM+09,BCG14}.  In those previous approaches, the map of \lem{approxsegment} was used to probabilistically create the desired state by measuring the first $m+1$ qubits. With constant probability we obtain the desired state, but in the other case we have a fault and have to recover the original input state. This recovery stage required a fault-correction procedure that is difficult to analyze and considerably harder to make time efficient. 

We avoid these difficulties by introducing oblivious amplitude amplification.  Given a unitary $U$ that implements another unitary $V$ with some amplitude (in a certain precise sense), this idea allows one to use a version of amplitude amplification to give a better implementation of $V$.  In particular, as in amplitude amplification, if the amplitude for implementing $V$ is known, we can exactly perform $V$.

In standard amplitude amplification, to amplify the ``good'' part of a state, we need to be able to reflect about the state itself and the projector onto the good subspace. While the latter is easy in our application, we cannot reflect about the unknown input state.  Nevertheless, we show the following.

\begin{lemma}[Oblivious amplitude amplification]
\label{lem:oaa}
Let $U$ and $V$ be unitary matrices on $\mj+n$ qubits and $n$ qubits, respectively, and let $\theta \in (0,\pi/2)$.  Suppose that for any $n$-qubit state $|\psi\>$,
\be
  U|0^\mj\>|\psi\> = \sin(\theta) |0^\mj\>V|\psi\> + \cos(\theta) |\Phi^\perp\>,
\ee
where $|\Phi^\perp\>$ is an $(\mj+n)$-qubit state that depends on $|\psi\>$ and satisfies $\Pi|\Phi^\perp\>=0$, where $\Pi \defeq |0^{\mj}\>\<0^{\mj}| \otimes \id$.  Let $R \defeq 2\Pi-\id$ and $S \defeq -U R U^\dag R$.  Then for any $\ell \in \Z$,
\begin{align}
  S^\ell U |0^\mj\>|\psi\> &= \sin\bigl((2\ell+1)\theta\bigr) |0^\mj\>V|\psi\> + \cos\bigl((2\ell+1)\theta\bigr) |\Phi^\perp\>.
\end{align}
\end{lemma}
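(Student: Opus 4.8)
The plan is to reduce this to the standard two-dimensional picture underlying amplitude amplification. Fix an $n$-qubit state $|\psi\>$ and set $|G\> \defeq |0^\mj\>V|\psi\>$ and $|B\> \defeq |\Phi^\perp\>$, so that $U|0^\mj\>|\psi\> = \sin\theta\,|G\> + \cos\theta\,|B\>$, with $\Pi|G\> = |G\>$ and $\Pi|B\> = 0$; hence $\<G|B\> = 0$. Let $\mathcal H_\psi \defeq \spn\{|G\>,|B\>\}$. The key claim is that $\mathcal H_\psi$ is invariant under both $R$ and $S = -URU^\dag R$, and that $S$ acts on it as rotation by angle $2\theta$ in the oriented basis $(|B\>,|G\>)$. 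Granting this, $S^\ell U|0^\mj\>|\psi\> = S^\ell(\sin\theta\,|G\> + \cos\theta\,|B\>)$ is just the vector at angle $(2\ell+1)\theta$ from $|B\>$, which is $\sin((2\ell+1)\theta)|G\> + \cos((2\ell+1)\theta)|B\>$, as desired; this also covers negative $\ell$ since $S$ is a genuine rotation (orthogonal with determinant $+1$ on the real span).

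For the invariance and rotation-angle computation I would proceed in two steps. First, $R = 2\Pi - \id$ fixes $|G\>$ and negates $|B\>$, so $\mathcal H_\psi$ is $R$-invariant and $R$ is the reflection about $|G\>$ within it. Second, I need to understand $URU^\dag$ on $\mathcal H_\psi$. Write $|w_\psi\> \defeq U|0^\mj\>|\psi\> = \sin\theta\,|G\> + \cos\theta\,|B\>$. Then $URU^\dag = U(2\Pi - \id)U^\dag = 2U\Pi U^\dag - \id$; the operator $U\Pi U^\dag$ is the projector onto $U(\ker(\id - \Pi)) = U(\{|0^\mj\>|\chi\>\})$, i.e., onto $\{U|0^\mj\>|\chi\> : |\chi\>\text{ an $n$-qubit state}\}$. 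The hypothesis says $U|0^\mj\>|\chi\> = \sin\theta\,|0^\mj\>V|\chi\> + \cos\theta\,|\Phi_\chi^\perp\>$ for every $|\chi\>$, but what I actually need is only the single vector $|w_\psi\>$: since $URU^\dag$ is the reflection $2U\Pi U^\dag - \id$ and $|w_\psi\> \in \mathrm{range}(U\Pi U^\dag)$, it fixes $|w_\psi\>$. To see $\mathcal H_\psi$ is $URU^\dag$-invariant, note $URU^\dag$ fixes $|w_\psi\> \in \mathcal H_\psi$ and preserves orthogonal complements; the vector in $\mathcal H_\psi$ orthogonal to $|w_\psi\>$ is $|w_\psi^\perp\> \defeq \cos\theta\,|G\> - \sin\theta\,|B\>$, and I claim $URU^\dag|w_\psi^\perp\> = -|w_\psi^\perp\>$. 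Indeed $\langle w_\psi^\perp | U\Pi U^\dag | w_\psi^\perp\rangle = \|\Pi U^\dag |w_\psi^\perp\>\|^2$, and $U^\dag|w_\psi^\perp\>$ is orthogonal to $|0^\mj\>|\psi\> = U^\dag|w_\psi\>$; the point is that $\Pi U^\dag|w_\psi^\perp\>$ and $\Pi U^\dag|w_\psi\> = |0^\mj\>|\psi\>$ must be proportional because $\Pi U^\dag$ maps the two-dimensional $\mathcal H_\psi$ into... actually the cleanest route is: $URU^\dag$ is a reflection (involution), it fixes $|w_\psi\>$, so its $-1$ eigenspace is the orthocomplement of its $+1$ eigenspace; it suffices that $|w_\psi^\perp\>$ lies in that $-1$ eigenspace, equivalently $U^\dag|w_\psi^\perp\> \perp \mathrm{range}(\Pi)$, i.e. $\Pi U^\dag|w_\psi^\perp\> = 0$. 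This follows by computing $\Pi U^\dag|w_\psi^\perp\> = \cos\theta\,\Pi U^\dag|G\> - \sin\theta\,\Pi U^\dag|B\>$ and using the adjoint of the hypothesis relation evaluated suitably — this is the one slightly delicate point.

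So the main obstacle is exactly this: showing $URU^\dag$ acts as the reflection about $|w_\psi\>$ within $\mathcal H_\psi$, i.e., that $\Pi U^\dag$ kills $|w_\psi^\perp\>$. I would handle it by taking adjoints in the hypothesis: applying $\<0^\mj|\otimes\id$ to $U|0^\mj\>|\chi\>$ gives $\<0^\mj|U|0^\mj\>|\chi\> = \sin\theta\, V|\chi\>$ for all $|\chi\>$ (since $\Pi|\Phi_\chi^\perp\> = 0$), i.e., the ``corner block'' $A \defeq (\<0^\mj|\otimes\id)\,U\,(|0^\mj\>\otimes\id)$ equals $\sin\theta\,V$. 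Then $\Pi U^\dag|0^\mj\>|\phi\> = |0^\mj\>\otimes A^\dag|\phi\> = \sin\theta\,|0^\mj\>V^\dag|\phi\>$ for any $|\phi\>$. Taking $|\phi\> = V|\psi\>$ gives $\Pi U^\dag|G\> = \sin\theta\,|0^\mj\>|\psi\>$, and similarly one computes $\Pi U^\dag|B\>$: writing $|B\> = |\Phi^\perp\>$ and using $\Pi U^\dag(|0^\mj\>\<0^\mj|\otimes\id) = \sin\theta(|0^\mj\>\<0^\mj|\otimes V^\dag)$ together with $\Pi|B\> = 0$... cleaner: $U^\dag|w_\psi\> = |0^\mj\>|\psi\>$ so $\Pi U^\dag|w_\psi\> = |0^\mj\>|\psi\>$, whence from $\Pi U^\dag|G\> = \sin\theta|0^\mj\>|\psi\>$ and linearity $\Pi U^\dag|B\> = \frac{1}{\cos\theta}(\Pi U^\dag|w_\psi\> - \sin\theta\,\Pi U^\dag|G\>) = \frac{1}{\cos\theta}(1-\sin^2\theta)|0^\mj\>|\psi\> = \cos\theta\,|0^\mj\>|\psi\>$. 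Therefore $\Pi U^\dag|w_\psi^\perp\> = \cos\theta(\sin\theta|0^\mj\>|\psi\>) - \sin\theta(\cos\theta|0^\mj\>|\psi\>) = 0$, which is exactly what was needed. Combining: within $\mathcal H_\psi$, $R$ is reflection about $|G\>$ and $URU^\dag$ is reflection about $|w_\psi\> = \sin\theta|G\>+\cos\theta|B\>$; the product of two reflections (with the sign $-1$) $S = -URU^\dag R$ is a rotation, and a short angle bookkeeping — the two reflection axes make angle $\theta$ — shows $S$ rotates by $2\theta$ taking $|w_\psi\>$ (at angle $\theta$ from $|B\>$) to the vector at angle $3\theta$, and inductively $S^\ell|w_\psi\>$ to angle $(2\ell+1)\theta$. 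This gives the stated identity for all $\ell \in \Z$.
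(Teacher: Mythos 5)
Your proof is correct, and its overall shape matches the paper's: work inside the two-dimensional subspace $\spn\{|0^\mj\>V|\psi\>,\,|\Phi^\perp\>\}$, show it is invariant under $R$ and $URU^\dag$, and conclude that $S=-URU^\dag R$ acts there as a rotation by $2\theta$, which the paper does by computing $S$ on $|0^\mj\>V|\psi\>$ and $|\Phi^\perp\>$ explicitly. The genuine difference is how you establish the crucial orthogonality fact (the paper's 2D Subspace Lemma, i.e.\ $\Pi U^\dag\bigl(\cos\theta\,|0^\mj\>V|\psi\> - \sin\theta\,|\Phi^\perp\>\bigr)=0$). The paper proves it by showing that the Hermitian operator $(\<0^\mj|\otimes\id)U^{\dag}\Pi U(|0^\mj\>\otimes\id)$ has $\<\psi|\cdot|\psi\>=p$ for every state and hence equals $p\id$, via an eigenbasis argument. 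You instead apply $\<0^\mj|\otimes\id$ directly to the hypothesis (this is exactly where the ``for all $|\psi\>$'' assumption enters) to read off the corner-block identity $(\<0^\mj|\otimes\id)U(|0^\mj\>\otimes\id)=\sin(\theta)\,V$, giving $\Pi U^\dag|0^\mj\>V|\psi\>=\sin(\theta)|0^\mj\>|\psi\>$, and then use linearity to get $\Pi U^\dag|\Phi^\perp\>=\cos(\theta)|0^\mj\>|\psi\>$, from which the needed cancellation is immediate. This is a more direct and slightly stronger route: your block identity implies the paper's operator identity (it equals $A^\dag A$ with $A=\sin\theta\,V$) while avoiding the eigenbasis step, at no loss of generality. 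Two small points of care: your parenthetical that ``the two reflection axes make angle $\theta$'' is right only if the global sign is absorbed into $URU^\dag$ (so the second reflection is about the image under $U$ of the orthogonal direction, not about $U|0^\mj\>|\psi\>$ itself); and the cleanest way to finish is the paper's two-line computation of $S$ on the basis $\{|0^\mj\>V|\psi\>,|\Phi^\perp\>\}$, which also makes the validity for negative $\ell$ (via invertibility of the rotation on the invariant subspace) explicit, as you note.
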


Note that $R$ is not the reflection about the initial state, so \lem{oaa} does not follow from amplitude amplification alone.  However, in the context described in the lemma, it suffices to use a different reflection.

The motivation for oblivious amplitude amplification comes from work of Marriott and Watrous on in-place amplification of QMA \cite{MW05} (see also related work on quantum rewinding for zero-knowledge proofs \cite{Wat09} and on using amplitude amplification to obtain a quadratic improvement \cite{NWZ09}).  Specifically, the following technical lemma shows that amplitude amplification remains in a certain 2-dimensional subspace in which it is possible to perform the appropriate reflections.

\begin{lemma}[2D Subspace Lemma]
\label{lem:2d}
Let $U$ and $V$ be unitary matrices on $\mj+n$ qubits and $n$ qubits, respectively, and let $p \in (0,1)$.  Suppose that for any $n$-qubit state $|\psi\>$,
\be
  U|0^\mj\>|\psi\> = \sqrt{p}|0^\mj\>V|\psi\> + \sqrt{1-p}|\Phi^{\perp}\>,
\ee
where $|\Phi^{\perp}\>$ is an $(\mj+n)$-qubit state that depends on $|\psi\>$ and satisfies $\Pi|\Phi^\perp\>=0$, where $\Pi \defeq |0^{\mj}\>\<0^{\mj}| \otimes \id$. Then the state $|\Psi^\perp\>$ defined by the equation
\be 
U |\Psi^\perp\>\defeq\sqrt{1-p}|0^\mj\>V|\psi\> - \sqrt{p}|\Phi^\perp\>
\ee 
is orthogonal to $|\Psi\> \defeq |0^\mj\>|\psi\>$ and satisfies $\Pi|\Psi^\perp\>=0$.
\end{lemma}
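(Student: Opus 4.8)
The plan is to verify the two claimed properties of $|\Psi^\perp\>$ by direct computation, exploiting the fact that $U$ is unitary and hence preserves inner products. Write $|\Psi\>=|0^\mj\>|\psi\>$, so by hypothesis $U|\Psi\> = \sqrt{p}\,|0^\mj\>V|\psi\> + \sqrt{1-p}\,|\Phi^\perp\>$, and define $|\Psi^\perp\>$ by $U|\Psi^\perp\> = \sqrt{1-p}\,|0^\mj\>V|\psi\> - \sqrt{p}\,|\Phi^\perp\>$. (Since $U$ is unitary, this uniquely determines $|\Psi^\perp\>$ as $U^\dag$ applied to the right-hand side.) First I would record the key orthogonality relation coming from the hypothesis $\Pi|\Phi^\perp\>=0$: the vectors $|0^\mj\>V|\psi\>$ and $|\Phi^\perp\>$ are orthogonal, because the former lies in the range of $\Pi$ while the latter lies in its kernel. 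Both are unit vectors (the first obviously, the second because $U|\Psi\>$ is a unit vector and Pythagoras with coefficients $\sqrt p,\sqrt{1-p}$ forces $\norm{|\Phi^\perp\>}=1$).

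For orthogonality of $|\Psi^\perp\>$ to $|\Psi\>$, I would compute $\<\Psi|\Psi^\perp\> = \<\Psi|U^\dag U|\Psi^\perp\>$ and expand using the two displayed expressions for $U|\Psi\>$ and $U|\Psi^\perp\>$. This gives
\be
\<\Psi|\Psi^\perp\> = \sqrt{p}\sqrt{1-p}\,\<0^\mj|\<\psi|V^\dag V|\psi\>|0^\mj\> - \sqrt{p}\sqrt{1-p}\,\<\Phi^\perp|\Phi^\perp\> + (\text{cross terms}),
\ee
where the cross terms $\sqrt{p}\cdot(-\sqrt{p})\<0^\mj|V\psi\,|\,\Phi^\perp\>$ and $\sqrt{1-p}\sqrt{1-p}\<\Phi^\perp|0^\mj\>V|\psi\>$ vanish by the orthogonality relation above. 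The two surviving terms are $\sqrt{p(1-p)}\cdot 1$ and $-\sqrt{p(1-p)}\cdot 1$, which cancel. Hence $\<\Psi|\Psi^\perp\>=0$.

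For the second property $\Pi|\Psi^\perp\>=0$, the cleanest route is to observe that $\{|\Psi\>,|\Psi^\perp\>\}$ is (after normalization — and one should note $|\Psi^\perp\>$ is automatically a unit vector, again by unitarity of $U$ and Pythagoras on its defining expression) an orthonormal pair spanning the same two-dimensional space $W := \spn\{|\Psi\>,|\Psi^\perp\>\}$ as $\{|\Psi\>,\,U^\dag(|\Phi^\perp\>)\}$ does; equivalently $U(W) = \spn\{|0^\mj\>V|\psi\>,\,|\Phi^\perp\>\}$. Within $U(W)$, the range of $\Pi$ is exactly the line spanned by $|0^\mj\>V|\psi\>$ (since $|\Phi^\perp\>$ is killed by $\Pi$), so $\Pi U|\Psi^\perp\> = \sqrt{1-p}\,|0^\mj\>V|\psi\>$. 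But I want $\Pi|\Psi^\perp\>$, not $\Pi U|\Psi^\perp\>$. So instead I would argue directly: decompose $|\Psi^\perp\>$ in the orthonormal basis $\{|\Psi\>,|\Psi^\perp\>\}$ — trivially $|\Psi^\perp\> = |\Psi^\perp\>$ — and note $\Pi|\Psi\> = |\Psi\>$ since $|\Psi\> = |0^\mj\>|\psi\> \in \operatorname{range}(\Pi)$. Then $\Pi|\Psi^\perp\> = \Pi|\Psi^\perp\>$, and I compute its overlap with $|\Psi\>$: $\<\Psi|\Pi|\Psi^\perp\> = \<\Psi|\Psi^\perp\> = 0$ (using $\Pi|\Psi\>=|\Psi\>$ and the first property). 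This shows $\Pi|\Psi^\perp\>$ is orthogonal to $|\Psi\>$; to conclude it is zero I need that $\Pi|\Psi^\perp\>$ also lies in $W$, which follows if $\Pi$ preserves $W$ — but that is false in general, so this line needs care.

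The clean fix, and the step I expect to be the only real subtlety, is to compute $\norm{\Pi|\Psi^\perp\>}^2 = \<\Psi^\perp|\Pi|\Psi^\perp\>$ directly. Write $|\Psi^\perp\> = a|\Psi\> + b\,|\chi\>$ where $|\chi\> := U^\dag|\Phi^\perp\>$; matching against $U|\Psi^\perp\> = \sqrt{1-p}|0^\mj\>V|\psi\> - \sqrt p|\Phi^\perp\>$ and using $U|\Psi\> = \sqrt p|0^\mj\>V|\psi\> + \sqrt{1-p}|\Phi^\perp\>$, one solves $a = \sqrt{1-p}\,(\text{coeff}) $, etc.; explicitly, since $|0^\mj\>V|\psi\> = \tfrac{1}{?}(\,\cdot\,)$ — rather than chase this, simply note $\{U|\Psi\>, |\Phi^\perp\>\}$ and $\{|0^\mj\>V|\psi\>,|\Phi^\perp\>\}$ span the same plane, invert the $2\times2$ relation to get $|0^\mj\>V|\psi\> = \tfrac{1}{\sqrt p}(U|\Psi\> - \sqrt{1-p}|\Phi^\perp\>)$, substitute into $U|\Psi^\perp\>$, obtaining $U|\Psi^\perp\> = \tfrac{\sqrt{1-p}}{\sqrt p}U|\Psi\> - \tfrac{1-p}{\sqrt p}|\Phi^\perp\> - \sqrt p|\Phi^\perp\> = \tfrac{\sqrt{1-p}}{\sqrt p}U|\Psi\> - \tfrac{1}{\sqrt p}|\Phi^\perp\>$, hence $|\Psi^\perp\> = \tfrac{\sqrt{1-p}}{\sqrt p}|\Psi\> - \tfrac{1}{\sqrt p}|\chi\>$. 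Now apply $\Pi$: $\Pi|\Psi^\perp\> = \tfrac{\sqrt{1-p}}{\sqrt p}\Pi|\Psi\> - \tfrac{1}{\sqrt p}\Pi|\chi\> = \tfrac{\sqrt{1-p}}{\sqrt p}|\Psi\> - \tfrac{1}{\sqrt p}\Pi|\chi\>$. It remains to show $\Pi|\chi\> = (1-p)|\Psi\>$, i.e. $\Pi U^\dag|\Phi^\perp\> = (1-p)|0^\mj\>|\psi\>$. Since $\Pi U^\dag|\Phi^\perp\> = |0^\mj\>\<0^\mj|\otimes\id \cdot U^\dag|\Phi^\perp\>$, and by the hypothesis applied with the same $U$, for every state the component of $U^\dag(\text{anything in range }\Pi)$... this is exactly where one uses that the hypothesis holds for all $|\psi\>$: it implies $\Pi U^\dag \Pi = (1-p)\Pi$ as operators (equivalently $\<0^\mj|\<\psi'|U^\dag|0^\mj\>V|\psi\> $ relation), from which $\Pi U^\dag|\Phi^\perp\> = \tfrac{1}{\sqrt{1-p}}\Pi U^\dag(U|\Psi\> - \sqrt p|0^\mj\>V|\psi\>) = \tfrac{1}{\sqrt{1-p}}(\Pi|\Psi\> - \sqrt p\,\Pi U^\dag\Pi V|\psi\>|0^\mj\>) = \tfrac{1}{\sqrt{1-p}}(|\Psi\> - \sqrt p\cdot\sqrt{1-p}|\Psi\>)$... and assembling gives $\Pi|\Psi^\perp\> = 0$. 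So the main obstacle is establishing the operator identity $\Pi U^\dag \Pi = \sqrt{1-p}\,(\text{something})$ forced by the ``for all $|\psi\>$'' hypothesis; once that is in hand, everything else is bookkeeping, and in fact the slickest presentation bypasses it by working entirely with the pair $\{|\Psi\>,|\chi\>\}$, using unitarity of $U$ to transfer all inner-product computations to the $\{|0^\mj\>V|\psi\>,|\Phi^\perp\>\}$ side where orthogonality is manifest.
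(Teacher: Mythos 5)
Your first part (orthogonality of $|\Psi^\perp\>$ to $|\Psi\>$ via unitarity of $U$) is correct and matches the paper, and your reduction of the second claim to a computation is sound: the decomposition $|\Psi^\perp\> = \tfrac{\sqrt{1-p}}{\sqrt p}|\Psi\> - \tfrac{1}{\sqrt p}\,U^\dag|\Phi^\perp\>$ is derived correctly, and your instinct that the ``for all $|\psi\>$'' hypothesis must be converted into an operator identity is exactly the idea the paper uses (the paper establishes $(\<0^\mj|\otimes\id)\,U^\dag\Pi U\,(|0^\mj\>\otimes\id)=p\,\id$ and then expands $U^\dag|0^\mj\>V|\psi\>$ in the pair $\{|\Psi\>,|\Psi^\perp\>\}$).

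However, the crucial step is wrong as written. The operator identity you invoke, $\Pi U^\dag \Pi=(1-p)\Pi$, is false: applying the hypothesis to every $|\psi\>$ and using linearity forces $\Pi U\Pi=\sqrt{p}\,\bigl(|0^\mj\>\<0^\mj|\otimes V\bigr)$, hence $\Pi U^\dag\Pi=\sqrt{p}\,\bigl(|0^\mj\>\<0^\mj|\otimes V^\dag\bigr)$, which is not proportional to $\Pi$. Relatedly, the target you state, $\Pi U^\dag|\Phi^\perp\>=(1-p)|\Psi\>$, has the wrong coefficient (your own decomposition needs $\Pi U^\dag|\Phi^\perp\>=\sqrt{1-p}\,|\Psi\>$), and your final line evaluates $\Pi U^\dag|\Phi^\perp\>$ to $\tfrac{1-\sqrt{p(1-p)}}{\sqrt{1-p}}|\Psi\>$, which, inserted into your decomposition, yields $\Pi|\Psi^\perp\>=\bigl(1-\sqrt{p/(1-p)}\bigr)|\Psi\>$ --- nonzero except at $p=1/2$ (and in the application $p=1/4$). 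So ``assembling gives $\Pi|\Psi^\perp\>=0$'' is asserted, not derived, and with your stated quantities it is false. The gap is repairable along your own lines: from $|\Phi^\perp\>=\tfrac{1}{\sqrt{1-p}}\bigl(U|\Psi\>-\sqrt{p}\,|0^\mj\>V|\psi\>\bigr)$ and the correct identity one gets $\Pi U^\dag|\Phi^\perp\>=\tfrac{1}{\sqrt{1-p}}\bigl(|\Psi\>-p|\Psi\>\bigr)=\sqrt{1-p}\,|\Psi\>$, which makes $\Pi|\Psi^\perp\>$ vanish; but as submitted the key computation rests on a false identity with inconsistent coefficients, so the proof is incomplete.
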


\begin{proof}
For any $|\psi\>$, let $|\Phi\> \defeq |0^\mj\>V|\psi\>$. Then for all $|\psi\>$, we have 
\begin{align}
U|\Psi\> &= \sqrt{p}|\Phi\> + \sqrt{1-p}|\Phi^{\perp}\> \label{eq:psi} \\
U|\Psi^{\perp}\> &= \sqrt{1-p}|\Phi\> - \sqrt{p}|\Phi^{\perp}\>, \label{eq:psiperp}
\end{align}
where $\Pi|\Phi^{\perp}\> = 0$. By taking the inner product of these two equations, we get $\<\Psi|\Psi^{\perp}\> = 0$. The lemma asserts that not only is $|\Psi^{\perp}\>$ orthogonal to $|\Psi\>$, but also $\Pi |\Psi^{\perp}\> = 0$.

To show this, consider the operator 
\be
  Q \defeq (\<0^\mj| \otimes \id)U^{\dag}\Pi U(|0^\mj\> \otimes \id).
\ee
For any state $|\psi\>$,
\be
  \<\psi|Q|\psi\> 
  = \norm{\Pi U |0^\mj\>|\psi\>}^2 
  = \norm{\Pi (\sqrt{p}|\Phi\> + \sqrt{1-p}|\Phi^{\perp}\>)}^2 
  = \norm{\sqrt{p}|\Phi\>}^2 
  = p.
\ee
In particular, this holds for a basis of eigenvectors of $Q$, so $Q = p \id$.

Thus for any $|\psi\>$, we have
\be
  p|\psi\> = Q|\psi\> = (\<0^\mj| \otimes \id)U^{\dag}\Pi U(|0^\mj\> \otimes \id)|\psi\> = (\<0^\mj| \otimes \id)U^{\dag}\Pi U|\Psi\> = \sqrt{p} (\<0^\mj| \otimes \id)U^{\dag}|\Phi\>.
\ee
From \eq{psi} and \eq{psiperp} we get $U^{\dag} |\Phi\> = \sqrt{p} |\Psi\> +\sqrt{1-p} |\Psi^{\perp}\>$. Plugging this into the previous equation, we get
\begin{align}
  p |\psi\> 
  &= \sqrt{p}(\<0^\mj|\otimes \id) ( \sqrt{p} |\Psi\> + \sqrt{1-p} |\Psi^{\perp}\>)
  = p |\psi\> + \sqrt{p(1-p)} (\<0^\mj|\otimes \id)|\Psi^{\perp}\>.
\end{align}
This gives us $\sqrt{p(1-p)}(\<0^\mj|\otimes \id)|\Psi^{\perp}\>=0$. Since $p \in (0,1)$, this implies $\Pi |\Psi^{\perp}\> = 0$.
\end{proof}

Note that this fact can also be viewed as a consequence of Jordan's Lemma \cite{Jor75}, which decomposes the space into a direct sum of 1- and 2-dimensional subspaces that are invariant under the projectors $\Pi$ and $U^\dagger \Pi U$.  In this decomposition, $\Pi$ and $U^\dagger \Pi U$ are rank-1 projectors within each 2-dimensional subspace. Let $|0\>|\psi_i\>$ denote the eigenvalue-1 eigenvector of $\Pi$ within the $i$th 2-dimensional subspace $S_i$.  Since $S_i$ is invariant under $U^\dagger \Pi U$, the state $U^\dagger \Pi U|0\>|\psi_i\> = \sqrt{p}U^\dagger|0\>V|\psi_i\>$ belongs to $S_i$. Let $|\Phi_i^{\perp}\>$ be such that $|0\>|\psi_i\> = U^{\dag}(\sqrt{p}|0\>V|\psi_i\> + \sqrt{1-p}|\Phi_i^{\perp}\>)$. Then $|\Psi_i^{\perp}\> \defeq U^{\dag} (\sqrt{1-p}|0\>V|\psi_i\> - \sqrt{p}|\Phi_i^{\perp}\>)$ is in $S_i$, since it is a linear combination of $|0\>|\psi_i\>$ and $U^\dagger \Pi U|0\>|\psi_i\>$. However, $|\Psi_i^{\perp}\>$ is orthogonal to $|0\>|\psi_i\>$ and is therefore an eigenvalue-0 eigenvector of $\Pi$, since $\Pi$ is a rank-1 projector in $S_i$. Thus for each $i$, $|\psi_i\>$ and $|\Psi_i^\perp\>$ satisfy the conditions of the lemma.  We claim that the number of 2-dimensional subspaces (and hence the number of states $|\psi_i\>$) is $2^n$.  There are at most $2^n$ such subspaces since $\Pi$ has rank $2^n$ and is rank-1 in each subspace. There also must be at least $2^n$ 2-dimensional subspaces, since otherwise there would be a state $|0\>|\psi\>$ that is in a 1-dimensional subspace, i.e., is invariant under both $\Pi$ and $U^\dagger \Pi U$. This is not possible because $U^\dagger \Pi U$ acting on $|0\>|\psi\>$ yields  $\sqrt{p}U^\dagger|0\>V|\psi\>$, which is a subnormalized state since $p<1$.  Finally, since there are $2^n$ linearly independent $|\psi_i\>$, an arbitrary state $|\psi\>$ can be written as a linear combination of $|\psi_i\>$, and the result follows.

With the help of \lem{2d} we can prove \lem{oaa}.

\begin{proof}[Proof of \protect{\lem{oaa}}]
Since \lem{2d} shows that the evolution occurs within a two-dimensional subspace (or its image under $U$), the remaining analysis is essentially the same as in standard amplitude amplification.  For any $|\psi\>$, we define  $|\Psi\>\defeq |0^\mj\>|\psi\>$ and $|\Phi\>\defeq |0^\mj\>V|\psi\>$, so that
\begin{align}
	U |\Psi\> &= \sin(\theta) |\Phi\> + \cos(\theta) |\Phi^\perp\>,
\end{align}
where $\theta\in(0,\pi/2)$ is such that $\sqrt{p}=\sin(\theta)$.
We also define $|\Psi^\perp\>$ through the equation
\begin{align}
	U |\Psi^\perp\> \defeq \cos(\theta) |\Phi\> - \sin(\theta) |\Phi^\perp\>.
\end{align}
By \lem{2d}, we know that $\Pi|\Psi^\perp\>=0$. Using these two equations, we have 
\begin{align}
	U^\dag |\Phi\> 
	&= \sin(\theta) |\Psi\> + \cos(\theta) |\Psi^\perp\> \\
	U^\dag |\Phi^\perp\>
	&= \cos(\theta) |\Psi\> - \sin(\theta) |\Psi^\perp\>.
\end{align}
Then a straightforward calculation gives
\begin{align}
  S |\Phi\> 
  &= -U R U^\dag |\Phi\> \nonumber \\
  &= -U R (\sin(\theta) |\Psi\> + \cos(\theta) |\Psi^\perp\>) \nonumber \\
  &= -U (\sin(\theta) |\Psi\> - \cos(\theta) |\Psi^\perp\>) \nonumber \\
  &= \bigl(\cos^2(\theta)-\sin^2(\theta)\bigr) |\Phi\> 
     - 2\cos(\theta)\sin(\theta) |\Phi^\perp\> \nonumber \\
  &= \cos(2\theta) |\Phi\> - \sin(2\theta) |\Phi^\perp\>.
\end{align}
Similarly,
\begin{align}
  S |\Phi^\perp\>
  &= U R U^\dag |\Phi^\perp\> \nonumber \\
  &= U R (\cos(\theta) |\Psi\> - \sin(\theta) |\Psi^\perp\>) \nonumber \\
  &= U (\cos(\theta) |\Psi\> + \sin(\theta) |\Psi^\perp\>) \nonumber \\
  &= 2\cos(\theta)\sin(\theta) |\Phi\> 
     + \bigl(\cos^2(\theta)-\sin^2(\theta)\bigr) |\Phi^\perp\> \nonumber \\
  &= \sin(2\theta) |\Phi\> + \cos(2\theta) |\Phi^\perp\>.
\end{align}
Thus we see that $S$ acts as a rotation by $2\theta$ in the subspace $\spn\{|\Phi\>,|\Phi^\perp\>\}$, and the result follows. 
\end{proof}

We are now ready to complete the proof of \lem{main} using \lem{approxsegment} and \lem{oaa}.

\begin{proof}[Proof of \protect\lem{main}]
We are given a fractional-query algorithm that makes at most 1 query. This can be split into 5 steps that make at most 1/5 queries each in the fractional-query model.
We perform the analysis for these steps of size $1/5$; the difference is only a constant factor that does not affect the asymptotics.
We convert this fractional-query algorithm into a discrete-query algorithm with some error.

From \lem{approxsegment}, we know that for any such fractional-query algorithm $V$, there is an algorithm that makes $O\big(\frac{\log(1/\epsilon)}{\log\log(1/\epsilon)}\big)$ discrete queries and maps the state $|0^{m+1}\>|\psi\>$ to a state that is at most $\epsilon$ far from $\frac{1}{2}|0^{m+1}\>e^{i\vartheta}V|\psi\> + \frac{\sqrt{3}}{2}|\Phi\>$, for some state $|\Phi\>$ that satisfies $(|0^{m+1}\>\<0^{m+1}| \otimes \id)|\Phi\>=0$ and some $\vartheta \in [0,2\pi)$.
We wish to perform the unitary $V$ on the input state $|\psi\>$ approximately.

The unitary operation $U$ defined in \lem{approxsegment} maps $|0^{m+1}\>|\psi\> \mapsto \frac{1}{2}|0^{m+1}\>e^{i\vartheta}V|\psi\> + \frac{\sqrt{3}}{2}|\Phi\>$.  The operation $U$ satisfies the conditions of \lem{oaa} with $\mj=m+1$ and $\sin^2(\theta)=1/4$. %, i.e., $\theta=\pi/6$.
Thus a single application of $S$ (using three applications of $U$) would produce the state $V|\psi\>$ exactly.

While we cannot necessarily perform $U$, using \lem{approxsegment} we can perform another unitary operation $\approxU$ that is within error $\epsilon/3$ of $U$. Since we only perform the unitary three times, we obtain a state $\epsilon$-close to $V|\psi\>$ when we use $\approxU$ instead of $U$.
\end{proof}

By straightforwardly concatenating such simulations with sufficiently small error, we obtain simulations for longer times. This establishes the following lemma, which is the query-complexity part of \thm{cquerysim}.

\begin{lemma}
\label{lem:cquerysim}
An algorithm with continuous- or fractional-query complexity $T \ge 1$ can be simulated with error at most $\epsilon$ with $O\big(T\frac{\log(T/\epsilon)}{\log\log(T/\epsilon)}\big)$ queries.
\end{lemma}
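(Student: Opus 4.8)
The plan is to reduce the general statement to \lem{main} by the obvious divide-and-conquer strategy. Given an algorithm in the continuous- or fractional-query model with complexity $T \ge 1$, first invoke \thm{equiv} to replace the continuous-query algorithm by a fractional-query algorithm with the same query complexity $T$ (incurring only an exponentially small penalty in the final error budget, which is harmless since the error appears inside a logarithm). So it suffices to simulate a fractional-query algorithm $W = U_m Q^{\alpha_m} U_{m-1}\cdots U_1 Q^{\alpha_1} U_0$ with $\sum_i \alpha_i \le T$.

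The key step is to chop the sequence of fractional queries into $k \defeq \ceil{T}$ consecutive blocks, each of which has fractional-query complexity at most $1$. Concretely, I would scan the gates $Q^{\alpha_1}, Q^{\alpha_2}, \dots$ in order, accumulating $\alpha$-weight, and cut off a block whenever the running total would exceed $1$ (splitting a single $Q^{\alpha_i}$ into $Q^{\alpha_i'} Q^{\alpha_i''}$ with $\alpha_i' + \alpha_i'' = \alpha_i$ if necessary, which is legitimate since $Q^{\alpha'}Q^{\alpha''} = Q^{\alpha'+\alpha''}$ by the definition of the fractional-query oracle). This yields $W = W_k W_{k-1} \cdots W_1$ where each $W_j$ is a unitary implementable by a fractional-query algorithm with query complexity at most $1$, so \lem{main} applies to each $W_j$.

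Now apply \lem{main} to each $W_j$ with error parameter $\epsilon' \defeq \epsilon/k$. This produces, for each $j$, a discrete-query circuit $\tilde W_j$ with $\norm{\tilde W_j - W_j} \le \epsilon'$ using $O\big(\frac{\log(1/\epsilon')}{\log\log(1/\epsilon')}\big) = O\big(\frac{\log(k/\epsilon)}{\log\log(k/\epsilon)}\big)$ queries. Composing, $\tilde W \defeq \tilde W_k \cdots \tilde W_1$ satisfies $\norm{\tilde W - W} \le \sum_{j=1}^k \norm{\tilde W_j - W_j} \le k \epsilon' = \epsilon$ by the standard telescoping bound (unitary invariance of the norm plus the triangle inequality). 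The total query count is $k$ times the per-block cost, namely $O\big(k \cdot \frac{\log(k/\epsilon)}{\log\log(k/\epsilon)}\big)$. Since $k = \ceil{T} = \Theta(T)$ (using $T \ge 1$), this is $O\big(T \frac{\log(T/\epsilon)}{\log\log(T/\epsilon)}\big)$, as claimed. (Here one should note that $\tilde W_j$ acts on its own ancilla register in the $|0^{m_j+1}\>$ state; since \lem{main} produces a genuine unitary that performs $W_j$ exactly on the $|0^{m_j+1}\>$ subspace up to error $\epsilon'$, these ancillas can be reused across blocks, or kept separate — either way the bookkeeping is routine.)

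There is essentially no hard step here; this is a routine concatenation argument. The only point requiring a modicum of care is confirming that the per-block error $\epsilon/k$ does not blow up the asymptotics: since $k = \Theta(T)$, we have $\log(k/\epsilon) = \Theta(\log(T/\epsilon))$ when $T/\epsilon$ is bounded away from $1$, and likewise for the $\log\log$, so the $1/k = 1/\Theta(T)$ factors in the error are absorbed harmlessly inside the logarithm. The substantive mathematical content all lives in \lem{main} (and hence in \lem{oaa} and \lem{approxsegment}); \lem{cquerysim} is just the statement that a constant-complexity simulation composes.
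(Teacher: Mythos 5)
Your argument is correct and is essentially the paper's own proof: convert to the fractional-query model via \thm{equiv}, split into $\ceil{T}$ blocks of fractional-query complexity at most $1$, apply \lem{main} to each block with per-block error $\Theta(\epsilon/T)$, and compose using subadditivity of error. The only cosmetic difference is the error bookkeeping (the paper budgets $\epsilon/2$ for \thm{equiv} and $\epsilon/2\ceil{T}$ per block, while your split sums to marginally more than $\epsilon$), which is fixed by a trivial rescaling of constants.
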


\begin{proof}
Given an algorithm that runs for time $T$ in the continuous-query model, we can convert it to an algorithm with fractional-query complexity $T$ with error at most $\epsilon/2$ using \thm{equiv}. Given a fractional-query algorithm that makes $T$ queries, we can divide it into $\ceil{T}$ pieces that make at most 1 query each and invoke \lem{main} with error $\epsilon/2\ceil{T}$ to obtain $\ceil{T}$ discrete-query algorithms, each of which makes $O\big(\frac{\log(\ceil{T}/\epsilon)}{\log\log(\ceil{T}/\epsilon)}\big)$ queries.  When run sequentially on the input state, they yield an output that is $\epsilon/2$-close to the correct output (by subadditivity of error). Thus the final state has error at most $\epsilon$.
\end{proof}

%%%%%%%%%%%%%%%%%%%%%%%%%%%%%%%%%%%%%%%%%%%%%%%%%%%%%%%%%%%%%%%%%%%%%%%%%%%%%%
\section{Hamiltonian simulation}
\label{sec:Ham}

We now apply the results of the previous section to give improved algorithms for simulating sparse Hamiltonians.   The main result of this section is the reduction from an instance of the sparse Hamiltonian simulation problem to a fractional-query algorithm, which establishes \lem{sparse}, the query-complexity part of \thm{sparse}. The time-complexity part of \thm{sparse} is established in \sec{time}.

To see the connection between the fractional-query model and Hamiltonian simulation, consider the example of a Hamiltonian $H = H_1 + H_2$, where $H_1$ and $H_2$ have eigenvalues $0$ and $\pi$, so that $e^{-iH_1}$ and $e^{-iH_2}$  have eigenvalues $\pm 1$.  From the Lie product formula, we have $e^{-i(H_1+H_2)T} \approx (e^{-iH_1T/r}e^{-iH_2T/r})^r$ for large $r$.  If we think of $H_1$ and $H_2$ as query Hamiltonians, this is a fractional-query algorithm that makes $T$ queries to each Hamiltonian. 
We might therefore expect that $O\big(T\frac{\log(T/\epsilon)}{\log\log(T/\epsilon)}\big)$ discrete queries to $e^{-iH_1}$ and $e^{-iH_2}$ suffice to implement $e^{-i(H_1+H_2)T}$ to precision $\epsilon$. Here we do this by generalizing the results of the previous section to allow multiple fractional-query oracles.

For a set $\mathcal{Q} = \{Q_1, \ldots, Q_\eta\}$ of unitary matrices with eigenvalues $\pm 1$, we say $U$ is a fractional-query algorithm over $\mathcal{Q}$ with cost $T$ if $U$ can be written as $U_\lambda Q^{\alpha_\lambda}_{i_\lambda}U_{\lambda-1}\cdots U_{1}Q^{\alpha_1}_{i_1}U_0$, where $0<\alpha_i\leq 1$, $\sum_{i=1}^{\lambda} \alpha_i = T$, and $i_j \in [\eta]$ for all $j\in [\lambda]$.

\begin{restatable}[Multiple-query model]{theorem}{MULTIQUERY}
\label{thm:multiquery}
Let $\mathcal{Q} = \{Q_1, \ldots, Q_\eta\}$ be a set of unitaries with eigenvalues $\pm 1$. Let $U$ be a fractional-query algorithm over $\mathcal{Q}$ with cost $T$.  Let $Q \defeq \sum_{j=1}^\eta |j\>\<j| \otimes Q_j$. Then $U$ can be implemented by a circuit that makes $O\big(T\frac{\log(T/\epsilon)}{\log\log(T/\epsilon)}\big)$ queries to $Q$ with error at most $\epsilon$.
\end{restatable}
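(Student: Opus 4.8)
The plan is to reduce a fractional-query algorithm over the set $\mathcal{Q}$ to a fractional-query algorithm over the \emph{single} oracle $Q = \sum_{j=1}^\eta |j\>\<j| \otimes Q_j$ with the \emph{same} cost $T$, and then invoke \lem{cquerysim}. Two elementary facts make this work. First, $Q$ is itself a unitary with eigenvalues $\pm 1$: its spectrum is the union of the spectra of the $Q_j$ (extending $Q$ by the identity on any index basis states not among $|1\>,\dots,|\eta\>$ so that the register can be taken to consist of $\ceil{\log_2\eta}$ qubits), so $Q$ is a legitimate query oracle in the sense of \thm{cquerysim}. Second, since $Q$ is block diagonal, for every $\alpha\in[0,1]$ the fractional power (defined, as in \lem{gadget}, by $Q^\alpha = \tfrac{1}{2}(\id+Q) + e^{-i\pi\alpha}\tfrac{1}{2}(\id-Q)$) distributes over the direct sum: $Q^\alpha = \sum_{j=1}^\eta |j\>\<j| \otimes Q_j^\alpha$.

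Concretely, I would adjoin an index register $\mathcal{I}$ (of dimension at least $\eta$), initialized to $|0\>$, and for the $\lambda$th step of $U$ — which applies $Q_{i_\lambda}$ with fraction $\alpha_\lambda$ — pick any $x$-independent unitary $W_\lambda$ on $\mathcal{I}$ with $W_\lambda|0\> = |i_\lambda\>$. Using the block-diagonal form of $Q$, one checks directly that for every data-register state $|\psi\>$,
\[
  (W_\lambda^\dag \otimes \id)\, Q^{\alpha_\lambda}\, (W_\lambda \otimes \id)\, |0\>_{\mathcal{I}}|\psi\> = |0\>_{\mathcal{I}}\, Q_{i_\lambda}^{\alpha_\lambda}|\psi\>,
\]
so the index ancilla is returned to $|0\>$ and the data register undergoes exactly $Q_{i_\lambda}^{\alpha_\lambda}$. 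Stringing these identities together and folding each $U_i$ with the two adjacent $W$-gates into one $x$-independent unitary, we obtain a fractional-query algorithm $\tilde U$ that uses only the single oracle $Q$, has exactly the same fractions $\alpha_1,\dots,\alpha_\lambda$ (hence cost $\sum_i\alpha_i = T$ and each $\alpha_i\le 1$), and satisfies $\tilde U\,(|0\>_{\mathcal{I}}|\psi\>) = |0\>_{\mathcal{I}}\,U|\psi\>$ for every $|\psi\>$.

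Finally, I would apply \lem{cquerysim} to $\tilde U$ (assuming $T\ge 1$, which we may arrange by padding with at most one copy of $Q\cdot Q=\id$): it is simulated within error $\epsilon$ using $O\big(T\frac{\log(T/\epsilon)}{\log\log(T/\epsilon)}\big)$ discrete queries to $Q$. Restricting the resulting circuit to inputs with $\mathcal{I}$ (and the simulation ancillas) in state $|0\>$ then implements $U$ on the data register within error $\epsilon$, as claimed.

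I expect no serious obstacle: all the real content lives in \thm{cquerysim}/\lem{main}, and this step is a bookkeeping reduction. The only point that needs care is verifying that the ``load the index, apply $Q^{\alpha_\lambda}$, unload the index'' sandwich reproduces $Q_{i_\lambda}^{\alpha_\lambda}$ \emph{exactly} (not merely approximately) and that fractional powers distribute over the block decomposition, so that the cost is preserved exactly and the factor $\eta$ never enters the query count.
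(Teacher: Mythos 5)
Your proposal is correct and follows essentially the same route as the paper: adjoin an index register, note that the block-diagonal oracle $Q$ has eigenvalues $\pm 1$ so its fractional powers act blockwise, fold the index-setting unitaries into the $x$-independent unitaries to get a single-oracle fractional-query algorithm of the same cost $T$, and invoke \lem{cquerysim}. The only cosmetic difference is that you reset the index register to $|0\>$ around each query while the paper rotates it directly from $|i_j\>$ to $|i_{j+1}\>$, and the paper adds an explicit one-line remark that \thm{cquerysim} (stated for diagonal queries) applies to any $Q$ with eigenvalues $\pm 1$ by working in a basis where $Q$ is diagonal, a point you assert without elaboration.
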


\begin{proof}
We prove this by reduction to \thm{cquerysim}. We know that $U$ can be written in the form $U = U_\lambda Q^{\alpha_\lambda}_{i_\lambda}U_{\lambda-1}\cdots U_{1}Q^{\alpha_1}_{i_1}U_0$, where $0<\alpha_i\leq 1$, $\sum_{i=1}^{\lambda} \alpha_i = T$, and $i_j \in [\eta]$ for all $j\in [\lambda]$.  

We first express $U$ as a fractional-query algorithm over $\mathcal{Q}$ with cost $T$. To do this, we add an extra control register to the original circuit for $U$.  This register holds the index $i_j$ of the next query to be performed.  We start with this register initialized to $|0\>$.  Let $V_0$ be any unitary that maps $|0\>$ to $|i_1\>$.  The action of $Q^{\alpha_1}_{i_1}U_0$ on any state $|\psi\>$ is the same as the action of $Q^{\alpha_1}(V_0 \otimes U_0)$ on the second register of $|0\>|\psi\>$. Similarly, for all $j \in [\lambda]$, let $V_j$ be any unitary that maps $|i_j\>$ to $|i_{j+1}\>$, where $i_{\lambda+1} \defeq 0$.  Thus  the circuit $(V_\lambda \otimes U_\lambda)Q^{\alpha_\lambda}(V_{\lambda-1} \otimes U_{\lambda-1})\cdots (V_1 \otimes U_{1})Q^{\alpha_1}(V_0 \otimes U_0)$ maps $|0\>|\psi\>$ to $|0\>U|\psi\>$.

This construction gives a fractional-query algorithm with fractional-query complexity $T$ given oracle access to $Q$. Since $Q$ has eigenvalues $\pm 1$, we can invoke \thm{cquerysim} to give a discrete-query algorithm that makes $O\big(T\frac{\log(T/\epsilon)}{\log\log(T/\epsilon)}\big)$ queries to $Q$ and performs $U$ up to error $\epsilon$.
\thm{cquerysim} assumes the queries are diagonal in the computational basis, whereas here we assume only that $Q$ has eigenvalues $\pm1$.
However, these two scenarios are equivalent since the target system can be considered in a basis where $Q$ is diagonal.
Therefore \thm{cquerysim} applies to the slightly more general scenario considered here.
\end{proof}

This theorem allows us to simulate a Hamiltonian $H = H_1 + \cdots + H_\eta$ for time $t$ using resources that scale only slightly superlinearly in $\eta t$, provided each $H_j$ has eigenvalues $0$ and $\pi$ (or more generally, by rescaling, provided each $H_j$ has the same two eigenvalues).  For any $\epsilon > 0$, there is a sufficiently large $r$ so that $e^{-iHt}$ is $\epsilon$-close to $(e^{-iH_1t/r} \cdots e^{-iH_\eta t/r})^r$, which is of the form required by \thm{multiquery} if $e^{-iH_j}$ has eigenvalues $\pm 1$.  Since $\norm{e^{-iHt} - (e^{-iH_1t/r} \cdots e^{-iH_\eta t/r})^r} = O((\eta\bar{h}t)^2/r)$, where $\bar{h} \defeq \max_j \norm{H_j}$ \cite{BAC+07}, 
choosing $r = \Omega((\eta\bar{h}t)^2/\epsilon)$ is sufficient to achieve an $\epsilon$-approximation.
Since our Hamiltonians $H_j$ have constant norm, we have $\bar{h} = O(1)$ and get the following corollary.

\begin{corollary}
\label{cor:hamsim}
For a Hamiltonian $H = \sum_{j=1}^\eta H_j$, where $H_j$ has eigenvalues $0$ and $\pi$ for all $j \in [\eta]$, define $Q \defeq  \sum_j |j\>\<j| \otimes e^{-iH_j}$. The unitary $e^{-iHt}$ can be implemented by a fractional-query algorithm over $Q$, up to error $\epsilon$, with query complexity $\tau = \eta t$ and $O(\eta^3 t^2/\epsilon)$ fractional-query gates. Thus $e^{-iHt}$ can be implemented up to error $\epsilon$ by a circuit with $O\big(\tau\frac{\log(\tau/\epsilon)}{\log\log(\tau/\epsilon)}\big)$ invocations of $Q$.
\end{corollary}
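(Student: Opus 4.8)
The plan is to combine the first-order Lie product formula with \thm{multiquery}. First I would fix a large integer $r$ and approximate $e^{-iHt}$ by $W \defeq \big(e^{-iH_1 t/r}\cdots e^{-iH_\eta t/r}\big)^r$. The standard product-formula error estimate quoted above gives $\norm{e^{-iHt}-W} = O\big((\eta\bar h t)^2/r\big)$ with $\bar h \defeq \max_j\norm{H_j}$; since each $H_j$ has spectrum $\{0,\pi\}$ we have $\bar h = \pi = O(1)$, so choosing $r = \Theta\big((\eta t)^2/\epsilon\big)$ (and, say, $r \ge t$ so that the individual exponents below are at most $1$) makes $\norm{e^{-iHt}-W} \le \epsilon/2$.

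Next I would check that each factor $e^{-iH_j t/r}$ is literally a fractional query $Q_j^{\alpha}$ with $\alpha = t/r \in (0,1]$. Indeed $Q_j = e^{-iH_j}$ has eigenvalues $\pm1$, with the $+1$ eigenspace equal to $\ker H_j$ and the $-1$ eigenspace equal to the $\pi$-eigenspace of $H_j$; the definition $Q_j^\alpha = \tfrac12(\id+Q_j) + e^{-i\pi\alpha}\tfrac12(\id-Q_j)$ then coincides with $e^{-i\alpha H_j}$ on each of these two subspaces. Hence $W$ is a fractional-query algorithm over $\mathcal Q = \{Q_1,\dots,Q_\eta\}$ in the sense defined just before \thm{multiquery}, with $\lambda = \eta r$ gates, each carrying weight $\alpha = t/r$. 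Its fractional-query complexity is therefore $\eta r \cdot (t/r) = \eta t = \tau$ and the number of fractional-query gates is $\eta r = O(\eta^3 t^2/\epsilon)$, which (taking error $\epsilon/2$, which only affects constants) establishes the first sentence of the corollary.

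Finally I would apply \thm{multiquery} to this fractional-query algorithm over $\mathcal Q$ with error parameter $\epsilon/2$ and cost $\tau$, obtaining a discrete-query circuit that makes $O\big(\tau\frac{\log(\tau/\epsilon)}{\log\log(\tau/\epsilon)}\big)$ queries to $Q = \sum_j |j\>\<j| \otimes Q_j$ and implements $W$ within $\epsilon/2$; composing with the product-formula approximation and using the triangle inequality yields $e^{-iHt}$ within $\epsilon$. There is no substantive obstacle here: the only points requiring care are the routine bookkeeping of the two error contributions and ensuring $\alpha = t/r \le 1$ (automatic once $r \ge t$, which holds for the chosen $r$ whenever $\epsilon$ is small and can otherwise be enforced by enlarging $r$), together with the implicit assumption $\tau \ge 1$ inherited from \thm{cquerysim}.
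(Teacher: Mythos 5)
Your proposal is correct and follows essentially the same route as the paper: approximate $e^{-iHt}$ by the first-order product formula $\bigl(e^{-iH_1 t/r}\cdots e^{-iH_\eta t/r}\bigr)^r$ with $r = \Theta\bigl((\eta t)^2/\epsilon\bigr)$ using the $O\bigl((\eta \bar h t)^2/r\bigr)$ error bound and $\bar h = O(1)$, observe each factor is a fractional query $Q_j^{t/r}$, and then invoke \thm{multiquery}. Your explicit check that $e^{-i\alpha H_j}$ coincides with the paper's definition of $Q_j^\alpha$, and your $\epsilon/2+\epsilon/2$ error split, are details the paper leaves implicit but do not change the argument.
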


To simulate arbitrary sparse Hamiltonians, we decompose them into Hamiltonians with this property.  To do this we first decompose the Hamiltonian into a sum of 1-sparse Hamiltonians (with at most 1 nonzero entry in any row or column).  Second, we decompose 1-sparse Hamiltonians into Hamiltonians of the required form.  

\begin{restatable}{lemma}{ONESPARSE}
\label{lem:1sparse}
For any 1-sparse Hamiltonian $G$ and precision $\gamma>0$, there exist $O(\norm{G}_{\max}/\gamma)$ Hamiltonians $G_j$ with eigenvalues $\pm 1$ such that $\norm{{G} - \gamma \sum_{j}  G_j}_{\max} \leq \sqrt{2}\gamma$.
\end{restatable}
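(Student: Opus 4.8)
The plan is to exploit the rigid structure of $1$-sparse Hermitian matrices. Up to a permutation of the computational basis, a $1$-sparse Hamiltonian $G$ is block diagonal with blocks of size $1$ and $2$: a singleton block on an index $i$ is $[G_{ii}]$ with $G_{ii}\in\R$, and a size-$2$ block on a pair $\{i,j\}$ with $j\neq i$ is $\left(\begin{smallmatrix}0 & z\\ \bar z & 0\end{smallmatrix}\right)$ with $z=G_{ij}$. Here $1$-sparsity is used crucially: on such a pair it forces $G_{ii}=G_{jj}=0$, since rows $i$ and $j$ already have their unique nonzero entry off the diagonal, so the size-$2$ blocks are purely off-diagonal. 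I will construct all the $G_j$ to be block diagonal with respect to this same decomposition, so that $\norm{G-\gamma\sum_j G_j}_{\max}$ is just the largest error over the blocks.

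On a size-$2$ block I rewrite $\left(\begin{smallmatrix}0 & z\\ \bar z & 0\end{smallmatrix}\right)=\mathrm{Re}(z)\,\sigma_x-\mathrm{Im}(z)\,\sigma_y$, a real combination of two Hermitian involutions, and approximate the two coefficients independently by integer multiples of $\gamma$. Concretely, set $h\defeq\norm{G}_{\max}$ and $N_1=N_2\defeq\max\{1,\ceil{h/\gamma}\}$, so $N\defeq N_1+N_2=O(h/\gamma)$, and split $[N]=S_1\sqcup S_2$ with $|S_1|=N_1$. For $j\in S_1$, let $G_j$ act as $\pm\sigma_x$ on each size-$2$ block and as $\pm 1$ on each singleton; for $j\in S_2$, as $\pm\sigma_y$ on each size-$2$ block and $\pm 1$ on each singleton. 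Every such $G_j$ is Hermitian and block diagonal with each block squaring to $\id$, hence $G_j^2=\id$ and its eigenvalues are $\pm 1$, as required (and, incidentally, each $G_j$ is itself $1$-sparse).

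The signs are then chosen block by block. On a size-$2$ block with value $z$, pick the signs of the $S_1$-terms so their sum is the integer of the same parity as $N_1$ closest to $\mathrm{Re}(z)/\gamma$ — which lies in $[-N_1,N_1]$ since $|\mathrm{Re}(z)|/\gamma\le|z|/\gamma\le h/\gamma\le N_1$, and is within distance $1$ of $\mathrm{Re}(z)/\gamma$ since consecutive integers of a fixed parity differ by $2$ — and pick the $S_2$-signs so their sum is the integer of parity $N_2$ closest to $-\mathrm{Im}(z)/\gamma$; on a singleton with value $a$, pick all $N$ signs so their sum is the integer of parity $N$ closest to $a/\gamma$ (again in range, since $|a|/\gamma\le h/\gamma\le N$). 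The resulting block error on a singleton is $|a-\gamma k|\le\gamma$, and on a size-$2$ block it is $u\,\sigma_x+v\,\sigma_y=\left(\begin{smallmatrix}0 & u-iv\\ u+iv & 0\end{smallmatrix}\right)$ with $|u|,|v|\le\gamma$, whose largest entry in modulus is $\sqrt{u^2+v^2}\le\sqrt2\,\gamma$; taking the maximum over blocks gives $\norm{G-\gamma\sum_{j=1}^N G_j}_{\max}\le\sqrt2\,\gamma$. I do not expect a genuine obstacle here: the only place the hypothesis is essential is the block decomposition (for a merely $2$-sparse $G$ the size-$2$ blocks could carry a diagonal part and the real/imaginary split would break down), while the rest is the bookkeeping needed to keep the number of terms $N$ uniform across all blocks and still hit the correct target on each, handled by the parity-and-padding argument above. (One could sharpen $\sqrt2\,\gamma$ to $\gamma$ by instead writing each off-diagonal value directly as a sum of $N$ unit-modulus complex numbers, but the cruder real split already suffices.)
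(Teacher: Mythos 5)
Your proof is correct. It shares the paper's core idea---separate the real off-diagonal, imaginary off-diagonal, and diagonal parts of $G$ and approximate each coefficient by a signed sum of Hermitian involutions in steps of size $\gamma$---but realizes it by a genuinely different mechanism. The paper works entrywise: it rounds each part to multiples of $2\gamma$, writes the result as layers with entries in $\{-2,0,2\}$, and restores $\pm1$ eigenvalues by splitting each layer into two copies and padding the all-zero columns with $\pm 1$ (or $\pm i$) diagonal entries, giving at most $6\ceil{\norm{G}_{\max}/\gamma}$ terms. You instead invoke the block normal form of a $1$-sparse Hermitian matrix (singleton diagonal entries plus purely off-diagonal $2\times 2$ blocks, correctly noting that $1$-sparsity is exactly what forces the diagonal to vanish inside those blocks), expand each $2\times 2$ block in $\sigma_x,\sigma_y$, and fix the number of terms globally at $N=2\max\{1,\ceil{\norm{G}_{\max}/\gamma}\}$, choosing the signs block by block with the parity-and-cancellation argument so each coefficient is hit within $\gamma$; zero eigenvalues never arise because every $G_j$ restricts to $\pm\sigma_x$, $\pm\sigma_y$, or $\pm1$ on each block and is therefore Hermitian, unitary, and even $1$-sparse. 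Your route buys a smaller constant ($2$ rather than $6$ terms per unit of $\norm{G}_{\max}/\gamma$) and dispenses with the zero-column padding entirely; the paper's layered construction has the mild advantage of not needing the block decomposition and of making completely explicit which entries are ``on'' in each term, the form used later when arguing that each term is queryable with $O(1)$ queries to $H$---but your per-block signs are an equally simple computable function of the queried entry, so the downstream use in \lem{sparse} and the gate counts of \sec{time} go through unchanged.
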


\begin{proof}
First we decompose the Hamiltonian $G$ as $G=G_X+iG_Y+G_Z$, where $G_X$ contains the off-diagonal real terms, $iG_Y$ contains the off-diagonal imaginary terms, and $G_Z$ contains the on-diagonal real terms.
Next, for each of $G_\xi$ for $\xi\in\{X,Y,Z\}$, we construct an approximation $\tilde G_\xi$ with each entry rounded off to the closest multiple of $2\gamma$.
Since each entry of $\tilde{G_\xi}$ is at most $\gamma$ away from the corresponding entry in $G_\xi$, we have $\norm{G_\xi-\tilde{G_\xi}}_{\max} \leq \gamma$.
Denoting $\tilde G=\tilde G_X+i\tilde G_Y+\tilde G_Z$, this implies $\norm{G-\tilde{G}}_{\max} \leq \sqrt{2}\gamma$.

Next, we take $C^\xi := \tilde G_\xi / \gamma$, so $\norm{C^\xi}_{\max} = \ceil{\norm{G_\xi}_{\max}/\gamma} \le \ceil{\norm{G}_{\max}/\gamma}$.
We can then decompose each 1-sparse matrix $C^\xi$ into $\|C^\xi\|_{\max}$ matrices, each of which is 1-sparse and has entries from $\{-2,0,2\}$.
If $C^\xi_{jk}$ is $2p$, then the first $|p|$ matrices in the decomposition have a $2$ for $p>0$ (or $-2$ if $p<0$) at the $(j,k)$ entry, and the rest have $0$.
More explicitly, we define
\begin{equation}
C^{\xi,\ell}_{jk} \defeq \begin{cases}
2 & \text{if } C^{\xi}_{jk} \ge 2\ell >0 \\
-2 & \text{if } C^{\xi}_{jk} \le -2\ell <0 \\
0 & \text{otherwise} \end{cases}
\end{equation}
for $\xi\in\{X,Y,Z\}$ and $\ell \in [\norm{C^{\xi}}_{\max}]$.
This gives a decomposition into at most $3 \ceil{\norm{G}_{\max}/\gamma}$ terms with eigenvalues in $\{-2,0,2\}$.

To obtain matrices with eigenvalues $\pm 1$, we perform one more step to remove the $0$ eigenvalues. We divide each $C^{\xi,\ell}$ into two copies, $C^{\xi,\ell,+}$ and $C^{\xi,\ell,-}$. For any column where $C^{\xi,\ell}$ is all zero, the corresponding diagonal element of $C^{\xi,\ell,+}$ is $+1$ (if $\xi \in \{X,Z\}$) or $+i$ (if $\xi=Y$) and the diagonal element of $C^{\xi,\ell,-}$ is $-1$ (if $\xi \in \{X,Z\}$) or $-i$ (if $\xi=Y$). Otherwise, we let $C^{\xi,\ell,+}_{jk}=C^{\xi,\ell,-}_{jk}=C^{\xi,\ell}_{jk}/2$. Thus $C^{\xi,\ell}=C^{\xi,\ell,+}+C^{\xi,\ell,-}$. Moreover, 
each column of $C^{\xi,\ell,\pm}$ has exactly one nonzero entry, which is $\pm 1$ (or $\pm i$ on the diagonal of $C^{Y,\ell,\pm}$).

This gives a decomposition $\tilde G/\gamma = \sum_{\ell,\pm} (C^{X,\ell,\pm}+iC^{Y,\ell,\pm}+C^{Z,\ell,\pm})$ in which each term has eigenvalues $\pm 1$.
The decomposition contains at most $6\ceil{\norm{G}_{\max}/\gamma} = O(\norm{G}_{\max}/\gamma)$ terms.
\end{proof}

\lem{1sparse} gives a decomposition of the required form as the eigenvalues can be adjusted to $0$ and $\pi$ by adding the identity matrix and multiplying by $\pi/2$.

It remains to decompose a sparse Hamiltonian into 1-sparse Hamiltonians.  Known results decompose a $d$-sparse Hamiltonian $H$ into a sum of $O(d^2)$ 1-sparse Hamiltonians~\cite{BAC+07}, but simulating one query to a $1$-sparse Hamiltonian requires $O(\log^* n)$ queries to the oracle for $H$.
We present a simplified decomposition theorem that decomposes a $d$-sparse Hamiltonian into $d^2$ 1-sparse Hamiltonians.
A query to the individual 1-sparse Hamiltonians can be performed using $O(1)$ queries to the original Hamiltonian, removing the $\log^* n$ factor.

\begin{restatable}{lemma}{DECOMPOSITION}
\label{lem:decomposition}
If $H$ is a $d$-sparse Hamiltonian, there exists a decomposition $H = \sum_{j=1}^{d^2} H_j$ where each $H_j$ is 1-sparse and a query to any $H_j$ can be simulated with $O(1)$ queries to $H$.
\end{restatable}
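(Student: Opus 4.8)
The plan is to decompose $H$ into $d^2$ 1-sparse Hamiltonians by partitioning the nonzero entries according to a pair of color labels derived from the sparsity oracle, one for the row side and one for the column side of each entry. Concretely, for each nonzero entry $H_{ik}$, the oracle tells us that $k$ is (say) the $a$th nonzero entry in row $i$ for some $a\in[d]$; symmetrically, since $H$ is Hermitian, $i$ is the $b$th nonzero entry in row $k$ for some $b\in[d]$. Assign the entry $(i,k)$ to the block labeled by the unordered-pair-respecting rule $(a,b)$ if $i<k$ and $(b,a)$ if $i>k$ (with the diagonal handled by $a=b$); define $H_j$ for $j=(a,b)\in[d]^2$ to contain exactly those entries, so that $H=\sum_{j=1}^{d^2}H_j$. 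The key point is that this is a symmetric rule: entry $(i,k)$ and entry $(k,i)$ get the same label, so each $H_j$ is Hermitian, and I will argue each $H_j$ is 1-sparse.

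The main steps, in order: (1) State the labeling function precisely using the oracle's indexing of nonzero entries within a row, and verify it is well-defined and symmetric under $(i,k)\mapsto(k,i)$, so each $H_j$ is Hermitian. (2) Prove 1-sparseness: fix a block $j=(a,b)$ and a row $i$; I must show $H_j$ has at most one nonzero entry in row $i$. Suppose $(i,k)$ and $(i,k')$ both lie in block $j$ with $k\ne k'$. From the row-$i$ side, $k$ is the $a$th nonzero in row $i$ and $k'$ is also the $a$th nonzero in row $i$ (after accounting for the $i<k$ vs.\ $i>k$ case split) — this forces $k=k'$, a contradiction; one checks the two sub-cases ($i$ less than both, $i$ greater than both, or one of each) do not collide because the case split uses the ordering of $i$ against its neighbor. (3) Show a query to $H_j$ costs $O(1)$ queries to $H$: given $(i,j)$ we want the location and value of the unique nonzero entry of $H_j$ in row $i$, if any. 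Enumerate: use the oracle to get the $a$th nonzero entry of row $i$, call its column $k$ and value $v$; then query the oracle on row $k$ to find which position $i$ occupies among $k$'s nonzeros, call it $b'$; check whether $(a,b')$ (with the correct ordering convention applied to $i$ vs.\ $k$) equals $j$, and if so output $(k,v)$, else output "no entry." This is a constant number of oracle calls plus classical post-processing, and it is reversible, so it can be implemented as a quantum circuit making $O(1)$ queries to $H$.

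I expect the main obstacle to be getting the ordering/case-split in the labeling rule exactly right so that (a) every entry including the diagonal receives a consistent label, (b) the rule is genuinely symmetric so $H_j$ is Hermitian, and (c) the 1-sparseness argument has no hidden collision — in particular, ensuring that two entries in the same row cannot share a label requires that the label determines the within-row index $a$, which in turn requires the convention to be "read off from the smaller-indexed endpoint" consistently, and one must double-check that the diagonal entries (where $i=k$) are handled without breaking either Hermiticity or 1-sparseness. Everything else — Hermiticity and the $O(1)$-query simulation — then follows routinely from the oracle model, and the count of blocks is exactly $|[d]\times[d]|=d^2$.
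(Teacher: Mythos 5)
Your step (2) is where the argument breaks, and it breaks at exactly the sub-case you flagged. With the ``read from the smaller-indexed endpoint'' convention, an entry $(i,k)$ with $i<k$ gets label $(r_i(k),r_k(i))$ while an entry $(i,k')$ with $i>k'$ gets label $(r_{k'}(i),r_i(k'))$, where $r_u(v)$ denotes the rank of $v$ among the nonzero columns of row $u$. These two can coincide without $k=k'$: you need $r_i(k)=r_{k'}(i)$ and $r_k(i)=r_i(k')$, which is perfectly possible. Concrete counterexample: let the nonzero pattern be $H_{11}\neq 0$, $H_{12}=H_{21}\neq 0$, $H_{24}=H_{42}\neq 0$. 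Then row $2$ has nonzero columns $\{1,4\}$, row $1$ has $\{1,2\}$, row $4$ has $\{2\}$. The entry $(2,4)$ (case $i<k$) gets label $(r_2(4),r_4(2))=(2,1)$, and the entry $(2,1)$ (case $i>k'$) gets label $(r_1(2),r_2(1))=(2,1)$ as well, so the block labeled $(2,1)$ has two nonzero entries in row $2$ and is not $1$-sparse. The underlying problem is that whether a vertex is the ``smaller endpoint'' depends on the edge, so within a single row the first coordinate of the label is sometimes the within-row rank and sometimes the rank seen from the other side; a proper coloring needs each vertex to play a fixed role for all of its edges. This is precisely the difficulty that historically forced the extra $O(\log^* n)$ machinery in earlier decompositions.

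The paper's proof avoids this with one additional idea you are missing: first make the graph bipartite by simulating $\sigma_x\otimes H$ instead of $H$ (same sparsity, and a simulation of $H$ is recovered from $e^{-i(\sigma_x\otimes H)t}\,|{+}\rangle|\psi\rangle=|{+}\rangle e^{-iHt}|\psi\rangle$). On a bipartite graph one colors the edge $(u,v)$, with $u$ on the left and $v$ on the right, by the ordered pair $(r(u,v),r(v,u))$; since every vertex lies on exactly one side, two edges sharing a left (resp.\ right) vertex must agree in the first (resp.\ second) coordinate only if they coincide, so the coloring is proper and each color class is $1$-sparse and Hermitian. Your step (3), the $O(1)$-query (in fact two-query) simulation of each piece, is essentially identical to the paper's and is fine once the coloring is valid; but without the bipartite reduction (or some substitute for it), the decomposition you construct is not $1$-sparse, so the proof as proposed does not go through.
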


\begin{proof}%[Proof of \protect{\lem{decomposition}}]
The new ingredient in our proof is to assume that the graph of $H$ is bipartite.  (Here the \emph{graph of $H$} has a vertex for each basis state and an edge between two vertices if the corresponding entry of $H$ is nonzero.) This is without loss of generality because we can simulate the Hamiltonian $\sigma_x \otimes H$ instead, which is indeed bipartite and has the same sparsity as $H$. From a simulation of $\sigma_x \otimes H$, we can recover a simulation of $H$ using the identity $e^{-i(\sigma_x \otimes H)t}|+\>|\psi\> =  |+\>e^{-iHt}|\psi\>$.

Now we decompose a bipartite $d$-sparse Hamiltonian into a sum of $d^2$ terms. To do this, we give an edge coloring of the graph of $H$ (i.e., an assignment of colors to the edges so that no two edges incident on the same vertex have the same color). Given such a coloring with $d^2$ colors, the Hamiltonian $H_j$ formed by only considering edges with color $j$ is 1-sparse.

We use the following simple coloring. For any pair of adjacent vertices $u$ and $v$, let $r(u,v)$ denote the rank of $v$ in $u$'s neighbor list, i.e., the position occupied by $v$ in a sorted list of $u$'s neighbors. This is a number between $1$ and $d$. Let the color of the edge $(u,v)$, where $u$ comes from the left part of the bipartition and $v$ comes from the right, be the ordered pair $(r(u,v),r(v,u))$.  This is a valid coloring since if $(u,v)$ and $(u,w)$ have the same color, then in particular the first component of the ordered pair is the same, so $r(u,v) = r(u,w)$ implies $v = w$. A similar argument handles the case where the common vertex is on the right.  

Given a color $(a,b)$, it is easy to simulate queries to the Hamiltonian corresponding to that color. To compute the nonzero entries of the $j$th row for this color,
if $j$ is in the left partition, then we find the neighbor of $j$ that has rank $a$; let us call this $\ell$.
Then we find the neighbor of $\ell$ that has rank $b$.
If this neighbor is $j$, then $\ell$ is the position of the nonzero entry in row $j$; otherwise there is no nonzero entry.
If $j$ is in the right partition, the procedure is the same, except with the roles of $a$ and $b$ reversed.
This procedure uses two queries.
\end{proof}

Observe that the simple trick of making the Hamiltonian bipartite suffices to remove the $O(\log^* n)$ term present in previous decompositions of this form. This trick is quite general and can be applied to remove a factor of $O(\log^* n)$ wherever such a factor appears in a known Hamiltonian simulation algorithm (e.g., \cite{BAC+07,CK11,WBHS11}).

\lem{decomposition} decomposes our Hamiltonian $H$ into $d^2$ 1-sparse Hamiltonians. We further decompose $H$ using \lem{1sparse} into a sum of $\eta = O(d^2\norm{H}_{\max}/\gamma)$ Hamiltonians $G_j$ such that $\norm{H - \gamma \sum_{j=1}^{\eta} G_j}_{\max} \leq \sqrt{2}\gamma d^2$, since each 1-sparse Hamiltonian is approximated with precision $\sqrt{2}\gamma$ and there are $d^2$ approximations in this sum. To upper bound the simulation error, we have
$
	\norm{e^{-iHt} - e^{-i\gamma \sum_j G_j t}} 
	\le \norm{(H - \gamma \sum_{j=1}^{\eta} G_j)t}
	\le \sqrt{2} \gamma d^3 t
$,
where we used the fact that $\norm{e^{iA} - e^{iB}} \leq \norm{A-B}$ (as explained in the proof of \thm{equiv}) and $\norm{A} \leq d \norm{A}_{\max}$ for a $d$-sparse matrix $A$. Choosing  $\gamma = {\epsilon}/{\sqrt{2} d^3 t}$ gives the required precision.  We now invoke \cor{hamsim} with number of Hamiltonians $\eta = O(d^2\norm{H}_{\max}/\gamma)$ and simulation time $\gamma t$ to get $\tau =  d^2\norm{H}_{\max}t$.
Plugging this value of $\tau$ into \cor{hamsim} gives us the following lemma, which is the query-complexity part of \thm{sparse}.

\begin{lemma}
\label{lem:sparse}
A $d$-sparse Hamiltonian $H$ can be simulated for time $t$ with error at most $\epsilon$ using $O\big(\tau \frac{\log(\tau/\epsilon)}{\log\log(\tau/\epsilon)}\big)$ queries, where $\tau \defeq d^2 \norm{H}_{\max} t \ge 1$.
\end{lemma}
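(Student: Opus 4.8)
The plan is to assemble \lem{sparse} from the three ingredients developed above: the $1$-sparse decomposition of \lem{decomposition}, the refinement into unitaries with eigenvalues $\pm1$ from \lem{1sparse}, and the multiple-fractional-query simulation of \cor{hamsim}. First I would apply \lem{decomposition} to write $H=\sum_{j=1}^{d^2}H_j$ with each $H_j$ $1$-sparse and each query to $H_j$ costing $O(1)$ queries to $H$. Then, fixing a rounding precision $\gamma>0$, I would apply \lem{1sparse} to every $H_j$, obtaining $O(\norm{H_j}_{\max}/\gamma)=O(\norm{H}_{\max}/\gamma)$ Hamiltonians $G_{j,k}$ with eigenvalues $\pm1$ and $\norm{H_j-\gamma\sum_k G_{j,k}}_{\max}\le\sqrt2\,\gamma$. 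Collecting these over all $d^2$ values of $j$ yields a single family of $\eta=O(d^2\norm{H}_{\max}/\gamma)$ Hamiltonians $G_\ell$, each with eigenvalues $\pm1$ (hence, by the affine rescaling noted after \lem{1sparse}, each reducible to eigenvalues $0$ and $\pi$ as \cor{hamsim} requires), satisfying $\norm{H-\gamma\sum_\ell G_\ell}_{\max}\le\sqrt2\,\gamma d^2$.

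The second step is error bookkeeping. Using $\norm{A}\le d\norm{A}_{\max}$ for a $d$-sparse $A$ together with $\norm{e^{iA}-e^{iB}}\le\norm{A-B}$ (as established in the proof of \thm{equiv}), the error of replacing $e^{-iHt}$ by $e^{-i\gamma(\sum_\ell G_\ell)t}$ is at most $\sqrt2\,\gamma d^3 t$; choosing $\gamma=\epsilon/(\sqrt2\,d^3 t)$ makes this at most $\epsilon/2$, say, reserving the other $\epsilon/2$ for the simulation of $e^{-i\gamma(\sum_\ell G_\ell)t}$ itself. After the affine shift, $\gamma\sum_\ell G_\ell$ is exactly of the form handled by \cor{hamsim}, with $\eta=O(d^2\norm{H}_{\max}/\gamma)$ terms and simulation time $\gamma t$, so the corollary's time parameter is $\eta\cdot\gamma t=O(d^2\norm{H}_{\max}t)=\tau$ — crucially independent of $\epsilon$ and of $\gamma$. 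Thus \cor{hamsim} implements $e^{-i\gamma(\sum_\ell G_\ell)t}$ with $O\big(\tau\frac{\log(\tau/\epsilon')}{\log\log(\tau/\epsilon')}\big)$ invocations of $Q=\sum_\ell|\ell\>\<\ell|\otimes e^{-iG_\ell}$ for $\epsilon'=\Theta(\epsilon)$; each invocation of $Q$ costs $O(1)$ queries to $H$ by \lem{decomposition}; so the whole construction uses $O\big(\tau\frac{\log(\tau/\epsilon)}{\log\log(\tau/\epsilon)}\big)$ queries to $H$, as claimed.

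The point I expect to require the most care — and the only one that is not bookkeeping — is checking that the polynomial and constant factors absorbed into the argument of the logarithm do not affect the asymptotics. Since $\tau\ge1$, in the relevant regime $\epsilon'$ and $\epsilon$ differ only by a constant factor, and $\log(\tau/\epsilon')$ differs from $\log(\tau/\epsilon)$ by at most a constant factor, so the insensitivity of $x\mapsto\log x/\log\log x$ to constant rescalings of $x$ finishes this. A secondary point is to confirm that the enlarged oracle $Q$ of \cor{hamsim}, indexed by $\ell\in[\eta]$, is genuinely constructible from $O(1)$ queries to $H$: this is exactly \lem{decomposition} (the color index supplies the $1$-sparse label, and the $\pm1$-eigenvalue splitting in \lem{1sparse} is computable from the index), so no additional work is needed there. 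Everything else reduces to the estimates already recorded above.
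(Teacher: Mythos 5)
Your proposal is correct and follows essentially the same route as the paper's own proof: decompose via \lem{decomposition}, refine with \lem{1sparse}, bound the error by $\sqrt{2}\gamma d^3 t$ using $\norm{A}\le d\norm{A}_{\max}$ and $\norm{e^{iA}-e^{iB}}\le\norm{A-B}$, choose $\gamma=\Theta(\epsilon/(d^3 t))$, and invoke \cor{hamsim} with $\eta=O(d^2\norm{H}_{\max}/\gamma)$ and time $\gamma t$ so that $\tau=d^2\norm{H}_{\max}t$ is independent of $\gamma$. The extra checks you flag (constant-factor insensitivity of $\log(\cdot)/\log\log(\cdot)$ and implementing $Q$ with $O(1)$ queries) are consistent with the paper's treatment.
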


Note that above we have determined the values of $r$ and $\gamma$ to use, but these values do not affect the query complexity (although they do affect the time complexity). This is because $r$ and $\gamma$ affect the value of $m$, but the analysis in \sec{main} is independent of $m$. This enables a simple generalization to time-dependent Hamiltonians. We can approximate the true evolution by a product of evolutions under time-independent Hamiltonians for each of the $r$ time intervals of length $t/r$. Provided the derivative of the Hamiltonian is bounded, this approximation can be made arbitrarily accurate by choosing $r$ large enough. As the query complexity does not depend on $r$, it is independent of $h'$, similar to \cite{PQSV11}.

Finally, consider simulating a $k$-local Hamiltonian.  A term acting nontrivially on at most $k$ qubits is $2^k$-sparse: two states $x,y \in \{0,1\}^n$ are adjacent if the only bits on which $x$ and $y$ differ are among the $k$ bits involved in the local term.  Using this structure, we can give an explicit $2^k$-coloring, improving over the $4^k$-coloring provided by \lem{decomposition}: we simply color an edge between states $x$ and $y$ by indicating which of the $k$ bits are flipped.  Thus we can decompose a $k$-local Hamiltonian with $M$ terms as a sum of $2^k M$ 1-sparse Hamiltonians.  Using this decomposition in place of \lem{decomposition}, we find a simulation as in \thm{sparse} but with $\tau$ replaced by $\tilde\tau \defeq 2^k M \norm{H}_{\max} t$.

%%%%%%%%%%%%%%%%%%%%%%%%%%%%%%%%%%%%%%%%%%%%%%%%%%%%%%%%%%%%%%%%%%%%%%%%%%%%%%
\section{Time complexity}
\label{sec:time}

We now consider the time complexities of the algorithms described in \thm{cquerysim} and \thm{sparse} (recall that time complexity refers to the sum of the number of queries and additional 2-qubit gates used in the algorithm).  Our approach considerably simplifies this analysis over previous work and gives improved upper bounds.

The basic algorithm as described in \sec{main} is inefficient as it relies on creating a state of $m = \poly(h,T,\frac{1}{\epsilon})$ qubits.  Instead, as in previous work \cite{BCG14}, we create a compressed version of this state that allows us to perform the necessary controlled operations and to reflect about the zero state.  Our simplified approach does not require measuring the control qubits, an operation that accounts for much of the technical complexity of \cite{BCG14}.

We now prove \thm{cquerysim} from \sec{intro}, which we restate for convenience.

\CQUERYSIM*

\begin{proof}
The query complexity of this theorem was established in \lem{cquerysim}.
As in the analysis of query complexity, it suffices to simulate a segment implementing evolution for time $1/5$ with precision $\epsilon/5T$.  To simulate the continuous-query model, we can assume without loss of generality that query evolutions are approximated (as in \thm{equiv}) by $m$ fractional evolutions of equal length $1/5m$.  Thus we can assume that in each segment, as defined in \lem{segment}, $\alpha \defeq \alpha_i=1/5m$ for all $i \in [m]$.  Let $c \defeq \cos(\pi/10m)$ and $s \defeq \sin(\pi/10m)$.

The idealized initial state of the ancilla qubits (i.e., the state in the dotted box of \fig{segment}) is
\be
  \left(\frac{\sqrt{c}|0\> + \sqrt{s}|1\>}{\sqrt{c+s}}\right)^{\otimes m}
  = \sum_{\x \in \{0,1\}^m} \kappa^{m-|\x|} \sigma^{|\x|} |\x\>,
\ee
where $\kappa \defeq \frac{\sqrt{c}}{\sqrt{c+s}}$ and $\sigma \defeq \frac{\sqrt{s}}{\sqrt{c+s}}$.  We truncate this state to the subspace of those $\x$ with Hamming weight $|\x| \le k$.  Specifically, we prepare the encoded state
\be
  \sum_{\ell \in L} \kappa^{m-|\ell|} \sigma^{|\ell|} |\ell\> 
  + \delta |{\perp}\>,
\label{eq:encodedstate}
\ee
where $L \defeq \{(\ell_1,\ldots,\ell_h)\colon 1 \le h \le k,\, \ell_1+\cdots+\ell_h \le m-h\}$,
$|{\perp}\>$ is a special state orthogonal to all terms in the first sum, and the coefficient $\delta$ was shown to be small in \lem{approxsegment}.  Observe that there is a natural  bijection between $L$ and the set of strings $\x$ with $|\x| \le k$, given by $\x \leftrightarrow 0^{\ell_1}10^{\ell_2}10^{\ell_3}\ldots0^{\ell_h}10^{m-h-\ell_1-\cdots-\ell_h}$.

It is straightforward to perform the operation \eq{rearranged} from the proof of \lem{approxsegment}, conditioning on $\x$ as represented by $\ell$.  Recall that $W_i(\x)$ represents the evolution under the driving Hamiltonian from time $\sum_{j=1}^i \ell_j/5m$ to time $\sum_{j=1}^{i+1}\ell_{j}/5m$ (where we define $\ell_{k+1}=m$).  By assumption, any such evolution can be performed with precision $O(\epsilon/T)$ using $g$ gates.  Also, recall that $Q_i(\x)$ is simply $Q$ if $i \le |\x|$ or $\id$ otherwise, so it can be applied in time $O(\log k)$.  Thus the operation \eq{rearranged} can be applied in time $O(k(g+\log k))$.

At the end of the segment we must effectively apply the final $P$ and $R$ gates to the encoded state before reflecting about the encoding of $|0^m\>$.
(That is, we jointly reflect about this state and $|0\>$ for the additional ancilla in \fig{segment}.)
The $P$ gates are straightforward to apply in the given encoding.  Rather than apply the encoded $R$ gates directly, reflect about the encoding of $|0^m\>$, and then apply the encoded $R$ gates for the next segment, it suffices to reflect about the encoding of $R_\alpha^{\otimes m}|0^m\>$ (note that $R_\alpha^\dag = R_\alpha$). This can be done by applying the inverse of the procedure for preparing \eq{encodedstate}, reflecting about the initial state, and applying the preparation procedure. Overall, we see that the segment can be applied to the encoded initial state with suitable accuracy using $O(k(g+\log m))$ gates, plus the cost of preparing the encoded ancillas.

The encoded initial state \eq{encodedstate} can be prepared in time $O(k(\log m + \log\log(1/\epsilon))) = O(k \log m)$, as described in Sections 4.2--4.4 of \cite{BCG14} (see in particular equation (22)).
Since $k = O\big(\frac{\log(T/\epsilon)}{\log\log(T/\epsilon)}\big)$ (from the proof of \lem{approxsegment} with error at most $\epsilon/5T$) and $m = \poly(T,\h,\frac{1}{\epsilon})$ (from \thm{equiv}), the overall complexity of making the encoded ancilla state is $O\big(\frac{\log(T/\epsilon) \log(\h T/\epsilon)}{\log\log(T/\epsilon)}\big)$. Thus the cost of implementing a constant-query algorithm to precision $\epsilon/5T$ is \be
  O(k(g+\log m))
  =O\left(\frac{\log(T/\epsilon)}{\log\log(T/\epsilon)}
          [g+\log(\h T/\epsilon)]\right).
\ee

Implementing $O(T)$ segments, each with this complexity, gives the stated time complexity.
With error bounded by $\epsilon/5T$ for each segment, the overall error is at most $\epsilon$.
\end{proof}

Using this approach we can similarly prove \thm{sparse} from \sec{intro}, which we restate for convenience.

\SPARSE*

\begin{proof}
The query complexity of this theorem was established in \lem{sparse}.
Since the query complexity of \thm{sparse} is proved by reduction to \thm{cquerysim}, a time-efficient version of \thm{sparse} can be obtained by essentially the same procedure as the time-efficient version of \thm{cquerysim}.  In this reduction, $\tau$ plays the role of $T$.
Note that the reduction ultimately uses a fractional-query simulation, so we cannot directly use the result as stated in \thm{cquerysim}, where the time-complexity is for the continuous-query case.
Nevertheless, we can obtain a similar result if $g$ is taken to represent the cost of performing any sequence of consecutive non-query operations in the fractional-query algorithm.
The term $\log(\h T/\epsilon)$ in \thm{cquerysim} results from discretizing a continuous-query algorithm with a driving Hamiltonian and does not arise here.

The non-query operations $V_j$ for $j \in [m]$ described in the proof of \thm{multiquery} are straightforward to implement.  In the application to Hamiltonian simulation, we simply cycle through all $\eta$ terms in order, so all the $V_j$s can simply add $1$ modulo $\eta$, and a sequence $V_{j'}\cdots V_j$ adds $j'-j \bmod \eta$.  Without loss of generality, we can assume $\eta$ is a power of $2$, so addition modulo $\eta$ can be performed by standard binary addition, keeping only the $\log_2\eta$ least significant bits.  Thus any operation to be performed between queries can be applied using $g = O(\log \eta) = O(\log(d\norm{H}_{\max}t/\epsilon))$ operations (where the value of $\eta$ is discussed following the proof of \lem{decomposition}). Next, observe that it suffices to decompose the evolution into $m = \eta^3 t^2/\epsilon = \poly(t,\norm{H}_{\max},d,\frac{1}{\epsilon})$ terms (as stated in \cor{hamsim}).  In the proof of \thm{cquerysim}, the time complexity for a constant-query algorithm is $O(k(g+\log m))$.  This upper bounds the number of additional gates required to perform the non-query operations.
Using $g= O(\log(d\norm{H}_{\max}t/\epsilon))$ and $\log m = O(\log(d\norm{H}_{\max}t/\epsilon))$, we see that this is $O\big(\tau \frac{\log^2(\tau/\epsilon)}{\log\log(\tau/\epsilon)}\big)$.

This only accounts for the operations performed between applications of the unitary $Q$ defined in \cor{hamsim}. It remains to implement $Q \defeq  \sum_{j=1}^\eta |j\>\<j| \otimes e^{-iH_j}$ using the oracle, where $H = \sum_{j=1}^\eta H_j$ and $H_j$ are Hamiltonians with eigenvalues $0$ and $\pi$. To implement $Q$ we need to read the first register to learn which 1-sparse Hamiltonian is to be simulated and then simulate the 1-sparse Hamiltonian $H_j$. The first part is straightforward; from $j$ we can determine which 1-sparse Hamiltonian is to be simulated and whether it is an $X$, $Y$, or $Z$ term, in the notation of \lem{1sparse}. This can be done with $O(\log \eta)$ gates, which is linear in the size of the first register. Now we need to implement the 1-sparse Hamiltonian on an $n$-qubit register. This can be done with $O(n)$ gates using the constructions in \cite{AT03,CCDFGS03}. For example, to implement an $X$ Hamiltonian on a state $|v\>$, we can write down the index of $v$'s neighbor in another register, swap the two registers, and uncompute the second register. Thus we can implement $Q$ using $O(\log \eta + n)$ gates. Since the number of uses of $Q$ is the query complexity, the total number of gates used for all invocations of $Q$ and the non-query operations is $O\big(\tau \frac{\log(\tau/\epsilon)}{\log\log(\tau/\epsilon)}[\log(\tau/\epsilon)+n]\big)$, which is $O\big(\tau \frac{\log^2(\tau/\epsilon)}{\log\log(\tau/\epsilon)}n\big)$.
\end{proof}

The same techniques can be straightforwardly applied to simulate time-dependent sparse Hamiltonians. We divide the evolution into intervals of length $t/r$, so the Hamiltonian can change by no more than $h't/r$ over such an interval, where $h' \defeq \max_{s \in [0,t]} \norm{\frac{\mathrm{d}}{\mathrm{d}s} H(s)}$. Thus the error for each interval is $O(h't^2/r^2)$, and the error in the overall simulation is $O(h't^2/r)$. Therefore it suffices to take $r=\Omega(h't^2/\epsilon)$. Then $m = \poly(t,h,h',d,\frac{1}{\epsilon})$, and the complexity is
$O\big(\tau \frac{\log(\tau/\epsilon) \log((\tau + \tau')/\epsilon)}{\log\log(\tau/\epsilon)}n\big)$
as stated.

%%%%%%%%%%%%%%%%%%%%%%%%%%%%%%%%%%%%%%%%%%%%%%%%%%%%%%%%%%%%%%%%%%%%%%%%%%%%%%
\section{Lower bounds}
\label{sec:lb}

We now show that in general, any sparse Hamiltonian simulation method must use $\Omega\big(\frac{\log(1/\epsilon)}{\log\log(1/\epsilon)}\big)$ discrete queries to obtain error at most $\epsilon$, so dependence of the query complexity in \thm{sparse} on $\epsilon$ is tight up to constant factors.  To show this, we use ideas from the proof of the no-fast-forwarding theorem~\cite[Theorem 3]{BAC+07}, which says that generic Hamiltonians cannot be simulated in time sub-linear in the evolution time. The Hamiltonian used in the proof of that theorem has the property that simulating it for time $t = \pi n/2$ determines the parity of $n$ bits exactly.  We observe that simulating this Hamiltonian (with sufficiently high precision) for any time $t > 0$ gives an unbounded-error algorithm for the parity of $n$ bits, which also requires $\Omega(n)$ queries~\cite{FGGS98,BBC+01}.

We now prove \thm{hamsimlower} from \sec{intro}, which we restate for convenience.

\LOWERBOUND*

\begin{proof}
To construct the Hamiltonian, we begin with a simpler Hamiltonian $H'$ that acts on vectors $|i\>$ with $i \in \{0,1,\ldots,N\}$ \cite{CDEL04}. The nonzero matrix entries of $H'$ are $\<i\left|H'\right|i+1\>=\<i+1\left|H'\right|i\>=\sqrt{(N-i)(i+1)}/N$ for $i \in \{0,1,\ldots,N-1\}$. We have $\norm{H'}_{\max} < 1$, and simulating $H'$ for $t = \pi N/2$ starting with the state $|0\>$ gives the state $|N\>$ (i.e., $e^{-iH'\pi N/2}|0\>=|N\>$). More generally, for $t \in [0,\pi N/2]$, we claim that $|\<N|e^{-iH't}|0\>| = |{\sin(t/N)}|^N$.

To see this, consider the Hamiltonian $\bar X \defeq \sum_{j=1}^N X^{(j)}$, where $X \defeq \left(\begin{smallmatrix}0 & 1 \\ 1 & 0\end{smallmatrix}\right)$ and the superscript $(j)$ indicates that the operator acts nontrivially on the $j$th qubit.  Since $e^{-iXt} = \cos(t)\id - i \sin(t) X$, we have $|\<11\ldots 1|e^{-i\bar Xt}|00\ldots 0\>| = |{\sin(t)}|^N$.  Defining $|{\wt_k}\> \defeq \binom{N}{k}^{-1/2} \sum_{|x|=k} |x\>$,
we have 
\be
\bar X|{\wt_k}\>
= \sqrt{(N-k+1)k} |{\wt_{k-1}}\> + \sqrt{(N-k)(k+1)} |{\wt_{k+1}}\>.
\ee
This is precisely the behavior of $NH'$ with $|k\>$ playing the role of $|{\wt_k}\>$, so the claim follows.

Now, as in \cite{BAC+07}, consider a Hamiltonian $H$ generated from an $N$-bit string $x_1 x_2 \ldots x_{N}$. $H$ acts on vertices $|i,j\>$ with $i\in \{0,\ldots,N\}$ and $j\in \{0,1\}$. The nonzero matrix entries of this Hamiltonian are  
\be
\<i,j\left|H\right|i-1,j\oplus x_i\>=\<i-1,j\oplus x_i\left|H\right|i,j\>=\sqrt{(N-i+1)i}/N
\ee
for all $i$ and $j$. By construction, $|0,0\>$ is connected to either $|i,0\>$ or $|i,1\>$ (but not both) for any $i$; it is connected to $|i,j\>$ if and only if $j=x_1 \oplus x_2 \oplus \cdots \oplus x_{i}$. Thus $|0,0\>$ is connected to either $|N,0\>$ or $|N,1\>$, and determining which is the case determines the parity of $x$. The graph of this Hamiltonian 
contains two disjoint paths, one containing $|0,0\>$ and $|N,\parity(x)\>$ and the other containing $|0,1\>$ and $|N,1 \oplus {\parity(x)}\>$. Restricted to the connected component of $|0,0\>$, this Hamiltonian is the same as $H'$. Thus, starting with the state $|0,0\>$ and simulating $H$ for time $t$ gives $|\<N,\parity(x)|e^{-iHt}|0,0\>| = |{\sin(t/N)}|^N$. Furthermore, for any $t$, we have $\<N,1 \oplus {\parity(x)}|e^{-iHt}|0,0\> = 0$ since the two states lie in disconnected components.

Simulating this Hamiltonian exactly for any time $t>0$ starting with $|0,0\>$ yields an unbounded-error algorithm for computing the parity of $x$, as follows. First we measure $e^{-iHt}|0,0\>$ in the computational basis.  We know that for any $t>0$, the state $e^{-iHt}|0,0\>$ has some nonzero overlap on $|N,\parity(x)\>$ and zero overlap on $|N,1\oplus\parity(x)\>$. If the first register is not $N$, we output 0 or 1 with equal probability. If the first register is $N$, we output the value of the second register. This is an unbounded-error algorithm for the parity of $x$, and thus requires $\Omega(N)$ queries.

Since the unbounded-error query complexity of parity is $\Omega(N)$ \cite{FGGS98,BBC+01}, this shows that exactly simulating $H$ for any time $t>0$ needs $\Omega(N)$ queries. However, even if we only have an approximate simulation, the previous algorithm still works as long as the error in the output state is smaller than the overlap $|\<N,\parity(x)|e^{-iHt}|0,0\>|$. If we ensure that the overlap is larger than $\epsilon$ by a constant factor, then even with error $\epsilon$, the overlap on that state will be larger than $\epsilon$. On the other hand, the overlap on $|N,1\oplus\parity(x)\>$ is at most $\epsilon$, since the output state is $\epsilon$ close to the ideal output state which has no overlap. 

To achieve an overlap much larger than $\epsilon$, we need $|{\sin(t/N)}|^{N}$ to be much larger than $\epsilon$. There is some value of $N$ in $\Theta\big(\frac{\log(1/\epsilon)}{\log\log(1/\epsilon)}\big)$ that achieves this.
\end{proof}

A similar construction shows that any $\epsilon$-error simulation of the continuous-query model must use $\Omega\big(\frac{\log(1/\epsilon)}{\log\log(1/\epsilon)}\big)$ discrete queries, so \lem{main} is tight up to constant factors.  Again we show that a sufficiently high-precision simulation of a certain Hamiltonian could be used to compute parity with unbounded error.  However, in the fractional-query model, the form of the Hamiltonian is restricted and it is unclear how to implement the weights that simplify the analysis of the dynamics in \thm{hamsimlower}.  Instead, we consider a quantum walk on an infinite unweighted path that also solves parity with unbounded error, and we show that this still holds if the path is long but finite.

\begin{theorem}[$\epsilon$-dependent lower bound for continuous-query simulation]
\label{thm:fqsimlower}
For any $\epsilon>0$, given a query Hamiltonian $H_x$ for a string of $N = \Theta\big(\frac{\log(1/\epsilon)}{\log\log(1/\epsilon)}\big)$ bits, simulating $H_x + H_D(t)$ for constant time  with precision $\epsilon$ requires $\Omega(N)$ queries.
\end{theorem}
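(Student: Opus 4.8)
The plan is to follow the template of the proof of \thm{hamsimlower}. I would exhibit a single, $x$-independent continuous-query algorithm — a driving Hamiltonian $H_D(t)$ for $t\in[0,T]$ with $T$ a fixed constant — together with a fixed initial state $|\mathrm{start}\>$ and two designated computational-basis states $|\mathrm{ans}_0\>,|\mathrm{ans}_1\>$, such that the ideal evolution $U_x(T)$ under $H_x+H_D(t)$ satisfies $\bigl|\<\mathrm{ans}_{\parity(x)}|U_x(T)|\mathrm{start}\>\bigr|\ge\beta_N$ for an explicit $\beta_N>0$, while $\<\mathrm{ans}_{1\oplus\parity(x)}|U_x(T)|\mathrm{start}\>=0$ exactly. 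Given such an algorithm, the argument closes as in \thm{hamsimlower}: measuring $U_x(T)|\mathrm{start}\>$ in the computational basis and outputting the recorded bit whenever one of $|\mathrm{ans}_0\>,|\mathrm{ans}_1\>$ is observed (and a uniformly random bit otherwise) computes $\parity(x)$ with success probability strictly above $1/2$, hence with unbounded error, which requires $\Omega(N)$ discrete queries \cite{FGGS98,BBC+01}. Because the wrong-answer amplitude is \emph{exactly} zero in the ideal case, this reduction still succeeds when only an $\epsilon$-precise simulation is available, provided $\epsilon$ is a constant factor below $\beta_N$. Since $\beta_N$ will be of order $(cT/N)^{N}/\mathrm{poly}(N)$ for some constant $c$, choosing $N=\Theta\bigl(\tfrac{\log(1/\epsilon)}{\log\log(1/\epsilon)}\bigr)$ makes $\beta_N=\Omega(\epsilon)$, which yields the claimed lower bound.

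The construction of this continuous-query algorithm is where the setting genuinely departs from \thm{hamsimlower}. There the oracle is a Hamiltonian whose \emph{graph} depends on $x$, so one can literally use two vertex-disjoint weighted paths and feed $|\mathrm{start}\>$ into the one that ends at $|\mathrm{ans}_{\parity(x)}\>$; here the query Hamiltonian $H_x$ is merely a diagonal potential of height $\pi$ on the sites with $x_j=1$, so it cannot touch any edges, and it is the special weights of \thm{hamsimlower} (rather than the edge pattern) that we lose. The remedy is to use a quantum walk on an \emph{unweighted} path: I would let $H_D$ implement a fixed nearest-neighbour hopping on a position register forming a path of length $\Theta(N)$, tensored with an auxiliary qubit $b$ and whatever ancillas are convenient, arranged so that the $\pi$-potentials supplied by $H_x$, conjugated by $x$-independent single-qubit rotations on $b$ that are built into $H_D$, implement the bookkeeping update $b\mapsto b\oplus x_p$ as the walker passes site $p$; after propagation from site $0$ to site $N$ the register $b$ then records $x_1\oplus\cdots\oplus x_N=\parity(x)$, reproducing the parity bookkeeping of \thm{hamsimlower}'s graph. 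I would first carry out the amplitude analysis on the infinite path, where standard quantum-walk (Bessel-function) asymptotics give $\bigl|\<N,\cdot|U_x(T)|\mathrm{start}\>\bigr|=\Theta\bigl((cT/N)^{N}/\sqrt{N}\,\bigr)$ at constant $T$ with the amplitude on $|N,1\oplus\parity(x)\>$ vanishing identically by the symmetry of the construction, and then transfer to a sufficiently long finite path, using that at constant evolution time the finite-path dynamics agree with the infinite-path dynamics up to corrections that are superexponentially small in the path length and hence negligible compared to $\beta_N$. The exact vanishing of the wrong-answer amplitude I would try to obtain from a gauge symmetry, as in \thm{hamsimlower}: flipping a single bit $x_p$ should conjugate $H_x+H_D$ by a fixed Hermitian unitary $V_p$ that fixes $|\mathrm{start}\>$ and exchanges $|\mathrm{ans}_0\>$ with $|\mathrm{ans}_1\>$, so that $\<\mathrm{ans}_a|U_x(T)|\mathrm{start}\>=\<\mathrm{ans}_{a\oplus\parity(x)}|U_{0^N}(T)|\mathrm{start}\>$, together with the observation that for the all-zero string $x=0^N$ (where $H_x=0$) the evolution is purely under $H_D$, whose construction keeps $b=0$.

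The step I expect to be the main obstacle is exactly this construction and the proof that the wrong-answer amplitude is (exactly, or at least by a constant factor) smaller than $\beta_N$. The tension is intrinsic: $H_x$ is always on and has fixed norm while the total evolution time is only a constant, so the answer can only be produced at roughly $N$th order in the hopping — with amplitude exponentially small in $N$, just as in \thm{hamsimlower} — and one must nonetheless ensure that the competing amplitude is negligible on that scale. This cannot be finessed: a constant-query-time continuous algorithm cannot solve parity with \emph{bounded} error (that would contradict the $\Omega(N)$ bound on the bounded-error continuous-query complexity of parity obtained from the $\gamma_2$ bound of \cite{LMRSS11}), so the only way the tiny answer amplitude can carry information is if the wrong-answer amplitude is essentially zero, which is precisely what makes the reduction to \emph{unbounded}-error parity go through. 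Verifying that an unweighted-path walk with suitable $b$-register gadgets can realize the $x$-dependent parity bookkeeping while keeping $H_D$ a legitimate fixed driving Hamiltonian, the query cost $O(1)$, and the wrong-answer amplitude exactly zero, is the delicate part.
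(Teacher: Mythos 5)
Your high-level strategy (unbounded-error parity reduction, walk on an unweighted path, Bessel-function asymptotics at constant time, truncation from the infinite to a finite path) is exactly the paper's, but the step you yourself flag as ``the delicate part'' is the actual content of the proof, and the mechanism you sketch for it does not work as stated. In the continuous-query model $H_x$ is a diagonal potential that acts \emph{simultaneously} with $H_D(t)$ at all times; there is no sense in which rotations ``built into $H_D$'' can conjugate $H_x$ so that the walker picks up a clean $b\mapsto b\oplus x_p$ update ``as it passes site $p$''---at constant total time the walker is in a spatial superposition with no well-defined passage time, and a constantly-on $\pi$-potential superposed with hopping produces scattering and phases rather than a controlled bit-flip, which would generically spoil both the $\Theta((cT/N)^N)$ correct-answer amplitude and any exact-zero claim for the wrong answer. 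Likewise your gauge-symmetry argument needs a fixed Hermitian unitary $V_p$ with $V_p(H_x+H_D)V_p = H_{x\oplus e_p}+H_D$ that also fixes the start state and swaps the answer states; nothing in your construction supplies such a $V_p$, and in the paper the exact vanishing does not come from a symmetry at all---it comes from the two answer vertices lying in \emph{disjoint connected components} of the resulting graph.

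The paper closes this gap with two ideas you are missing. First, it does not build a single Hamiltonian of the literal form $H_x+H_D(t)$; it proves the lower bound for $H'=\sum_{a}c_a U_a^\dag H_x U_a + H_D$ and observes that $e^{-iH't}$ can be approximated arbitrarily well by a fractional-query algorithm of cost $t\sum_a|c_a|$ (Trotterize, with the fixed unitaries $U_a,U_a^\dag$ interleaved), hence by \thm{equiv} by a constant-cost continuous-query algorithm---so a fast $\epsilon$-precise simulator for the continuous-query model would simulate $H'$, and the lower bound transfers. Second, the conjugations are chosen via the Hadamard identity $\Had\,\mathrm{diag}(1,0)\,\Had=\tfrac12\left(\begin{smallmatrix}1&1\\1&1\end{smallmatrix}\right)$, turning the diagonal oracle into input-dependent \emph{edges}, and four signed copies $-U_1^\dag H_x U_1-U_2^\dag H_x U_2+U_3^\dag H_x U_3+U_4^\dag H_x U_4$ are combined so the self-loops cancel (\fig{paritygadget}); added to a fixed hopping $H_D$, the total Hamiltonian \eq{fraclbham} is exactly $1/2$ times the adjacency matrix of two disjoint infinite unweighted paths whose wiring encodes $\parity(x)$. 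That exactness is what simultaneously gives $|\<N,\parity(x),1|e^{-iHt}|0,0,1\>|=|J_{2N}(t)|=e^{-\Theta(N\log N)}$ and the identically-zero wrong-answer amplitude, after which the Truncation Lemma of \cite{CGW13} handles the passage to a finite path. Without these two ingredients your argument is a plan rather than a proof.
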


\begin{proof}
We prove a lower bound for simulating a Hamiltonian of the form $H'=\sum_{a=1}^\eta c_a U_a^\dag H_x U_a$ with coefficients $c_1,\ldots,c_\eta \in \R$.  The Hamiltonian $H_x$ can be used to simulate $H'$ to any given accuracy with overhead $\sum_a |c_a|$, so this implies a lower bound for simulating $H_x$.  In particular, by taking $r$ sufficiently large, the evolution under $H'$ can be approximated arbitrarily closely as
\begin{equation}
e^{-iH't} \approx \left(\prod_{a=1}^\eta U_a^\dag e^{-i H_x c_a t/r} U_a\right)^r.
\end{equation}
This corresponds to a fractional-query algorithm with cost $t \sum_{a=1}^\eta |c_a|$. By \thm{equiv}, this fractional-query algorithm can be simulated with arbitrarily small error by a continuous-query algorithm with the same cost. This continuous-query algorithm uses the query Hamiltonian $H_x$, and its driving Hamiltonian $H_D(t)$ implements the unitaries $\{U_a,U_a^\dag\}_{a=1}^\eta$ at appropriate times.

Viewing the Hamiltonian in terms of the graph of its nonzero entries, the oracle Hamiltonian $H_x$ provides input-dependent self-loops.  First we modify it to give input-dependent edges.  Observe that
\begin{align}
  \Had \begin{pmatrix}1&0\\0&0\end{pmatrix} \Had
  &= \frac{1}{2}\begin{pmatrix}1&1\\1&1\end{pmatrix}
\end{align}
where $\Had \defeq \frac{1}{\sqrt2}(\begin{smallmatrix}1 & 1\\1 & -1\end{smallmatrix})$ is the Hadamard gate.  Thus we can include a term in the Hamiltonian that has an edge between two vertices associated with the input index $i$ (and self-loops on those vertices) if $x_i=1$, and is zero otherwise.

Now consider a space with basis states $|i,j,k\>$ where $i \in \Z$ and $j,k \in \{0,1\}$.  The label $j$ plays the same role as in \thm{hamsimlower}, whereas the new label $k$ indexes two positions for each value of $i$.  These new positions are needed because the pairs of vertices associated with each input index must be disjoint.

To specify the Hamiltonian, we define unitaries $U_1,U_2,U_3,U_4$ so that the nonzero matrix elements of $U_a^\dag H_x U_a$ for $a \in \{1,2,3,4\}$ are
\begin{align}
     \<i,0,k|U_1^\dag H_x U_1|i,0,\bar k\> 
  &= \<i,0,k|U_1^\dag H_x U_1|i,0,k\> = x_i/2 \\
  %----------------------------------------------------
     \<i,1,k|U_2^\dag H_x U_2|i,1,\bar k\> 
  &= \<i,1,k|U_2^\dag H_x U_2|i,1,k\> = x_i/2 \\
  %----------------------------------------------------
     \<i,k,k|U_3^\dag H_x U_3|i,\bar k,\bar k\> 
  &= \<i,k,k|U_3^\dag H_x U_3|i,k,k\> = x_i/2 \\
  %----------------------------------------------------
     \<i,k,\bar k|U_4^\dag H_x U_4|i,\bar k,k\> 
  &= \<i,\bar k,k|U_4^\dag H_x U_4|i,\bar k,k\> = x_i/2
\end{align}
for all $i \in [N]$ and $k \in \{0,1\}$.  Combining these four contributions to obtain a Hamiltonian $- U_1^\dag H_x U_1 - U_2^\dag H_x U_2 + U_3^\dag H_x U_3 + U_4^\dag H_x U_4$ and observing that the self-loops cancel, these matrix elements can be summarized in terms of the gadget shown in \fig{paritygadget}.

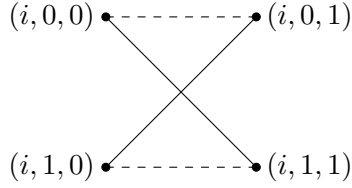
\begin{figure}[ht]
\begin{center}
\begin{tikzpicture}
\coordinate (tl) at (0,2);
\coordinate (tr) at (2,2);
\coordinate (bl) at (0,0);
\coordinate (br) at (2,0);

\filldraw (tl) circle (1.5pt) node [left] {$(i,0,0)$};
\filldraw (tr) circle (1.5pt) node [right] {$(i,0,1)$};
\filldraw (bl) circle (1.5pt) node [left] {$(i,1,0)$};
\filldraw (br) circle (1.5pt) node [right] {$(i,1,1)$};

\draw[dashed] (tl) -- (tr);
\draw[dashed] (bl) -- (br);
\draw (tl) -- (br);
\draw (bl) -- (tr);
\end{tikzpicture}
\end{center}
\vspace{-1em}
\caption{\label{fig:paritygadget}
The gadget for querying $x_i$.  If $x_i=0$, no edges are present.  If $x_i=1$, the solid edges have weight $1/2$ and the dashed edges have weight $-1/2$.}
\end{figure}

We add a driving Hamiltonian to connect these gadgets to form two paths encoding the parity similarly as in \thm{hamsimlower}, and we extend the paths infinitely in both directions.  Specifically, the driving Hamiltonian $H_D$ has nonzero matrix elements
\begin{align}
   \<i,j,k|H_D|i,j,\bar k\> &= 1/2
\end{align}
for all $i \in \Z$ and $j,k \in \{0,1\}$ (corresponding to the dashed edges in \fig{paritygadget}, but with positive weight), and
\begin{align}
  \<i+1,j,0|H_D|i,j,1\> &= \<i,j,1|H_D|i+1,j,0\> = 1/2
\end{align}
for all $i \in \Z$ and $j \in \{0,1\}$ (corresponding to edges that join sectors with adjacent values of $i$). Then the total Hamiltonian
\be
  H = - U_1^\dag H_x U_1 - U_2^\dag H_x U_2 + U_3^\dag H_x U_3 + U_4^\dag H_x U_4 + H_D
\label{eq:fraclbham}
\ee
is $1/2$ times the adjacency matrix of the disjoint union of two infinite paths, one with vertices
\be
  \begin{aligned}
  \ldots,&(0,0,0),(0,0,1),(1,0,0),(1,x_1,1),(2,x_1,0),(2,x_1 \oplus x_2,1),\ldots, \\
  &(N,x_1 \oplus \cdots \oplus x_N,1),(N+1,x_1 \oplus \cdots \oplus x_N,0),(N+1,x_1 \oplus \cdots \oplus x_N,1),\ldots
  \end{aligned}
\ee
and the other with vertices
\be
  \begin{aligned}
  \ldots,&(0,1,0),(0,1,1),(1,1,0),(1,1 \oplus x_1,1),(2,1 \oplus x_1,0),(2,1 \oplus x_1 \oplus x_2,1),\ldots, \\
  &(N,1 \oplus x_1 \oplus \cdots \oplus x_N,1),(N+1,1 \oplus x_1 \oplus \cdots \oplus x_N,0),(N+1,1 \oplus x_1 \oplus \cdots \oplus x_N,1),\ldots.
  \end{aligned}
\ee
Analogous to the Hamiltonian $H$ in the proof of \thm{hamsimlower}, $(0,0,1)$ is in the same component as $(N,b,1)$ if and only if $b=\parity(x)$.

To compute the probability of reaching $(n,\parity(x),1)$ starting from $(0,0,1)$ after evolving with the Hamiltonian \eq{fraclbham} for time $t$, we can use the expression for the propagator on an infinite path in terms of a Bessel function (see for example \cite{CCDFGS03}).  Specifically, we have
\be
  |\<N,\parity(x),1|e^{-iHt}|0,0,1\>|
  = |J_{2N}(t)|.
\ee
For large $N$ and for any fixed $t \ne 0$, we have $|J_N(t)| = e^{-\Theta(N\log N)}$ \cite[Section 8.1]{Wat22}.  Thus, as in the proof of \thm{hamsimlower}, even a simulation with error $\epsilon$ gives the result with nonzero probability provided $N = \Theta\big(\frac{\log(1/\epsilon)}{\log\log(1/\epsilon)}\big)$.

The preceding argument uses a Hamiltonian acting on an infinite-dimensional space.  However, we can truncate it to act on a finite space with essentially the same effect.  Specifically, we apply the Truncation Lemma of \cite{CGW13} with $\mathcal{K} = \spn\{|i,j,k\> \colon -N^3-N^2 \le i \le N^3+N^2,\, j,k \in \{0,1\}\}$ and $W=H$.  Let $P$ project onto $\mathcal{K}$ and let $P'$ project onto $\spn\{|i,j,k\> \colon -N^2 \le i \le N^2,\, j,k \in \{0,1\}\}$.  Finally, let $|\gamma(t)\> = P'e^{-iHt}|0,0,1\>$.  Then $\delta^2 \defeq \norm{e^{-iHt}|0,0,1\> - |\gamma(t)\>}^2 = |J_{2N^2+1}(t)|^2+2\sum_{j=2}^\infty |J_{2N^2+j}(t)|^2 \le e^{-\Omega(N^2 \log N)}$.  Furthermore, $(\id-P)H^r|\gamma(t)\>=0$ for all $r \in \{0,1,\ldots,N^3\}$.  Also observe that $\norm{H}=1$.  Thus the Truncation Lemma shows that
\be
  \norm{(e^{-iHt} - e^{-iPHPt}) |0,0,1\>}
  \le \left( \frac{4et}{N^3}+2 \right)\bigl(\delta+2^{-N^3}(1+\delta)\bigr)
  \le e^{-\Omega(N^2 \log N)},
\ee
so the error incurred by truncating $H$ to the Hamiltonian $PHP$ acting on the finite-dimensional space $\mathcal{K}$ is asymptotically negligible compared to $\epsilon$.
\end{proof}

%%%%%%%%%%%%%%%%%%%%%%%%%%%%%%%%%%%%%%%%%%%%%%%%%%%%%%%%%%%%%%%%%%%%%%%%%%%%%%
\section{Open questions}
\label{sec:conclusion}

While our algorithm for continuous-query simulation is optimal as a function of $\epsilon$ alone, it is suboptimal as a function of $T$, and it is unclear what tradeoffs might exist between these two parameters.  The best known lower bound as a function of both $\epsilon$ and $T$ is $\Omega\big(T+\frac{\log(1/\epsilon)}{\log\log(1/\epsilon)}\big)$.  It would be surprising if this bound were achievable, but it remains open to find such an algorithm or to prove a better lower bound.  In general, any improvement to the tradeoff between $\epsilon$ and $T$ could be of interest.

In the context of time-independent sparse Hamiltonian simulation, the quantum walk-based simulation of \cite{Chi10,BC12} achieves linear dependence on $t$, whereas our upper bound is superlinear in $t$.  However, the dependence on $\epsilon$ is significantly worse in the walk-based approach.  It would be desirable to combine the benefits of these two approaches into a single algorithm.

Another open question is to better understand the dependence of our sparse Hamiltonian simulation method on the sparsity $d$.  While we use $d^{2+o(1)}$ queries, the method of \cite{BC12} uses only $O(d)$ queries.  Could the performance of the simulation based on fractional queries be improved by a different decomposition of the Hamiltonian?

%%%%%%%%%%%%%%%%%%%%%%%%%%%%%%%%%%%%%%%%%%%%%%%%%%%%%%%%%%%%%%%%%%%%%%%%%%%%%%
\section*{Acknowledgements}

We thank Sevag Gharibian and Nathan Wiebe for valuable discussions.
This work was supported in part by ARC grant FT100100761, Canada's NSERC, CIFAR, the Ontario Ministry of Research and Innovation, and the US ARO.  RDS acknowledges support from the Laboratory Directed Research and Development Program at Los Alamos National Laboratory.

%%%%%%%%%%%%%%%%%%%%%%%%%%%%%%%%%%%%%%%%%%%%%%%%%%%%%%%%%%%%%%%%%%%%%%%%%%%%%%

\providecommand{\bysame}{\leavevmode\hbox to3em{\hrulefill}\thinspace}

%%%%%%%%%%%%%%%%%%%%%%%%%%%%%%%%%%%%%%%%%%%%%%%%%%%%%%%%%%%%%%%%%%%%%%%%%%%%%%

\appendix

%%%%%%%%%%%%%%%%%%%%%%%%%%%%%%%%%%%%%%%%%%%%%%%%%%%%%%%%%%%%%%%%%%%%%%%%%%%%%%

\section{Proofs of known results}
\label{app:proofs}

In this appendix, for the sake of completeness we provide proofs of claims that are known or essentially follow from known results.

%%%%%%%%%%%%%%%%%%%%%%%%%%%%%%%%%%%%%%%%%%%%%%%%%%%%%%%%%%%%%%%%%%%%%%%%%%%%%%

\subsection{Equivalence of continuous- and fractional-query models}
\label{app:equiv}

\EQUIV* 

\begin{proof}
A simulation of the continuous-query model by the fractional-query model with the stated properties appears in Section II.A of \cite{CGM+09}.  We present their proof for completeness.

We wish to implement the unitary $U(T)$ satisfying the Schr\"{o}dinger equation \eq{schrodinger} with $U(0)=\id$. To refer to the solutions of this equation for arbitrary Hamiltonians and time intervals, we define $U_H(t_2,t_1)$ to be the solution of the Schr\"{o}dinger equation with Hamiltonian $H$ from time $t_1$ to time $t_2$ where $U(t_1) = \id$.  In this notation, $U(T) = U_{H_x + H_D}(T,0)$.

Let $m$ be an integer and $\theta = T/m$. We have
\be
  U_{H_x + H_D}(T,0)
  = U_{H_x + H_D}(m\theta,(m-1)\theta) \cdots 
    U_{H_x + H_D}(2\theta,\theta) U_{H_x + H_D}(\theta,0).
\ee
If we can approximate each of these $m$ terms, we can use the subadditivity of error in implementing unitaries (i.e., $\norm{UV - \tilde{U}\tilde{V}} \leq \norm{U - \tilde{U}} + \norm{V - \tilde{V}}$ for unitaries $U, \tilde{U}, V, \tilde{V}$) to obtain an approximation of $U(T)$.

Reference \cite{HR90} shows that for small $\theta$, the evolution according to Hamiltonians $A$ and $B$ over an interval of length $\theta$ approximates the evolution according to $A+B$ over the same interval.  Specifically, from \cite[eq.~A8b]{HR90} we have
\be 
\norm{U_{A+B}((j+1)\theta,j\theta) - U_{A}((j+1)\theta,j\theta) U_{B}((j+1)\theta,j\theta)} \leq \int_{j\theta}^{(j+1)\theta} \mathrm{d}v \int_{j\theta}^{v}\mathrm{d}u \, \norm{[A(u),B(v)]}.
\ee
In our application, $A(t) = H_D(t)$ and $B=H_x$. Since $\norm{H_x}=1$, the right-hand side is at most 
\be
2 \int_{j\theta}^{(j+1)\theta} \mathrm{d}v \int_{j\theta}^{v}  \mathrm{d}u \, \norm{H_D(u)} \leq 
2 \int_{j\theta}^{(j+1)\theta} \mathrm{d}v \int_{j\theta}^{(j+1)\theta}  \mathrm{d}u \, \norm{H_D(u)} =
2\theta \int_{j\theta}^{(j+1)\theta} \norm{H_D(u)} \, \mathrm{d}u.
\ee

By subadditivity, the error in implementing $U(T)$ is at most
\be
2\theta \sum_{j=0}^{m-1} \int_{j\theta}^{(j+1)\theta} \norm{H_D(u)} \, \mathrm{d}u 
= 2\theta  \int_{0}^{T} \norm{H_D(u)} \mathrm{d}u 
= 2\theta \h T 
= \frac{2\h T^2}{m}.
\ee
This error is smaller than $\epsilon$ when $m \geq 2\h T^2/\epsilon$, which proves this direction of the equivalence.

For the other direction, consider a fractional-query algorithm
\be
  U_{\mathrm{fq}} \defeq U_m Q^{\alpha_m} U_{m-1} \cdots Q^{\alpha_2} U_1 Q^{\alpha_1} U_0
\label{eq:fqcircuit}
\ee
(recall that $Q$ depends on $x$),
where $\alpha_i \in (0,1]$ for all $i \in [m]$, with complexity $T = \sum_{i=1}^m \alpha_i$.  Let $A_i \defeq \sum_{j=1}^i \alpha_j$ for all $i \in [m]$ and let $U_j \equalscolon e^{-i H_D^{(j)}}$ for all $j \in \{0,1,\ldots,m\}$.  Consider the piecewise constant Hamiltonian
\be
  H(t) = H_x + \frac{1}{\epsilon_1} \left(\delta_{t \in [0,\epsilon_1]} H_D^{(0)} + \sum_{i=1}^m \delta_{t \in [A_i-\epsilon_1,A_i]} H_D^{(i)} \right),
\ee
where $\delta_B$ is $0$ if $B$ is false and $1$ if $B$ is true.  Provided $\epsilon_1 \le \min\{\alpha_1/2,\alpha_2,\ldots,\alpha_m\}$, evolving with $H(t)$ from $t=0$ to $T$ implements a unitary close to our fractional-query algorithm. More precisely, it implements
\be
  \begin{aligned}
  U(T) &=
  e^{-i (H_D^{(m)} + \epsilon_1 H_x)} e^{-i(\alpha_m - \epsilon_1) H_x} 
  e^{-i (H_D^{(m-1)} + \epsilon_1 H_x)} \cdots \\
  &~\quad e^{-i(\alpha_2 - \epsilon_1) H_x} 
  e^{-i (H_D^{(1)} + \epsilon_1 H_x)} e^{-i(\alpha_1 - 2\epsilon_1) H_x} 
  e^{-i (H_D^0 + \epsilon_1 H_x)},
  \end{aligned}
\label{eq:piecewise}
\ee
which satisfies $\norm{U(T) - U_{\mathrm{fq}}} = O(m\epsilon_1)$. This follows from the fact that each exponential in \eq{piecewise} approximates the corresponding unitary of \eq{fqcircuit} within error $\epsilon_1$ (e.g., $\norm{e^{-i (H_D^{(m)} + \epsilon_1 H_x)} - U_m} = O(\epsilon_1)$ and $\norm{e^{-i(\alpha_m - \epsilon_1) H_x} - Q^{\alpha_m}} = O(\epsilon_1)$) and the subadditivity of error when implementing unitaries.  The fact that each exponential has error $O(\epsilon_1)$ follows from the inequality $\norm{e^{iA} - e^{iB}} \leq \norm{A-B}$. This can be proved by observing that $\norm{e^{iA} - e^{iB}} =  \norm{(e^{iA/n})^n - (e^{iB/n})^n} \leq n\norm{e^{iA/n} - e^{iB/n}} \leq \norm{A-B} + O(1/n)$, where the first inequality uses subadditivity of error and the second inequality follows by Taylor expansion. Since the statement is true for all $n$, the claim follows.

This simulation has continuous-query complexity $T$.  Its error can be made less than $\epsilon$ by choosing $\epsilon_1$ sufficiently small (in particular, it suffices to take some $\epsilon_1 = \Theta(\epsilon/m)$).
\end{proof}

%%%%%%%%%%%%%%%%%%%%%%%%%%%%%%%%%%%%%%%%%%%%%%%%%%%%%%%%%%%%%%%%%%%%%%%%%%%%%%
\subsection{The Approximate Segment Lemma}
\label{app:approxsegment}

In this section, we establish the Approximate Segment Lemma (\lem{approxsegment}). This lemma essentially follows from \cite{CGM+09} with minor modification. We start by proving the following Gadget Lemma, which follows from \cite[Section II.B]{CGM+09}.

\GADGET*

\begin{proof}
The input state evolves as follows:
\begin{align}
  |0\>|\psi\>
  &\mapsto \frac{\sqrt{c} |0\> + \sqrt{s} |1\>}{\sqrt{c+s}} |\psi\> \nonumber \\
  &\mapsto \frac{1}{\sqrt{c+s}}(\sqrt{c}|0\>|\psi\> + \sqrt{s}|1\>Q|\psi\>) \nonumber \\
  &\mapsto \frac{1}{c+s}[|0\>(c |\psi\> + i s Q|\psi\>) + \sqrt{cs} |1\> (|\psi\> - i Q|\psi\>)] \nonumber \\
  &= \sqrt{q_\alpha} (|0\>e^{i \pi \alpha/2}Q^\alpha|\psi\> + \sqrt{\sin(\pi\alpha)}|1\> e^{-i \pi/4} Q^{-1/2}|\psi\>).
\end{align}
Thus the output has the stated form.
\end{proof}

We can now collect these gadgets into a segment, which implements a fractional-query algorithm with constant query complexity with amplitude 1/2.

\SEGMENT*

\begin{proof}
We first analyze the subcircuit in the dashed box in \fig{segment}, which is the entire circuit without the first qubit. The first qubit does not interact with the rest of the qubits and is only used at the end of the proof.

This subcircuit is built by composing several fractional-query gadgets (as in \fig{gadget}) with a new control qubit for each gadget but with a common target.  The $m$ gadgets correspond to making the fractional queries $Q^{\alpha_i}$. The first register of a gadget indicates whether it has applied the fractional query successfully, in which case the register is $\ket{0}$, or not, in which case it is $\ket{1}$.  For the $i$th gadget, the output state has amplitude $q_{\alpha_i}$ on the state $\ket{0}$ corresponding to the successful outcome, as shown in \lem{gadget}.

The state of the control qubits on the output is $|0^{m}\>$ only when all the gadgets have successfully applied the fractional query. In this case, the target has been successfully  transformed to $V|\psi\>$.  Thus the dashed subcircuit in \fig{segment} performs the map
\be
|0^{m}\>|\psi\> \mapsto \sqrt{p}|0^{m}\>e^{i\vartheta}V|\psi\> + \sqrt{1-p}|\Phi^\perp\>
\ee
for some $|\Phi^\perp\>$ satisfying $(|0^m\>\<0^m| \otimes \id)|\Phi^\perp\>=0$, where $p = \prod_{i=1}^{m} {q_{\alpha_i}}$ and $\vartheta = -\sum_{i=1}^{m} \pi\alpha_i/2 \bmod 2\pi$.

This is similar to the desired statement, except that we want the amplitude in front of $|0^{m}\>$ to be $1/2$ instead of $\sqrt{p}$. We show that $p>1/4$ and then use the first qubit to decrease its value to exactly $1/4$.

Since $\sum_{i=1}^{m} \alpha_i \leq 1/5$ by assumption, we can lower bound the value of $p$ as follows. Since $\alpha_i \geq 0$ for all $i$, using the inequalities $\sin x \leq x$ (for $x \geq 0$) and $1/(1+x) \geq 1-x$ (for $x \ge -1$) gives
\be
  p = \prod_{i=1}^{m} {q_{\alpha_i}} 
    = \prod_{i=1}^{m} \frac{1}{1+\sin(\pi\alpha_i)} 
    \geq \prod_{i=1}^{m} \frac{1}{1+\pi\alpha_i} 
    \geq \prod_{i=1}^{m} (1-\pi\alpha_i) 
    \geq 1 - \pi \sum_{i=1}^{m} \alpha_i 
    \geq 1 - \frac{\pi}{5} 
    > \frac{1}{4},
\ee
where the third inequality uses the fact that for $x_i \in [0,1]$, $\prod_i (1-x_i) \geq 1 - \sum_i x_i$.

Thus we have $\sqrt{p} > 1/2$. Now let $\fudge$ be any unitary that maps $\ket{0}$ to $\frac{1}{2\sqrt{p}}\ket{0} + (1-\frac{1}{4p})^{1/2}\ket{1}$.  Since $\sqrt{p} > 1/2$, we have $\frac{1}{2\sqrt{p}} < 1$, so a unitary $\fudge$ exists. Then for the full circuit in \fig{segment}, the amplitude corresponding to the state $\ket{0^m}$ is $\sqrt{p} \cdot \frac{1}{2\sqrt{p}} = 1/2$.
\end{proof}

Finally, we show that the map in the previous lemma can be performed to error $\epsilon$ using only  $O\big(\frac{\log(1/\epsilon)}{\log\log(1/\epsilon)}\big)$ queries.

\APPROXSEGMENT*

\begin{proof}
From \lem{segment} we know that the circuit in \fig{segment} performs the claimed map with no error. However, the circuit makes $m$ discrete queries, which can be arbitrarily large. We wish to construct a circuit with error at most $\epsilon$ that makes only $O\big(\frac{\log(1/\epsilon)}{\log\log(1/\epsilon)}\big)$ queries, independent of $m$.

We first analyze the subcircuit in the dotted box in \fig{segment}. The output of this subcircuit is $\ket{\zeta} = \bigotimes_{i=1}^{m} R_{\alpha_i} |0\> = \bigotimes_{i=1}^{m} \frac{1}{\sqrt{c_i+s_i}}(\sqrt{c_i}|0\> + \sqrt{s_i}|1\>)$, where $c_i \defeq \cos(\pi \alpha_i/2)$ and $s_i \defeq \sin(\pi \alpha_i/2)$. We also define $q_i \defeq q_{\alpha_i} =  1/(c_i+s_i)^2 = 1/(1+\sin(\pi\alpha_i))$.  We can write $\ket{\zeta} = \sum_{x \in \{0,1\}^{m}} w_x|x\>$ with $\sum_x |w_x|^2 = 1$.

Now consider the subnormalized state $\ket{\zeta_k} \defeq \sum_{|x| \leq k} w_x|x\>$, where $|x|$ denotes the Hamming weight of $x$ and $k \leq m$ is a positive integer. In the circuit, we approximate the state $|\zeta\>$ with some $|\zeta_k\>$.  Clearly $|\zeta_m\> = |\zeta\>$, and the approximation becomes worse as $k$ decreases.  To achieve a $1-\epsilon^2/2$ approximation, we claim it suffices to take $k = \Omega\big(\frac{\log(1/\epsilon)}{\log\log(1/\epsilon)}\big)$.  Since $1 - \<\zeta|\zeta_k\> = \sum_{|x|>k} |w_x|^2$, we must upper bound $\sum_{|x|>k} |w_x|^2$ in terms of $k$.

Consider $m$ independent random variables $X_i$ with $\Pr(X_i = 0) = \frac{c_i}{c_i+s_i}$ and $\Pr(X_i = 1) = \frac{s_i}{c_i+s_i}$.  The probability that $\sum_i X_i > k$ is $\sum_{|x|>k} |w_x|^2$, since $|w_x|^2$ is the probability of the event $X_i = x_i$ for all $i$.  For such events, the Chernoff bound (see for example \cite[Theorem 4.1]{MR95}) says that for any $\delta>0$,
\be 
\Pr \(\sum_i X_i > (1+\delta)\mu\) < \frac{e^{\delta\mu}}{(1+\delta)^{(1+\delta)\mu}},
\ee 
where $\mu \defeq \sum_i \Pr(X_i = 1) = \sum_i \frac{s_i}{c_i+s_i}$. Since $\alpha_i \geq 0$ and $\sum_i \alpha_i \leq 1/5$, we have $\mu \geq  0$ and $\mu = \sum_i  \frac{s_i}{c_i+s_i} \leq \sum_i s_i = \sum_i \sin(\pi\alpha_i/2) \leq \sum_i \pi\alpha_i/2 \leq \pi/10 \leq 1$, where we used the facts that $\sin x \leq x$ for all $x>0$ and $\sin \theta + \cos \theta \geq 1$ for all $\theta \in [0,\pi/2]$.

Setting $k = (1+\delta)\mu$, we get $\sum_{|x|>k} |w_x|^2 = \Pr (\sum_i X_i > k) < {e^{k-\mu}}/{(1+\delta)^{k}} = {e^{k-\mu}\mu^k}/{k^k} < e^k/k^k$. This is less than $\epsilon^2/2$ when $k = \Omega\big(\frac{\log(1/\epsilon)}{\log\log(1/\epsilon)}\big)$.
For such a value of $k$, the state $\ket{\zeta_k}$ has inner product at least $1-\epsilon^2$/2 with $\ket{\zeta}$. Let $|\tilde{\zeta}\>$ denote the normalized $\ket{\zeta_k}$ for some choice of $k = \Omega\big(\frac{\log(1/\epsilon)}{\log\log(1/\epsilon)}\big)$. The state $|\tilde{\zeta}\>$ also has inner product at least $1-\epsilon^2/2$ with $\ket{\zeta}$. We replace the dotted box in \fig{segment} with $|\tilde{\zeta}\>$, a fixed state that requires no queries to create.

With this modification, the control qubits are in a superposition over states with Hamming weight at most $k$, suggesting that this circuit can be performed with at most $k$ queries. We now show that this is possible.

The control qubits are in a superposition over states $\ket{\x}$ where $\x \in \{0,1\}^{m}$.  The value of $\x_i$ decides whether the $i$th query occurs or not.  The string $\x$ therefore completely determines the product of unitary matrices that is applied to $|\psi\>$ when the control qubits are in the state $|\x\>$.  This product contains at most $k$ query gates, and thus may be written as
\be
  W_{|\x|}(\x) \, Q \, W_{{|\x|}-1}(\x) \cdots Q \, W_1(\x) \, Q \, W_0(\x).
\ee
Note that the $W_i$ operators are functions of $\x$.  We may also write this unitary as
\be
  W_k(\x) \, Q_k(\x) \, W_{k-1}(\x) \cdots Q_2(\x) \, W_1(\x) \, Q_1(\x) \, W_0(\x),
\label{eq:rearranged}
\ee
where for $i \le |\x|$ the $W_i$ operators are as before and for $i > |\x|$, we have $W_i = \id$. Here $Q_i(\x)$ is defined to be $Q$ when $i \leq |\x|$ and $\id$ when $i>|\x|$. We can now construct a circuit that performs the unitary in \eq{rearranged} controlled on the value of $\x$.  This circuit has at most $k$ query gates and performs the same unitary as the circuit in \fig{segment} with $\ket{\zeta}$ replaced with $|\tilde{\zeta}\>$.

Finally, we show that the actual operation performed, denoted $\approxU$, is within error $\epsilon$ of the ideal unitary $U$. The only difference between these operations is that $\approxU$ prepares $|\tilde\zeta\rangle$ rather than $\ket{\zeta}$ in the initial step. Therefore the error between $\approxU$ and $U$ is at most the error between an operation that prepares $|\tilde\zeta\rangle$ and an operation that prepares $\ket{\zeta}$. If we required $U$ to prepare $\ket{\zeta}$ using $\bigotimes_{i=1}^m R_{\alpha_i}$, it would be difficult to design a nearby unitary that prepares $|\tilde\zeta\rangle$. However, the lemma does not specify the action of $U$ on states not of the form $|0^{m+1}\rangle\ket{\psi}$, so we can make any convenient choice of the operation preparing $\ket{\zeta}$ that is close to the operation preparing $|\tilde\zeta\rangle$.

Let $R:=\bigotimes_{i=1}^m R_{\alpha_i}$ and denote the unitary that prepares $|\tilde\zeta\rangle$ by $\tilde R$. In the computational basis, $R$ has first column $\zeta$ and $\tilde R$ has first column $\tilde\zeta$. We claim there is a unitary $R'$ that is within $\epsilon$ of $\tilde R$ but that has the same first column as $R$.

To see this, let $\theta$ satisfy $\<\tilde\zeta|\zeta\>=\cos\theta$. Consider the 2-dimensional subspace spanned by $|\zeta\>$ and $|\tilde\zeta\>$, and let $E$ be the unitary that rotates by angle $\theta$ in this subspace, but acts as the identity outside the subspace. In particular, $E|\tilde\zeta\>=|\zeta\>$. Taking $R' := E\tilde R$, we see that $R'$ has the first column $\zeta$ as required. The error is $\|R'-\tilde R\|=\|E\tilde R-\tilde R\|=\|E-\id\|=\sqrt{2-2\cos\theta}$.

Since $\<\tilde\zeta|\zeta\> \ge 1-\epsilon^2/2$, we find $\|R'-\tilde R\|\le\epsilon$. Because the remainder of the circuit is identical, the overall error between $\approxU$ and $U$ is at most $\epsilon$ as claimed.
\end{proof}

\end{document}